\newtheorem{theorem}{Theorem}[section]
\newtheorem{proposition}[theorem]{Proposition}
\newtheorem{lemma}[theorem]{Lemma}
\newtheorem{corollary}[theorem]{Corollary}
\theoremstyle{definition}
\newtheorem{definition}[theorem]{Definition}
\begin{document}
\date{2022-10-21}
\title[Quantum Wasserstein distance of order 1]{Quantum Wasserstein distance of order 1 between channels}
\author{Rocco Duvenhage and Mathumo Mapaya}
\address{Department of Physics\\
University of Pretoria\\
Pretoria 0002\\
South Africa}
\email{rocco.duvenhage@up.ac.za}

\begin{abstract}
We set up a general theory for a quantum Wasserstein distance of order 1 in an
operator algebraic framework, extending recent work in finite dimensions. In
addition, this theory applies not only to states, but also to channels, giving
a metric on the set of channels from one composite system to another. The
additivity and stability properties of this metric are studied.

\emph{Keywords}: quantum optimal transport; quantum Wasserstein distance of
order 1; quantum channels

\end{abstract}
\maketitle
\tableofcontents

\section{Introduction}

This paper is devoted to devising a distance between quantum channels which is
natural in the context of composite quantum systems. Our approach is motivated
by a recently introduced quantum Wasserstein distance of order 1 between
states of a composite system \cite{DMTL}, as a natural and very effective
distinguishability measure with a number of desirable properties.

The method that \cite{DMTL} uses to set up their quantum Wasserstein distance
of order 1, will be referred to as the DMTL approach, and the resulting metric
can be viewed as a quantum version of Ornstein's $\bar{d}$ distance
\cite{Orn}. It is starts with the concept of neighbouring states, a form of
which was introduced earlier in \cite{AR}. We use some of the core ideas of
the DMTL approach, but adapt them to channels. Although the term Wasserstein
distance is typically used for states (or probability distributions in the
classical case), we nevertheless continue to use the term in the case of
channels as well.

We work in a more general infinite dimensional setting, expressed in terms of
C*-algebras as a generalization of the matrix algebras $M_{d}$. Mathematically
speaking, we obtain a metric on the set of all channels from one composite
system to another, where both consist of $n$ systems. The case of states is
obtained when the channels are taken to map to the complex numbers. In
particular, the DMTL approach is recovered when restricting to finite dimensions.

Broadly speaking, our approach is to extend the idea of neighbouring states to
neighbouring channels. This is done through the reduction of channels. Given a
channel from one composite system to another, choose the $j$'th system in
each, and reduce the channel to the remaining systems. If two channels give
the same channel after this reduction for some $j$ is applied to both, we
consider them to be neighbouring channels. I.e., two channels are called
neighbouring if they coincide after the removal of two corresponding systems
from the two composite systems respectively. The neighbouring channels are
then used to build a unit ball leading to a norm on a certain space of linear
maps and ultimately to a distance between any two channels. More precisely, we
obtain a metric on the set of channels from the one composite system to the
other. All this is illustrated concretely in finite dimensions in the next section.

One of our intentions is to highlight the general mathematical structure
behind the quantum Wasserstein distance of order 1, through an abstract
approach. This involves first setting up the theory in general vector spaces,
and subsequently algebras, without reference to any positivity conditions on
the maps between these spaces. The C*-algebraic framework, with the maps taken
as channels, is obtained as a special case of the abstract setup. This
clarifies the overall structure of our approach, and potentially allows for
cases other than C*-algebras and channels between them.

We nevertheless also emphasize the concrete case in finite dimensions,
particularly in the next section, to clarify the basic approach and ideas. The
C*-algebraic framework is a generalization of the finite dimensional case, and
very natural and relevant from the view of quantum physics. The paper is
written in such a way that the main thread and results in the C*-algebraic
case can be followed without going through the more abstract approach
mentioned above. The finite dimensional situation is in turn obtained as a
simple special case of the C*-algebraic framework. The proofs of our results
do depend on the abstract development, though.

The next section gives an outline of our approach in finite dimensions,
expressed in a form very close to that of the DMTL approach. It can be viewed
as an extension of this introduction, explaining some of the goals and
motivation for this paper, but it also serves as an overview for readers whose
main interest is the finite dimensional case. Section \ref{AfdAgtergrond}
reviews basic background related to norms and unit balls, and takes an initial
step in developing the theory behind the Wasserstein distance of order 1
between channels. Sections \ref{AfdGepunt} and \ref{AfdAlg} develop the
abstract theory to obtain our Wasserstein distance of order 1. The reader can
in fact skip these two sections upon initial reading, and go directly to
Section \ref{AfdC*}. There the definition and main results leading to the
Wasserstein distance of order 1 between channels in the C*-algebraic
framework, are presented with no reference to Sections \ref{AfdGepunt} and
\ref{AfdAlg}, although the proofs depend on those two sections. Note that
we'll usually simply say ``Wasserstein distance'' instead of ``quantum
Wasserstein distance'', since our theory contains the special case of abelian C*-algebras.

The paper then proceeds to the behaviour of the Wasserstein distance of order
1 with respect to subsystems of the composite systems, i.e., when we consider
smaller composite systems consisting of a subset of the original systems. The
main result here is that this Wasserstein distance is additive over tensor
products of channels between such subsystems, with stability as a special
case. Again this result is first approached abstractly in general vector
spaces, before the C*-algebraic case is presented. As before, the reader can
page directly to Subsection \ref{OndAfdC*ad} after Section \ref{AfdC*}, to see
the C*-algebraic results with a minimum of direct reference to the abstract
theory, although the proofs again rely on the latter.

To conclude this introduction, we briefly discuss previous work.

The study of quantum channels, including mathematical techniques to analyze
them theoretically, remains undeniably important. Refer for example to
\cite{CGLM, HG, KCSC}\ for an overview of a variety of aspects of quantum
channels and their significance. In particular, there have been other
approaches to distances between channels, in relation to channel
discrimination. For example, the diamond norm, also called the completely
bounded trace norm; see \cite{Kit, AKitN} for early work in connection to
quantum computation, and \cite[Section 3.3]{Wat} for a more general finite
dimensional overview in the context of quantum information. Refer to
\cite{GLN} for a broader perspective on distances between channels.

On the other hand, there has also been much effort to obtain quantum (or
noncommutative) Wasserstein distances between states. Papers obtaining
Wasserstein distances (with a focus on order 2) which are actual metrics on
sets of quantum states, include \cite{BV, CM1, CM2, CM3, CGGT, CGT, CEFZ, D1,
H1, W}. These papers follow different approaches from this paper and
\cite{DMTL}, and to a large extent from one another. Broadly speaking there
have been two main approaches, in analogy to the classical case: a coupling
(or transport plan) approach \cite{BV, CEFZ, D1} and a dynamical approach
\cite{CM1, CM2, CM3, CGGT, CGT, H1, W}. The book \cite{V1} includes a nice
introduction to the former, while \cite{BB} is the origin of the latter. The
DMTL approach and this paper, can roughly be classified as part of the
coupling approach. Other papers using the coupling approach, though not
obtaining all the usual metric properties, include \cite{dePT, GMP, GP1, GP2}.
Papers treating various other approaches than \cite{DMTL} to noncommutative or
quantum Wasserstein distances of order 1, include \cite{Ag, CM3, CGNT, GJL,
NG, RD, RCLO}. Some of these use a dual formulation of the coupling approach,
which is closely related to Connes' spectral distance in noncommutative
geometry, first introduced in \cite{Con89}, and studied further in \cite{Con,
Con96, D'AM, Rief99}, among others.

It seems that a theory of Wasserstein distances, by any approach, has not
before been extended from states to channels in the literature. This is of
course aside from attempting to apply the former directly to the latter via
the Choi-Jamio{\l}kowski duality between states and channels. This duality
holds under quite general conditions and could be applied beyond just the
finite dimensional case; see in particular \cite[Section 3]{DS2}. However, the
intention of this paper is rather to systematically rebuild such a theory for
the channels, which is not equivalent to merely translating via the
Choi-Jamio{\l}kowski duality.

\section{Outline in finite dimensions\label{AfdBuitL}}

In this section we briefly outline the paper's approach and core results in
finite dimensions, to make the basic ideas as accessible as possible. To
clarify how the DMTL approach is being extended, we use a formulation
analogous to theirs.

To do this, we in particular make use of the Choi-Jamio{\l}kowski duality
between channels and states. To avoid confusion, however, it is important to
note from the outset that the latter states will not be employed as the states
in the DMTL approach. That is to say, we do not simply apply their results to
states representing channels via the Choi-Jamio{\l}kowski duality. Rather, the
DMTL approach is systematically adapted directly to channels, while the
representation of channels via Choi-Jamio{\l}kowski duality simply emphasizes
the connection to the DMTL approach, as will be seen below.

In the general theory presented in later sections, on the other hand, it will
be more convenient to avoid the Choi-Jamio{\l}kowski duality, instead working
with the channels directly. The finite dimensional setup discussed in this
section emerges as a special case of that general theory.

\subsection{A representation of channels}

Consider a channel $\mathcal{E}$ from the density matrices of a system
$\mathcal{B}$ to that of a system $\mathcal{A}$. Fix an orthonormal basis
$\ket{1^\mathcal{B}}  ,...,\ket{r^\mathcal{B}}  $ for the Hilbert space
$H_{\mathcal{B}}$ of the system $\mathcal{B}$, and define a density matrix
$\kappa_{\mathcal{E}}$ of the composite system $\mathcal{BA}$ by
\begin{equation}
\kappa_{\mathcal{E}}=\frac{1}{r}\sum_{i=1}^{r}\sum_{j=1}^{r}\ket{i^\mathcal{B}%
}  \bra{j^\mathcal{B}}  \otimes\mathcal{E}(\ket{i^\mathcal{B}}
\bra{j^\mathcal{B}}  ). \label{CJ}%
\end{equation}
This density matrix reduces to the maximally mixed state of $\mathcal{B}$. The
Choi-Jamio{\l}kowski duality \cite{Choi} tells us that there is a one-to-one
correspondence between the set of such channels $\mathcal{E}$, and the set of
density matrices of $\mathcal{BA}$ reducing to the maximally mixed state of
$\mathcal{B}$. (Also see \cite{deP, J}, but Choi's approach \cite{Choi} forms
the basis for our use of the Choi-Jamio{\l}kowski duality in this section.)

Represent the observable algebra of $\mathcal{B}$ in terms of the given basis
as the matrix algebra $B=M_{r}$, and similarly use $A=M_{q}$ for $\mathcal{A}%
$. To keep the distinction between states and observables conceptually and
notationally clear, we also write $S_{\mathcal{A}}=M_{q}$ for the matrix
algebra containing the density matrices of $\mathcal{A}$, and similarly
$S_{\mathcal{B}}=M_{r}$ for $\mathcal{B}$. In particular, the channel
$\mathcal{E}$ is then a completely positive trace preserving linear map%
\[
\mathcal{E}:S_{\mathcal{B}}\rightarrow S_{\mathcal{A}}%
\]
from $S_{\mathcal{B}}$ to $S_{\mathcal{A}}$. This channel's dual
representation in terms of the observable algebras is the unital completely
positive linear map%
\[
E:A\rightarrow B
\]
defined via%
\[
\operatorname*{Tr}[E(a)b]=\operatorname*{Tr}[a\mathcal{E}(b)]
\]
for all $a\in A$ and $b\in S_{\mathcal{B}}$.

Consider the set $K(A,B)$ of channels $E:A\rightarrow B$, and the set
\[
C(\mathcal{BA})=\{\delta\in B\otimes S_{\mathcal{A}}:\delta\geq0\text{ \ and
\ }\operatorname*{Tr}\nolimits_{\mathcal{A}}\delta=1_{B}\},
\]
(not $\delta\in S_{\mathcal{B}}\otimes S_{\mathcal{A}}$; see below) where
$\operatorname*{Tr}\nolimits_{\mathcal{A}}$ denotes the partial trace over the
system $\mathcal{A}$, and $1_{B}$ is the identity matrix in $B=M_{r}$.
Implicit to the Choi-Jamio{\l}kowski duality, and fairly straightforward to
extract from it, we have a one-to-one correspondence between $K(A,B)$ and
$C(\mathcal{BA})$ given by
\begin{equation}
E(a)=\operatorname*{Tr}\nolimits_{\mathcal{A}}[\delta(1_{B}\otimes a)].
\label{E&delta}%
\end{equation}
Denoting the dual of $E$ by $\mathcal{E}:S_{\mathcal{B}}\rightarrow
S_{\mathcal{A}}$, we note that in terms of the Choi-Jamio{\l}kowski duality as
stated above, one has
\[
\delta=r\kappa_{\mathcal{E}}^{\text{T}_{\mathcal{B}}}%
\]
in this one-to-one correspondence, where T$_{\mathcal{A}}$ is the partial
transposition over $\mathcal{B}$ with respect to the chosen basis for
$H_{\mathcal{B}}$. Because of the usual interpretation of the Choi-Jamio{\l
}kowski duality, one might be tempted to rather view the elements of
$C(\mathcal{BA})$ as $\delta\in S_{\mathcal{B}}\otimes S_{\mathcal{A}}$ such
that $\delta/r$ is a density matrix of the composite system $\mathcal{BA}$
reducing to the maximally mixed state of $\mathcal{B}$, but strictly speaking
$E(a)$ as given by (\ref{E&delta}) would then be in $S_{\mathcal{B}}$ instead
of $B$. When $r>1$, it is therefore in fact conceptually better if we do not
view $\delta/r$ as a density matrix representing some state. On the other
hand, in the special case where $\mathcal{B}$ is a trivial system, i.e., $r=1$
and $B=\mathbb{C}$, the set $C(\mathcal{BA})$ is precisely all the density
matrices of $\mathcal{A}$, and we simply recover the usual representation of
expectation values of a state, $E(a)=\operatorname*{Tr}(\delta a)$, in terms
of the density matrix $\delta$.

\subsection{The $W_{1}$ norm}

Using this representation of channels as elements of $C(\mathcal{BA})$, we can
formulate an extension of the Wasserstein distance of order 1 between states
in the DMTL approach, to channels. This is a distance between channels acting
from one composite system, $\mathcal{A}$, to another, $\mathcal{B}$. Here we
assume that
\[
A=A_{1}\otimes...\otimes A_{n}\text{ \ and \ }B=B_{1}\otimes...\otimes B_{n}%
\]
with $A_{j}=M_{q_{j}}$ and $B_{j}=M_{r_{j}}$. The latter are simply the
observable algebras of systems $\mathcal{A}_{j}$ and $\mathcal{B}_{j}$
respectively. One then defines the real vector space%
\begin{equation}
\mathcal{O}=\{X\in\operatorname*{span}\nolimits_{\mathbb{R}}C(\mathcal{BA}%
):\operatorname*{Tr}\nolimits_{\mathcal{A}}X=0\}, \label{spesO}%
\end{equation}
where $\operatorname*{span}\nolimits_{\mathbb{R}}$ refers to finite linear
combinations with real coefficients. The goal is to define a certain norm
$\left\|  \cdot\right\|  _{W_{1}}$ on $\mathcal{O}$, which when applied to
differences $\delta-\varepsilon$ of elements of $C(\mathcal{BA})$, will in
turn define a metric on $C(\mathcal{BA})$. This metric will be the Wasserstein
distance of order 1 on $C(\mathcal{BA})$, or equivalently on $K(A,B)$ via the
one-to-one correspondence above, extending the construction in the DMTL
approach to channels.

The definition of the norm $\left\|  \cdot\right\|  _{W_{1}}$ entails
extending the idea of neighbouring states in the DMTL approach to channels. We
write%
\[
A_{\widehat{j}}=A_{1}\otimes...\widehat{A}_{j}...\otimes A_{n}\text{ \ and
\ }B_{\widehat{j}}=B_{1}\otimes...\widehat{B}_{j}...\otimes B_{n},
\]
i.e., these are $A$ and $B$ with $A_{j}$ and $B_{j}$ respectively left out of
the tensor products. Similarly we set%
\[
A_{\leq j}=A_{1}\otimes...\otimes A_{j}\text{ \ and \ }A_{\geq j}=A_{j}%
\otimes...\otimes A_{n},
\]
and likewise for $B_{\leq j}$ and $B_{\geq j}$. In terms of the following
notation (which is chosen to fit in with that of later sections),%
\[
\nu_{j}=\frac{1}{r_{j}}\operatorname*{Tr}%
\]
where this $\operatorname*{Tr}$ is the usual trace on $B_{j}=M_{r_{j}}$, we
can then reduce a channel $E:A\rightarrow B$ to the channel%
\[
E_{\widehat{j}}=(\operatorname*{id}\nolimits_{B_{\leq j-1}}\otimes\nu
_{j}\otimes\operatorname*{id}\nolimits_{B_{\geq j+1}})\circ E|_{A_{\widehat
{j}}}%
\]
from $A_{\widehat{j}}$ to $B_{\widehat{j}}$, where $\operatorname*{id}$
denotes the identity map on the indicated algebra, and with $E|_{A_{\widehat
{j}}}$ defined via
\[
E|_{A_{\widehat{j}}}(a_{1}\otimes...\widehat{a}_{j}...\otimes a_{n}%
)=E(a_{1}\otimes...\otimes a_{j-1}\otimes1_{A_{j}}\otimes a_{j+1}%
\otimes...\otimes a_{n}).
\]
I.e., we restrict $E$ to $A_{\widehat{j}}$, and evaluate the ``partial
expectation'' of the result over $B_{j}$. Note that via the one-to-one
correspondence given by (\ref{E&delta}), this reduction of $E$ is equivalent
to the reduction
\[
\delta_{\widehat{j}}=(\operatorname*{id}\nolimits_{B_{\leq j-1}}\otimes\nu
_{j}\otimes\operatorname*{id}\nolimits_{B_{\geq j+1}})\otimes
\operatorname*{Tr}\nolimits_{\mathcal{A}_{j}}%
\]
of the corresponding $\delta\in C(\mathcal{BA})$, where $\operatorname*{Tr}%
\nolimits_{\mathcal{A}_{j}}$ denotes the partial trace on $S_{\mathcal{A}%
}=M_{q_{1}}\otimes...\otimes M_{q_{n}}$ over $S_{\mathcal{A}_{j}}=M_{q_{j}}$.
In terms of this notation, we view $\delta,\varepsilon\in C(\mathcal{BA})$ as
representing neighbouring channels when $\delta_{\widehat{j}}=\varepsilon
_{\widehat{j}}$ for some $j$. We define
\begin{equation}
\mathcal{N}_{j}=\{\delta-\varepsilon:\delta,\varepsilon\in C(\mathcal{BA}%
)\text{ with }\delta_{\widehat{j}}=\varepsilon_{\widehat{j}}\} \label{spesN_j}%
\end{equation}
and
\[
\mathcal{N}=\bigcup_{j=1}^{n}\mathcal{N}_{j}.
\]
We then define $\left\|  \cdot\right\|  _{W_{1}}$ as the norm on $\mathcal{O}$
which has the convex hull
\[
\mathcal{C}=\operatorname*{conv}\mathcal{N}%
\]
of $\mathcal{N}$ as its unit ball. I.e.,%
\begin{equation}
\left\|  X\right\|  _{W_{1}}=\inf\{t\geq0:X\in t\mathcal{C}\}
\label{spesW-norm}%
\end{equation}
where $t\mathcal{C}=\{tX:X\in\mathcal{C}\}$ for any real number $t$. We write
this as an infimum, rather than a minimum, to conform to our more general
approach later on. The norm $\left\|  \cdot\right\|  _{W_{1}}$ will be
referred to as the $W_{1}$\emph{ norm}. To prove that this is a norm on
$\mathcal{O}$ of course requires some work, which will be done in a more
general context in the sequel.

\subsection{Wasserstein distance of order 1 between channels}

Given this norm, we can define a metric $W_{1}$ on $K(A,B)$ via%
\[
W_{1}(E_{\delta},E_{\varepsilon})=\left\|  \delta-\varepsilon\right\|
_{W_{1}}%
\]
with $E_{\delta}$ denoting the channel $E$ corresponding to $\delta\in
C(\mathcal{BA})$ in (\ref{E&delta}). This metric $W_{1}$ is the generalization
of the Wasserstein distance of order 1 between states to the case of channels.
We consequently refer to it as the \emph{Wasserstein distance of order 1} on
$K(A,B)$.

The rough intuition behind this metric follows from $\delta$ and $\varepsilon$
in the definition of $\mathcal{N}_{j}$ above being neighbouring channels. This
condition tells us that $\varepsilon$ and $\delta$ coincide when \ reduced to
$A_{\widehat{j}}\rightarrow B_{\widehat{j}}$ for some $j$, i.e., with one
system, $\mathcal{A}_{j}$ and $\mathcal{B}_{j}$ respectively, removed from
each of the composite systems $\mathcal{A}$ and $\mathcal{B}$. In this way the
``local differences'' between two channels $E_{\delta}$ and $E_{\varepsilon}$
are picked up by $W_{1}$, where ``local'' here is simply in relation to the
systems $\mathcal{A}_{1},...,\mathcal{A}_{n}$ and $\mathcal{B}_{1}%
,...,\mathcal{B}_{n}$ composing $\mathcal{A}$ and $\mathcal{B}$. A typical
case is the dynamics of an open composite system $\mathcal{A}$, where we take
$\mathcal{B}_{j}=\mathcal{A}_{j}$ for all $j$. We then expect $W_{1}$ to
naturally take into account the differences between two dynamical processes in
the individual systems $\mathcal{A}_{j}$.

A basic property of $W_{1}$ is additivity with respect to tensor products. If
we partition the set $[n]=\{1,...,n\}$ into $m$ non-empty and sequential
parts, i.e.,%
\begin{align*}
P(1)  &  =1,...,n_{1}\\
P(2)  &  =n_{1}+1,...,n_{2}\\
&  \vdots\\
P(m)  &  =n_{m-1}+1,...,n,
\end{align*}
then we can consider the subsystems of $\mathcal{A}$ and $\mathcal{B}$ with
observable algebras
\[
A_{P(k)}=\bigotimes_{j\in P(k)}A_{j}\text{ \ and \ }B_{P(k)}=\bigotimes_{j\in
P(k)}B_{j}%
\]
respectively. For any channels $D_{k},E_{k}:$ $A_{P(k)}\rightarrow B_{P(k)}$
we then have%
\[
W_{1}(D_{1}\otimes...\otimes D_{m},E_{1}\otimes...\otimes E_{m})=\sum
_{k=1}^{m}W_{1}(D_{k},E_{k}),
\]
with a resulting stability property when $D_{j}=E_{j}$ for some of the $j$'s.
This can be refined by dropping the assumption that the partition is
sequential, but the form above is for the moment notationally clearer. Keep in
mind that $W_{1}$ on $K(A_{P(k)},B_{P(k)})$ is of course defined by the same
procedure as for $K(A,B)$.

When $\mathcal{B}_{j}=\mathbb{C}$, we recover the case of states on
$\mathcal{A}$, and indeed our $W_{1}$ above then specializes to the
Wasserstein distance of order 1 in the DMTL approach, with $\delta
,\varepsilon\in C(\mathcal{BA})$ becoming density matrices $\rho$ and $\sigma$
of the composite system $\mathcal{A}$. They specifically considered the case
$q_{1}=...=q_{n}=d$.

In the subsequent general theory, the formulation will be in the Heisenberg
picture $E:A\rightarrow B$ from the outset. The Choi-Jamio{\l}kowski duality
will also be side-stepped, with the formulation expressed directly in terms of
the channels themselves. In connection to this, note that the basic condition
$\operatorname*{Tr}\nolimits_{\mathcal{A}}X=0$ in (\ref{spesO}) can
equivalently be expressed as $\lambda(1_{A})=0$, with $\lambda:A\rightarrow B$
defined by%
\[
\lambda(a)=\operatorname*{Tr}\nolimits_{\mathcal{A}}[X(1_{B}\otimes a)]
\]
in terms of the given $X$. The latter extends the formula in (\ref{E&delta}).
The condition $\lambda(1_{A})=0$ relates to the unitality of channels, namely
$E(1_{A})=1_{B}$, as will be seen in abstract form in Sections \ref{AfdGepunt}
and \ref{AfdAlg}.

The reader may now turn directly to Section \ref{AfdC*} and\textbf{
}\ref{OndAfdC*ad} to see the general C*-algebraic version of this section.
However, the proofs that the Wasserstein distance of order 1 treated there is
indeed a metric and satisfies additivity, rely on Sections \ref{AfdGepunt} and
\ref{AfdAlg} as well as the rest of \ref{AfdSubStelsels}.

\section{Elementary background\label{AfdAgtergrond}}

This section reviews and sets up some basic notions and elementary results to
be used later on. This is for easy reference and since references providing
this material in exactly the form we need appear to be scarce. Presentations
of this material, though in somewhat different forms than what we use, can be
found in \cite[Section I.1]{BD} and \cite[Chapters 4 and 5]{NB}. The results
in this section are stated for real vector spaces, since this is exactly what
we need later on, as can already be seen from (\ref{spesO}). Note that we do
not assume any of the vector spaces to be finite dimensional.

The basic result is how a norm arises from a given set which is to serve as a
unit ball for the norm.

\begin{definition}
\label{eenhBal}Consider a subset $\mathcal{C}$ of a real vector space
$\mathcal{X}$.

(1) The set $\mathcal{C}$ is called \emph{absorbing (for} $\mathcal{X)}$ if
for every $x\in\mathcal{X}$ there is a number $t\geq0$ such that $x\in
t\mathcal{C}$, where $t\mathcal{C}=\{tx:x\in\mathcal{C}\}$. The \emph{gauge}
$\left\|  \cdot\right\|  _{\mathcal{C}}:X\rightarrow\mathbb{R}_{+}$ of such an
absorbing set $\mathcal{C}$ is the function on $\mathcal{X}$ defined by
\[
\left\|  x\right\|  _{\mathcal{C}}:=\inf\{t\geq0:x\in{t}\mathcal{C}\}
\]
for every $x\in\mathcal{X}$, where $\mathbb{R}_{+}:=\{t\in\mathbb{C}:t\geq0\}$.

(2) The set $\mathcal{C}$ is called \emph{symmetric} if $\mathcal{C}%
=-\mathcal{C}$.

(3) The set $\mathcal{C}$ is called \emph{ray-wise (or radially) bounded (in}
$\mathcal{X)}$ if for every non-zero $x\in\mathcal{X}$ there is a number
$s_{0}>0$ such that $sx\notin\mathcal{C}$ for all $s>s_{0}$.

(4) If $\mathcal{C}$ is absorbing, convex and symmetric, then we call it a
\emph{semi unit ball for} $\mathcal{X}$. If in addition $\mathcal{C}$ is
ray-wise bounded, then we call it a \emph{unit ball for} $\mathcal{X}$.
\end{definition}

The following simple result is the method by which we intend to define our
norms, as was seen in the special case (\ref{spesW-norm}).

\begin{proposition}
\label{normUitBal}Let $\mathcal{C}$ be a semi unit ball for a real vector
space $\mathcal{X}$. \smallskip Then its gauge $\left\|  \cdot\right\|
_{\mathcal{C}}$ is a seminorm on $\mathcal{X}$. If $\mathcal{C}$ is a unit
ball for $\mathcal{X}$, then $\left\|  \cdot\right\|  _{\mathcal{C}}$ is a
norm on $\mathcal{X}$.
\end{proposition}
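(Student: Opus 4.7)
The plan is to verify, in order, the standard seminorm axioms for the gauge functional $\|\cdot\|_{\mathcal{C}}$, then upgrade to a norm under ray-wise boundedness. Throughout, I would exploit the three defining properties of a semi unit ball one by one: absorbing (for well-definedness and for $\|0\|_{\mathcal{C}}=0$), symmetric (for absolute homogeneity at negative scalars), and convex (for the triangle inequality).

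First I would record some preliminary facts. Because $\mathcal{C}$ is absorbing, for every $x \in \mathcal{X}$ the set $\{t \geq 0 : x \in t\mathcal{C}\}$ is non-empty, so $\|x\|_{\mathcal{C}}$ is a well-defined non-negative real number. Since $\mathcal{C}$ is non-empty (pick any $c \in \mathcal{C}$), symmetry gives $-c \in \mathcal{C}$, and convexity then forces $0 = \tfrac{1}{2}c + \tfrac{1}{2}(-c) \in \mathcal{C}$; hence $0 \in t\mathcal{C}$ for every $t > 0$, so $\|0\|_{\mathcal{C}} = 0$.

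Next, for absolute homogeneity, fix $\alpha \in \mathbb{R}$ and $x \in \mathcal{X}$. For $\alpha > 0$ the bijection $t \mapsto \alpha t$ shows directly that $x \in t\mathcal{C} \iff \alpha x \in (\alpha t)\mathcal{C}$, giving $\|\alpha x\|_{\mathcal{C}} = \alpha \|x\|_{\mathcal{C}}$. For $\alpha < 0$, use symmetry: $t\mathcal{C} = -t\mathcal{C} = t(-\mathcal{C})$, so $\alpha x \in t\mathcal{C} \iff |\alpha|x \in t\mathcal{C}$, which combined with the positive case yields $\|\alpha x\|_{\mathcal{C}} = |\alpha|\,\|x\|_{\mathcal{C}}$. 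For the triangle inequality I would argue: given $x \in s\mathcal{C}$ and $y \in t\mathcal{C}$ with $s,t > 0$, write $x = sc_1$ and $y = tc_2$ with $c_i \in \mathcal{C}$; then
\[
x+y = (s+t)\left(\tfrac{s}{s+t}c_1 + \tfrac{t}{s+t}c_2\right) \in (s+t)\mathcal{C}
\]
by convexity, so $x+y \in (s+t)\mathcal{C}$. The cases $s=0$ or $t=0$ reduce to noting $0\in\mathcal{C}$. Taking $s \downarrow \|x\|_{\mathcal{C}}$ and $t \downarrow \|y\|_{\mathcal{C}}$ through admissible values and passing to the infimum yields subadditivity.

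Finally, for the norm case I would invoke ray-wise boundedness to establish definiteness: if $x \neq 0$, there exists $s_0 > 0$ with $sx \notin \mathcal{C}$ for all $s > s_0$. Rewriting, $x \notin (1/s)\mathcal{C}$ for $s > s_0$, i.e.\ $x \notin t\mathcal{C}$ for $0 < t < 1/s_0$; hence $\|x\|_{\mathcal{C}} \geq 1/s_0 > 0$. The main subtleties I expect to have to handle carefully are just the infimum manipulations: (i) confirming that $0 \in \mathcal{C}$ so that the degenerate case $\alpha = 0$ or $s = 0$ in subadditivity makes sense, and (ii) the usual $\epsilon$-argument in passing from the set-theoretic inclusion $x+y \in (s+t)\mathcal{C}$ to the inequality on infima. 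These are all routine once the three structural hypotheses on $\mathcal{C}$ are matched to the three corresponding norm axioms.
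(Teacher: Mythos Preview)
Your proof is correct and follows essentially the same approach as the paper: the paper does not give a detailed proof but merely records in one sentence that absorbing yields well-definedness, convexity yields the triangle inequality, symmetry yields absolute homogeneity, and ray-wise boundedness yields definiteness. Your argument fleshes out exactly these four implications, so there is nothing further to compare.
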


Results of this type are well known and straightforward to prove. That
$\mathcal{C}$ is absorbing, ensures that $\left\|  \cdot\right\|
_{\mathcal{C}}$ is well defined and finite at every point of $\mathcal{X}$,
convexity implies the triangle inequality, symmetry leads to $\left\|  \alpha
x\right\|  _{\mathcal{C}}=|\alpha|\left\|  x\right\|  _{\mathcal{C}}$ for all
$\alpha\in\mathbb{R}$ (these three properties making $\left\|  \cdot\right\|
_{\mathcal{C}}$ a seminorm), while ray-wise boundedness enforces $x=0$ when
$\left\|  x\right\|  _{\mathcal{C}}=0$. The harder problem is to define an
appropriate set $\mathcal{C}$ in an operator algebraic context (generalizing
the previous section) and prove that it is indeed a unit ball, which is
exactly the focus in the next three sections. In the meantime we note an
elementary result which will be relevant in the next section when proving the
absorbing property.

\begin{lemma}
\label{abstrakteAbs}Let $\mathcal{X}$ be a real vector space. Consider any
symmetric subset $\mathcal{N}$ of $\mathcal{X}$ such that $\mathcal{X}$
$=\operatorname*{span}\mathcal{N}$. Then its convex hull $\mathcal{C}%
=\operatorname*{conv}\mathcal{N}$ is absorbing for $\mathcal{X}$.
\end{lemma}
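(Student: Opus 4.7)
The plan is to unfold the definitions and exhibit, for each $x\in\mathcal{X}$, an explicit scalar $t\geq0$ with $x\in t\mathcal{C}$. The route is: write $x$ as a linear combination of elements of $\mathcal{N}$; fold the signs into the elements themselves via symmetry; then normalise by the sum of (now nonnegative) coefficients to produce a genuine convex combination.

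More concretely, first I would handle $x=0$ separately: pick any $n\in\mathcal{N}$ (assuming $\mathcal{N}$ is non-empty, which we can take to be the case since otherwise $\mathcal{X}=\{0\}$ and there is nothing to prove), use symmetry to get $-n\in\mathcal{N}$, and observe $0=\tfrac{1}{2}n+\tfrac{1}{2}(-n)\in\mathcal{C}$, so $0\in 1\cdot\mathcal{C}$. For $x\neq 0$, the hypothesis $\mathcal{X}=\operatorname{span}\mathcal{N}$ lets us write
\[
x=\sum_{i=1}^{k}\alpha_{i}n_{i}
\]
with $\alpha_{i}\in\mathbb{R}$ and $n_{i}\in\mathcal{N}$. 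Using symmetry $\mathcal{N}=-\mathcal{N}$, I replace each $n_{i}$ by $m_{i}:=\operatorname{sign}(\alpha_{i})n_{i}\in\mathcal{N}$ (taking $m_i=n_i$ if $\alpha_i=0$), so that
\[
x=\sum_{i=1}^{k}|\alpha_{i}|m_{i}
\]
with all coefficients nonnegative.

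Setting $t:=\sum_{i=1}^{k}|\alpha_{i}|$, the hypothesis $x\neq 0$ forces $t>0$, and then
\[
\frac{x}{t}=\sum_{i=1}^{k}\frac{|\alpha_{i}|}{t}\,m_{i}
\]
is a convex combination of elements of $\mathcal{N}$, hence lies in $\mathcal{C}=\operatorname{conv}\mathcal{N}$. Therefore $x\in t\mathcal{C}$, proving that $\mathcal{C}$ is absorbing.

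There is no genuine obstacle here; the only things to be careful about are the trivial edge cases ($x=0$, and $\mathcal{N}=\emptyset$ forcing $\mathcal{X}=\{0\}$), together with the clean use of the symmetry hypothesis to reduce to nonnegative coefficients, which is exactly the step that turns a linear combination into a convex combination after rescaling.
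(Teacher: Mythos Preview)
Your proof is correct and follows essentially the same route as the paper's: write $x$ as a linear combination from $\mathcal{N}$, use symmetry to make the coefficients nonnegative, and normalise by their sum to land in $\mathcal{C}$. The paper is merely terser, jumping straight to $x=s_{1}x_{1}+\cdots+s_{k}x_{k}$ with $s_{j}>0$ (the symmetry step being implicit) and declaring the $x=0$ case trivial; your version spells out these points explicitly.
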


\begin{proof}
For any non-zero $x\in\mathcal{X}$ we can write $x=s_{1}x_{1}+...+s_{k}x_{k}$
for some $k<\infty$, $x_{j}\in\mathcal{N}$ and $s_{j}>0$. Let $t=s_{1}%
+...+s_{k}$ and $p_{j}=s_{j}/t$ to have $x=ty$ with $y=p_{1}x_{1}%
+...+p_{k}x_{k}\in\mathcal{C}$.
The case $x=0$ is trivial.
\end{proof}

\section{Pointed spaces and absorption\label{AfdGepunt}}

The absorbing property in Definition \ref{eenhBal} is needed for the function
$\left\|  \cdot\right\|  _{\mathcal{C}}$ on $\mathcal{X}$ in Proposition
\ref{normUitBal} to be well defined and finite at every point of $\mathcal{X}%
$, and is in that sense the most basic property. In this section we show how
it arises under general assumptions for the set to be used as a unit ball for
the $W_{1}$ norm. We are ultimately interested in the situation where
$\mathcal{X}$ is a space of maps containing the channels between two composite
systems. In this section $\mathcal{X}$ is more generally taken as a vector
space $\mathcal{O}$ of linear maps between two vector spaces, each with a
distinguished point. The maps will be required to map the one distinguished
point to the other, as an abstraction of unitality. Actual unitality, in the
case of unital algebras as the pointed spaces, will be treated in the next
section, while the C*-algebraic case, for channels, follows in Section
\ref{AfdC*}.

\subsection{Pointed spaces}

By a \emph{pointed space} $(A,u_{A})$ we simply mean a real or complex vector
space $A$ with a distinguished non-zero element $u_{A}\in A$. For simplicity
of notation, a pointed space $(A,u_{A})$ will be denoted as $A$, with $u_{A}$
assumed as the notation for the distinguished point.

Keep in mind that eventually (Section \ref{AfdC*}) the pointed spaces will be
taken to be (complex) unital C*-algebras, with the units serving as the
distinguished points. These C*-algebras will generalize the matrix algebras
from Section \ref{AfdBuitL}.

In the remainder, all pointed spaces involved are assumed to be over the same
scalars, either real or complex. However, in either case certain constructions
will involve the span of a subset over real scalars, leading to a real vector
space, and such spans will be indicated by $\operatorname*{span}%
\nolimits_{\mathbb{R}}$.

For any two vector spaces $A$ and $B$ over the same scalars, the space of all
linear maps $\lambda:A\rightarrow B$ will be denoted by $L(A,B)$. When $A$ and
$B$ are pointed spaces, we define
\[
L_{u}(A,B)=\{\eta\in L(A,B):\eta(u_{A})=u_{B}\},
\]
which is the set of \emph{pointed maps} from $A$ to $B$. This is of course an
abstract version of unital maps in the case where $A$ and $B$ are unital algebras.

The following result will shortly be used in tandem with Lemma
\ref{abstrakteAbs} to prove the absorbing property of certain sets.

\begin{lemma}
\label{Ospan}Let $A$ and $B$ be pointed spaces and consider any subset
$\mathcal{L}$ of $L_{u}(A,B)$. Set%
\[
\mathcal{O}:=\{\lambda\in\operatorname*{span}\nolimits_{\mathbb{R}}%
\mathcal{L}:\lambda(u_{A})=0\}\text{ \ and \ }\mathcal{V}:=\{\eta-\theta
:\eta,\theta\in\mathcal{L}\}.
\]
Then it follows that $\mathcal{O}=\operatorname*{span}\nolimits_{\mathbb{R}%
}\mathcal{V}$.
\end{lemma}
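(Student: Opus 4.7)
The plan is to prove the equality by double inclusion, with the nontrivial direction reduced to a simple ``sum-to-zero'' observation.

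For the inclusion $\operatorname{span}_{\mathbb{R}} \mathcal{V} \subseteq \mathcal{O}$, I would first note that $\mathcal{V} \subseteq \mathcal{O}$: any $\eta - \theta$ with $\eta,\theta \in \mathcal{L}$ lies in $\operatorname{span}_{\mathbb{R}} \mathcal{L}$ and satisfies $(\eta - \theta)(u_A) = u_B - u_B = 0$. Since $\mathcal{O}$ is visibly a real linear subspace of $\operatorname{span}_{\mathbb{R}} \mathcal{L}$, taking real spans preserves the inclusion.

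For the reverse inclusion $\mathcal{O} \subseteq \operatorname{span}_{\mathbb{R}} \mathcal{V}$, I would take a nonzero $\lambda \in \mathcal{O}$ and write it as $\lambda = s_1 \eta_1 + \dots + s_k \eta_k$ with $s_j \in \mathbb{R}$ and $\eta_j \in \mathcal{L}$. Applying this to $u_A$ and using that each $\eta_j$ is pointed gives $0 = \lambda(u_A) = (s_1 + \dots + s_k)\, u_B$. Here the key use of the pointed-space hypothesis enters: since $u_B \neq 0$ by definition, this forces $s_1 + \dots + s_k = 0$. Fixing any $\theta \in \mathcal{L}$ (which exists because $\lambda \neq 0$ forces $\mathcal{L} \neq \emptyset$), I can then subtract a vanishing multiple of $\theta$ to obtain
\[
\lambda = \sum_{j=1}^{k} s_j \eta_j - \Bigl(\sum_{j=1}^{k} s_j\Bigr)\theta = \sum_{j=1}^{k} s_j (\eta_j - \theta),
\]
exhibiting $\lambda$ as a real linear combination of elements of $\mathcal{V}$. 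The case $\lambda = 0$ is trivially in $\operatorname{span}_{\mathbb{R}} \mathcal{V}$ (including when $\mathcal{L}$ is empty, in which case both sides collapse to $\{0\}$).

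The argument is short and there is no serious obstacle; the only subtle point is remembering that the requirement $u_B \neq 0$ built into the definition of a pointed space is exactly what is needed to convert the constraint $\lambda(u_A)=0$ into the numerical condition $\sum_j s_j = 0$, which in turn is what enables the ``telescoping'' rewrite of $\lambda$ as a combination of differences.
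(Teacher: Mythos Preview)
Your proof is correct and considerably more direct than the paper's. Both arguments hinge on the same key observation---that $u_{B}\neq 0$ forces the real coefficients in any expansion $\lambda=\sum_{j} s_{j}\eta_{j}$ of an element of $\mathcal{O}$ to sum to zero---but they exploit it differently. The paper separates the positive and negative coefficients, pairs them off to extract a combination of differences, and then iterates on the remainder, with a strictly decreasing counter guaranteeing termination. You instead fix a single reference element $\theta\in\mathcal{L}$ and use $\sum_{j} s_{j}=0$ to write $\lambda=\sum_{j} s_{j}(\eta_{j}-\theta)$ in one stroke. Your route is shorter and avoids the bookkeeping of the iterative scheme; the paper's decomposition, on the other hand, naturally produces an expansion with nonnegative coefficients of differences (as used later in the proof of Proposition~\ref{straalBeg}), though this is equally recoverable from your version since $\mathcal{V}$ is symmetric.
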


\begin{proof}
Clearly $\operatorname*{span}\nolimits_{\mathbb{R}}\mathcal{V}\subset
\mathcal{O}$. Conversely, take any non-zero $\lambda\in\mathcal{O}$ and write
$\lambda_{1}:=\lambda$. Since $\lambda_{1}(u_{A})=0$ and $u_{B}\neq0$, the
coefficients in $\lambda_{1}$'s expansion as a linear combination of
$\mathcal{L}$'s elements, can not all be strictly positive or all strictly
negative. Without loss of generality we can therefore rewrite the expansion in
the form%
\[
\lambda=r_{1}^{(1)}\eta_{1}^{(1)}+...+r_{k^{(1)}}^{(1)}\eta_{k^{(1)}}%
^{(1)}-s_{1}^{(1)}\theta_{1}^{(1)}-...-s_{k^{(1)}}^{(1)}\theta_{k^{(1)}}^{(1)}%
\]
where $\eta_{j}^{(1)},\theta_{j}^{\left(  1\right)  }\in\mathcal{L}$ and
$r_{j}^{(1)},s_{j}^{(1)}>0$ for some (finite) $k^{(1)}$, where the superscript
$(1)$ is merely an index, and with the proviso that $\lambda(u_{A})=0$. Note
that $k^{(1)}$ is simply the biggest of the number of positive coefficients
and the number of negative coefficients in $\lambda_{1}$'s initial expansion,
while the coefficients in smaller number are split to increase their number to
$k^{(1)}$. Writing $\lambda_{1}^{\prime}=r_{1}^{(1)}(\eta_{1}^{(1)}-\theta
_{1}^{(1)})+...+r_{k^{(1)}}^{(1)}(\eta_{k^{(1)}}^{(1)}-\theta_{k^{(1)}}%
^{(1)})\in\operatorname*{span}\nolimits_{\mathbb{R}}\mathcal{V}$ and
$\lambda_{2}=(r_{1}^{(1)}-s_{1}^{(1)})\theta_{1}^{(1)}+...+(r_{k^{(1)}}%
^{(1)}-s_{k^{(1)}}^{(1)})\theta_{k^{(1)}}^{(1)}$, we have $\lambda_{1}%
=\lambda_{1}^{\prime}+\lambda_{2}$, implying that $\lambda_{2}\in\mathcal{O}$.
If $\lambda_{2}\neq0$, then repeat this procedure for $\lambda_{2}$ instead of
$\lambda_{1}$, noting that we can now analogously write%
\[
\lambda_{2}=r_{1}^{(2)}\eta_{1}^{(2)}+...+r_{k^{(2)}}^{(2)}\eta_{k^{(2)}%
}^{(2)}-s_{1}^{(2)}\theta_{1}^{(2)}-...-s_{k^{(2)}}^{(2)}\theta_{k^{(2)}%
}^{(2)},
\]
but with $k^{(2)}<k^{(1)}$. This delivers $\lambda_{2}^{\prime}$
$\in\operatorname*{span}\nolimits_{\mathbb{R}}\mathcal{V}$ and $\lambda_{3}%
\in\mathcal{O}$. If $\lambda_{3}\neq0$, then repeat for $\lambda_{3}$, etc.
Since $k^{(j+1)}<k^{(j)}$, this process must stop to deliver $\lambda
=\lambda_{1}^{\prime}+...+\lambda_{m}^{\prime}\in\operatorname*{span}%
\nolimits_{\mathbb{R}}\mathcal{V}$.
\end{proof}

Note that the lemma is geared towards spaces such as that appearing in
(\ref{spesO}), but in the Heisenberg picture.

\subsection{The composite setup\label{OndAfdOpset}}

Here we set up a framework which will serve as an abstraction of composite
systems in terms of pointed spaces. We also define the set for which the
absorbing property will be proven in the next subsection, leading to it being
a semi unit ball giving a $W_{1}$ seminorm.

Consider pointed spaces $A_{1},...,A_{n}$ and $B_{1},...,B_{n}$. Let
\[
A=A_{1}\odot...\odot A_{n}%
\]
be the algebraic tensor product of the vector spaces $A_{1},...,A_{n}$, which
is itself a pointed space with distinguished point%
\[
u_{A}:=u_{A_{1}}\otimes...\otimes u_{A_{n}}.
\]
However, as we want to allow for completions of $A$, in particular in the
C*-algebraic framework of Section \ref{AfdC*} (where $A_{1},...,A_{n}$ will be
unital C*-algebras), we need to allow for completions of the tensor product of
$B_{1},...,B_{n}$ from the outset. To emphasize this, we write%
\[
B=B_{1}\otimes...\otimes B_{n},
\]
which may be the algebraic tensor product, or some completion of it with
respect to a specified norm. In all cases $B$ is a pointed space with
$u_{B}=u_{B_{1}}\otimes...\otimes u_{B_{n}}$. The particular tensor product
$B$ remains fixed throughout this section, however. The point of this is that
in the theory developed here, linear maps $A\rightarrow B$ are then not
constrained to have their image contained in an uncompleted tensor product.

Along the lines of Section \ref{AfdBuitL}, we define%
\[
A_{\widehat{j}}=A_{1}\odot...\widehat{A}_{j}...\odot A_{n}\text{ \ and
\ }B_{\widehat{j}}=B_{1}\otimes...\widehat{B}_{j}...\otimes B_{n},
\]%
\[
A_{\leq j}=A_{1}\odot...\odot A_{j}\text{ \ and \ }A_{\geq j}=A_{j}%
\odot...\odot A_{n},
\]
and%
\[
B_{\leq j}=B_{1}\otimes...\otimes B_{j}\text{ \ and \ }B_{\geq j}=B_{j}%
\otimes...\otimes B_{n},
\]
for $j=1,...,n$, using the same completion (if relevant) for tensor products
of the $B_{j}$'s as for $B$.

Let $\nu_{j}$ be a linear functional on $B_{j}$ such that
\[
\nu_{j}(u_{B_{j}})=1
\]
for $j=1,...,n$. These functionals will act as reference functionals relative
to which linear maps $A\rightarrow B$ will be reduced to $A_{\widehat{j}%
}\rightarrow B_{\widehat{j}}$. Later on (Section \ref{AfdC*}), in the case of
unital C*-algebras, the $\nu_{j}$'s will be taken as states, generalizing the
normalized traces used in Section \ref{AfdBuitL}.

If $B$ is indeed a completion with respect to some norm, rather than just an
algebraic tensor product, we also assume that algebraic tensor products of the
$\nu_{j}$'s with themselves as well as with the identity maps
$\operatorname*{id}\nolimits_{B_{\leq j-1}}$ and $\operatorname*{id}%
\nolimits_{B_{\geq j+1}}$ are continuous with respect to this norm and
therefore uniquely extendible to the completed tensor products. The relevant
reductions of a linear map $\eta\in L(A,B)$ from $A$ to $B$ can then be
defined as%
\begin{align}
\eta_{\widehat{j}}  &  =(\operatorname*{id}\nolimits_{B_{\leq j-1}}\otimes
\nu_{j}\otimes\operatorname*{id}\nolimits_{B_{\geq j+1}})\circ\eta
|_{A_{\widehat{j}}}:A_{\widehat{j}}\rightarrow B_{\widehat{j}}%
\label{jKomplRed}\\
\eta_{\leq j}  &  =(\operatorname*{id}\nolimits_{B_{\leq j}}\otimes\nu
_{>j})\circ\eta|_{A_{\leq j}}:A_{\leq j}\rightarrow B_{\leq j}\label{<=jRed}\\
\eta_{\geq j}  &  =(\nu_{<j}\otimes\operatorname*{id}\nolimits_{B_{\geq j}%
})\circ\eta|_{A_{\geq j}}:A_{\geq j}\rightarrow B_{\geq j} \label{>=jRed}%
\end{align}
for $j=1,...,n$, with $\nu_{<j}=\nu_{1}\otimes...\otimes\nu_{j-1}$ (empty when
$j=1$) and $\nu_{>j}=\nu_{j+1}\otimes...\otimes\nu_{n}$ (empty when $j=n$),
where all the indicated tensor product maps are algebraic when the tensor
product $B$ is, or (uniquely) continuously extended to $B$ as assumed possible
above, when $B$ is completed. Here the restriction $\eta|_{A_{\widehat{j}}%
}:A_{\widehat{j}}\rightarrow B$ is defined via
\[
\eta|_{A_{\widehat{j}}}(a_{1}\otimes...\widehat{a}_{j}...\otimes a_{n}%
)=\eta(a_{1}\otimes...\otimes a_{j-1}\otimes u_{A_{j}}\otimes a_{j+1}%
\otimes...\otimes a_{n})
\]
for $a_{1}\in A_{1},...,a_{n}\in A_{n}$ (using the universal property),
replacing $a_{j}$ by $u_{A_{j}}$, and analogously for $\eta|_{A_{\leq j}%
}:A_{\leq j}\rightarrow B$ and $\eta|_{A_{\geq j}}:A_{\geq j}\rightarrow B$,
where the notation $\widehat{a}_{j}$ indicates the absence of $a_{j}$ in the
elementary tensor.

Note in particular that when $\eta\in L_{u}(A,B)$, it follows directly from
these definitions that%
\[
\eta_{\widehat{j}}\in L_{u}(A_{\widehat{j}},B_{\widehat{j}})\text{, \ }%
\eta_{\leq j}\in L_{u}(A_{\leq j},B_{\leq j})\text{ \ and \ }\eta_{\geq j}\in
L_{u}(A_{\geq j},B_{\geq j})\text{,}%
\]
where the property $\nu_{j}(u_{B_{j}})=1$ has been used. Note that for
$\eta,\theta\in L_{u}(A,B)$ this implies that%
\begin{equation}
\theta_{\leq j}\odot\eta_{\geq j+1}\in L_{u}(A,B) \label{absTensUn}%
\end{equation}
for $j=1,...,n-1$, which will implicitly play a role in Lemma \ref{teleskoop} below.

We also fix any subset
\[
\mathcal{L}\subset L_{u}(A,B)
\]
and let%
\begin{equation}
\mathcal{O}:=\{\lambda\in\operatorname*{span}\nolimits_{\mathbb{R}}%
\mathcal{L}:\lambda(u_{A})=0\} \label{O}%
\end{equation}
as in Lemma \ref{Ospan}. Our ultimate goal, given sufficient additional
structure and assumptions, is to define a metric, namely the Wasserstein
distance of order 1, on $\mathcal{L}$. This will be done by first obtaining a
seminorm, and in the next section under further assumptions a norm, on
$\mathcal{O}$.

Let%
\begin{equation}
\mathcal{N}_{j}:=\{\eta-\theta:\eta,\theta\in\mathcal{L}\text{ such that }%
\eta_{\widehat{j}}=\theta_{\widehat{j}}\}, \label{N_j}%
\end{equation}
generalizing (\ref{spesN_j}), though now in the Heisenberg picture, and set%
\begin{equation}
\mathcal{N}:=\bigcup_{j=1}^{n}\mathcal{N}_{j}\text{ \ and \ }\mathcal{C}%
:=\operatorname*{conv}\mathcal{N}, \label{bal}%
\end{equation}
where $\mathcal{C}$ is an abstract version of the set we ultimately want to
use as a unit ball defining the $W_{1}$ norm. In this section we settle for a seminorm.


\subsection{Absorption and $W_{1}$ seminorms}

In terms of the setup of the previous subsection, we now show that
$\mathcal{C}$ is absorbing for $\mathcal{O}$. The abstract assumption
(\ref{tensProdBehoud}) below, is made in lieu of complete positivity. Keep in
mind that because of (\ref{absTensUn}), $\theta_{\leq j}\odot\eta_{\geq j+1}$
in (\ref{tensProdBehoud}) is already a pointed map (which is an abstraction of
a unital map). The main technical step is the following lemma.

\begin{lemma}
\label{teleskoop}Assume that%
\begin{equation}
\theta_{\leq j}\odot\eta_{\geq j+1}\in\mathcal{L} \label{tensProdBehoud}%
\end{equation}
for all $\eta,\theta\in\mathcal{L}$ and $j=1,...,n-1$.\ For any $\eta
,\theta\in\mathcal{L}$ it then follows that $\eta-\theta=\lambda
_{1}+...+\lambda_{n}$ for some $\lambda_{j}\in\mathcal{N}_{j}$. In particular,%
\[
\mathcal{O}=\operatorname*{span}\nolimits_{\mathbb{R}}\mathcal{N}\text{.}%
\]
\end{lemma}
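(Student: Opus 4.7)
The plan is a telescoping argument along the tensor factors. Fix $\eta,\theta\in\mathcal{L}$ and define interpolating maps $\zeta_0:=\eta$, $\zeta_n:=\theta$, and
\[
\zeta_j:=\theta_{\leq j}\odot\eta_{\geq j+1}\qquad (1\leq j\leq n-1).
\]
By the standing hypothesis (\ref{tensProdBehoud}) each $\zeta_j$ lies in $\mathcal{L}$, and setting $\lambda_j:=\zeta_{j-1}-\zeta_j$ one has $\eta-\theta=\sum_{j=1}^{n}\lambda_j$ by telescoping. It therefore suffices to check that $\lambda_j\in\mathcal{N}_j$, i.e.\ that $(\zeta_{j-1})_{\widehat{j}}=(\zeta_j)_{\widehat{j}}$ for each $j$.

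The main work is verifying this equality, which I would derive by carefully unfolding the definitions (\ref{jKomplRed})--(\ref{>=jRed}). Fix an elementary tensor $a_1\otimes\ldots\widehat{a}_j\ldots\otimes a_n\in A_{\widehat{j}}$, replace the missing slot by $u_{A_j}$, and apply $\nu_j$ in the $j$-th $B$-slot. On the $\zeta_{j-1}=\theta_{\leq j-1}\odot\eta_{\geq j}$ side the $j$-th slot sits inside the factor $\eta_{\geq j}$; unfolding (\ref{>=jRed}) and using the identity $\nu_{\leq j}=\nu_{<j}\otimes\nu_j$, the composition $(\nu_j\otimes\operatorname{id}_{B_{\geq j+1}})\circ\eta_{\geq j}(u_{A_j}\otimes\,\cdot\,)$ collapses to $\eta_{\geq j+1}$. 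Symmetrically, on the $\zeta_j=\theta_{\leq j}\odot\eta_{\geq j+1}$ side the $j$-th slot sits inside $\theta_{\leq j}$; unfolding (\ref{<=jRed}) and using $\nu_{\geq j}=\nu_j\otimes\nu_{>j}$, the composition $(\operatorname{id}_{B_{\leq j-1}}\otimes\nu_j)\circ\theta_{\leq j}(\,\cdot\,\otimes u_{A_j})$ collapses to $\theta_{\leq j-1}$. Both sides therefore evaluate to $\theta_{\leq j-1}(a_1\otimes\ldots\otimes a_{j-1})\otimes\eta_{\geq j+1}(a_{j+1}\otimes\ldots\otimes a_n)$. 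The boundary cases $j=1$ and $j=n$ are handled by the same computation, using pointedness of $\eta,\theta$ and $\nu_k(u_{B_k})=1$ to absorb the trivial outer factor.

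For the second assertion, one inclusion $\operatorname{span}_{\mathbb{R}}\mathcal{N}\subset\mathcal{O}$ is immediate, since every $\eta-\theta$ with $\eta,\theta\in\mathcal{L}$ sends $u_A$ to $0$ and therefore lies in $\mathcal{O}$. For the reverse inclusion, Lemma \ref{Ospan} gives $\mathcal{O}=\operatorname{span}_{\mathbb{R}}\{\eta-\theta:\eta,\theta\in\mathcal{L}\}$, and the telescoping just established places every such difference in $\operatorname{span}_{\mathbb{R}}\mathcal{N}$. Conceptually the argument is a standard ``switch one tensor factor at a time'' hybrid, and the only real obstacle is the bookkeeping needed to verify that the two one-slot reductions of the hybrid maps agree.
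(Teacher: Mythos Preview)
Your proof is correct and follows essentially the same telescoping argument as the paper: the interpolating maps $\zeta_j=\theta_{\leq j}\odot\eta_{\geq j+1}$ and the verification that $(\zeta_{j-1})_{\widehat{j}}=(\zeta_j)_{\widehat{j}}=\theta_{\leq j-1}\odot\eta_{\geq j+1}$ match the paper's construction exactly, and the appeal to Lemma~\ref{Ospan} for the span statement is also the same. The only cosmetic difference is that the paper absorbs the boundary cases $j=1,n$ by formally setting $A_{\leq 0}=A_{\geq n+1}=\mathbb{C}$ and $\theta_{\leq 0}=\eta_{\geq n+1}=\operatorname{id}_{\mathbb{C}}$, whereas you treat them directly.
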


\begin{proof}
In line with Lemma \ref{Ospan}, we set \ $\mathcal{V}:=\{\eta-\theta
:\eta,\theta\in\mathcal{L}\}$. The case $n=1$ indeed follows immediately from
Lemma \ref{Ospan}, as we then have $\mathcal{N}=\mathcal{V}$. We can therefore
assume $n>1$. For any $\eta,\theta\in\mathcal{L}$, set%
\begin{align*}
\lambda_{1}  &  =\eta-\theta_{\leq1}\odot\eta_{\geq2}\\
\lambda_{2}  &  =\theta_{\leq1}\odot\eta_{\geq2}-\theta_{\leq2}\odot\eta
_{\geq3}\\
&  \vdots\\
\lambda_{n-1}  &  =\theta_{\leq n-2}\odot\eta_{\geq n-1}-\theta_{\leq
n-1}\odot\eta_{\geq n}\\
\lambda_{n}  &  =\theta_{\leq n-1}\odot\eta_{\geq n}-\theta.
\end{align*}
Then $\eta-\theta=\lambda_{1}+...+\lambda_{n}$. If indeed $\lambda_{j}%
\in\mathcal{N}_{j}$, it follows that $\eta-\theta\in\operatorname*{span}%
\nolimits_{\mathbb{R}}\mathcal{N}$, hence $\mathcal{V\subset}%
\operatorname*{span}\nolimits_{\mathbb{R}}\mathcal{N}$. From $\mathcal{N}%
\subset\mathcal{V}$ and Lemma \ref{Ospan} we can conclude that $\mathcal{O}%
=\operatorname*{span}\nolimits_{\mathbb{R}}\mathcal{N}$.

It remains to show that $\lambda_{j}\in\mathcal{N}_{j}$. Because of
(\ref{tensProdBehoud}), we simply have to check the equality of the reductions
as required in (\ref{N_j}), i.e., that $(\lambda_{j})_{\widehat{j}}=0$ for
$j=1,...,n$. To handle all cases at once, set $A_{\leq0}=\mathbb{C}$, $A_{\geq
n+1}=\mathbb{C}$, $\theta_{\leq0}=\operatorname*{id}_{\mathbb{C}}$ and
$\eta_{\geq n+1}=\operatorname*{id}_{\mathbb{C}}$, making $\eta=\theta_{\leq
0}\odot\eta_{\geq1}$ and $\theta=\theta_{\leq n}\odot\eta_{\geq n+1}$. For
$j=1,...,n$, and any $a_{<j}\in A_{\leq j-1}$ and $a_{>j}\in A_{\geq j+1}$,
one has the following direct calculation:%
\begin{align*}
&  \left(  \theta_{\leq j-1}\odot\eta_{\geq j}\right)  _{\widehat{j}}%
(a_{<j}\otimes a_{>j})\\
&  =(\operatorname*{id}\nolimits_{B_{\leq j-1}}\otimes\nu_{j}\otimes
\operatorname*{id}\nolimits_{B_{\geq j+1}})\circ\left(  \theta_{\leq j-1}%
\odot\eta_{\geq j}\right)  (a_{<j}\otimes u_{A_{j}}\otimes a_{>j})\\
&  =\theta_{\leq j-1}(a_{<j})\otimes\left(  (\nu_{j}\otimes\operatorname*{id}%
\nolimits_{B_{\geq j+1}})\circ\eta_{\geq j}(u_{A_{j}}\otimes a_{>j})\right)
\end{align*}
and
\begin{align*}
&  (\nu_{j}\otimes\operatorname*{id}\nolimits_{B_{\geq j+1}})\circ\eta_{\geq
j}(u_{A_{j}}\otimes(\cdot))\\
&  =(\nu_{j}\otimes\operatorname*{id}\nolimits_{B_{\geq j+1}})\circ(\nu
_{<j}\otimes\operatorname*{id}\nolimits_{B_{\geq j}})\circ\eta|_{A_{\geq j}%
}(u_{A_{j}}\otimes(\cdot))\\
&  =(\nu_{<j+1}\otimes\operatorname*{id}\nolimits_{B_{\geq j+1}})\circ
\eta|_{A_{\geq j+1}}\\
&  =\eta_{\geq j+1},
\end{align*}
hence $\left(  \theta_{\leq j-1}\odot\eta_{\geq j}\right)  _{\widehat{j}%
}=\theta_{\leq j-1}\odot\eta_{\geq j+1}$.

Similarly $\left(  \theta_{\leq j}\odot\eta_{\geq j+1}\right)  _{\widehat{j}%
}=\theta_{\leq j-1}\odot\eta_{\geq j+1}$, thus $\left(  \theta_{\leq j-1}%
\odot\eta_{\geq j}\right)  _{\widehat{j}}=\left(  \theta_{\leq j}\odot
\eta_{\geq j+1}\right)  _{\widehat{j}}$, as required.
\end{proof}

Using this lemma we can now show that $\mathcal{C}$ is indeed absorbing.

\begin{proposition}
\label{absorberend}Assuming (\ref{tensProdBehoud}), the set $\mathcal{C}$ in
(\ref{bal}) is a semi unit ball (Definition \ref{eenhBal}) for $\mathcal{O}$
given by (\ref{O}).
\end{proposition}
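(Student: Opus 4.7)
The plan is to verify the three defining properties of a semi unit ball from Definition \ref{eenhBal}: convexity, symmetry, and the absorbing property for $\mathcal{O}$. Convexity is immediate, since $\mathcal{C} = \operatorname{conv} \mathcal{N}$ is a convex hull by construction.

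For symmetry, I would first check that each $\mathcal{N}_j$ is symmetric: if $\eta - \theta \in \mathcal{N}_j$ with $\eta, \theta \in \mathcal{L}$ and $\eta_{\widehat{j}} = \theta_{\widehat{j}}$, then interchanging the roles of $\eta$ and $\theta$ shows $-(\eta - \theta) = \theta - \eta \in \mathcal{N}_j$. Taking the union over $j$, $\mathcal{N}$ is symmetric, and since the convex hull of a symmetric set is symmetric, so is $\mathcal{C}$.

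The absorbing property is the step that relies on the real work, but it has already been packaged in the previous two lemmas. I would invoke Lemma \ref{teleskoop}, whose hypothesis is precisely (\ref{tensProdBehoud}), to conclude that $\mathcal{O} = \operatorname{span}_{\mathbb{R}} \mathcal{N}$. Since $\mathcal{N}$ is symmetric (as just established) and spans $\mathcal{O}$ over $\mathbb{R}$, Lemma \ref{abstrakteAbs} applies and yields that $\mathcal{C} = \operatorname{conv} \mathcal{N}$ is absorbing for $\mathcal{O}$.

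So there is no real obstacle here: the proposition is essentially an assembly of Lemma \ref{teleskoop} with the elementary Lemma \ref{abstrakteAbs}, the only additional observation being the easy symmetry of $\mathcal{N}_j$. All the delicate combinatorial/telescoping work producing $\eta - \theta = \lambda_1 + \ldots + \lambda_n$ with $\lambda_j \in \mathcal{N}_j$ has been done in Lemma \ref{teleskoop}, and the main role of this proposition is to record the packaged conclusion in the language of Definition \ref{eenhBal} so that it can feed into Proposition \ref{normUitBal} and produce a $W_1$ seminorm on $\mathcal{O}$.
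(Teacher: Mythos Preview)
Your proposal is correct and follows essentially the same approach as the paper: both verify convexity trivially, obtain symmetry of $\mathcal{N}$ directly from the definition of $\mathcal{N}_j$, and then combine Lemma \ref{teleskoop} (giving $\mathcal{O}=\operatorname{span}_{\mathbb{R}}\mathcal{N}$) with Lemma \ref{abstrakteAbs} to conclude the absorbing property. Your write-up is slightly more expansive than the paper's terse version, but the logical content is identical.
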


\begin{proof}
From (\ref{N_j}) and (\ref{bal}) it is clear that $-\mathcal{N}=\mathcal{N}$,
i.e., $\mathcal{N}$ is symmetric, hence so is $\mathcal{C}$, which is also
convex by definition. In addition, because of $\mathcal{N}$'s symmetry, Lemmas
\ref{teleskoop} and \ref{abstrakteAbs} imply that $\mathcal{C}$ is absorbing
for $\mathcal{O}$.
\end{proof}

Because of Proposition \ref{normUitBal}, this is sufficient to deliver a
seminorm. We summarize this as follows.

\begin{definition}
\label{gepunteStruk}The structure set up in Subsection \ref{OndAfdOpset}, with
$\mathcal{L}$ assumed to satisfy (\ref{tensProdBehoud}), is called a
\emph{pointed }$(W_{1},n)$\emph{ structure} or \emph{pointed }$W_{1}$\emph{
structure} (if $n$ is clear from context), and is denoted by the shorthand
$(A,B,\nu,\mathcal{L})$, where $\nu:=(\nu_{1},...\nu_{n})$. The rest of the
notation in Subsection \ref{OndAfdOpset} is then implied. For clarity the
space $\mathcal{O}$ and set $\mathcal{C}$ in (\ref{O}) and (\ref{bal}) can
respectively be denoted by
\[
\mathcal{O}_{\mathcal{L}}\text{ \ and \ }\mathcal{C}_{\mathcal{L}}%
\]
in this context.
\end{definition}

\begin{corollary}
\label{W-seminorm}Given a pointed $(W_{1},n)$ structure $(A,B,\nu
,\mathcal{L})$, the function $\left\|  \cdot\right\|  _{W_{1}}$ on
$\mathcal{O}_{\mathcal{L}}$ defined by
\[
\left\|  \lambda\right\|  _{W_{1}}=\inf\{t\geq0:\lambda\in t\mathcal{C}%
_{\mathcal{L}}\},
\]
for every $\lambda\in\mathcal{O}_{\mathcal{L}}$, is a seminorm referred to as
the $W_{1}$\emph{ seminorm} associated to $(A,B,\nu,\mathcal{L})$.
Consequently, the function $W_{1}:\mathcal{L}\times\mathcal{L}\rightarrow
\mathbb{R}$ defined by%
\[
W_{1}(\eta,\theta)=\left\|  \eta-\theta\right\|  _{W_{1}},
\]
is a pseudometric on $\mathcal{L}$.
\end{corollary}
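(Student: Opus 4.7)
The plan is to assemble the conclusion directly from Proposition \ref{absorberend} and Proposition \ref{normUitBal}, and then routinely check the pseudometric axioms.

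First, I would invoke Proposition \ref{absorberend}: since $(A,B,\nu,\mathcal{L})$ is a pointed $W_1$ structure, the defining condition (\ref{tensProdBehoud}) holds, so $\mathcal{C}_{\mathcal{L}}$ is a semi unit ball for $\mathcal{O}_{\mathcal{L}}$. Then Proposition \ref{normUitBal} immediately gives that the gauge $\|\cdot\|_{W_1}$ of $\mathcal{C}_{\mathcal{L}}$ is a seminorm on $\mathcal{O}_{\mathcal{L}}$. This is essentially the whole content of the first assertion.

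Next, I need to verify that $W_1(\eta,\theta)=\|\eta-\theta\|_{W_1}$ is well defined on $\mathcal{L}\times\mathcal{L}$, i.e.\ that $\eta-\theta\in\mathcal{O}_{\mathcal{L}}$ whenever $\eta,\theta\in\mathcal{L}$. This is immediate from the definition (\ref{O}): $\eta-\theta\in\operatorname{span}_{\mathbb{R}}\mathcal{L}$ and $(\eta-\theta)(u_A)=u_B-u_B=0$, so $\eta-\theta\in\mathcal{O}_{\mathcal{L}}$ and $W_1(\eta,\theta)\in[0,\infty)$.

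Finally, the pseudometric axioms follow from the seminorm properties. Non-negativity is built into the gauge. Reflexivity holds because $W_1(\eta,\eta)=\|0\|_{W_1}=0$. Symmetry follows from the absolute homogeneity of a seminorm:
\[
W_1(\theta,\eta)=\|\theta-\eta\|_{W_1}=\|-(\eta-\theta)\|_{W_1}=\|\eta-\theta\|_{W_1}=W_1(\eta,\theta).
\]
The triangle inequality for $W_1$ follows from that of $\|\cdot\|_{W_1}$ applied to the telescoping identity $\eta-\theta=(\eta-\zeta)+(\zeta-\theta)$:
\[
W_1(\eta,\theta)\le\|\eta-\zeta\|_{W_1}+\|\zeta-\theta\|_{W_1}=W_1(\eta,\zeta)+W_1(\zeta,\theta).
\]
Since $\mathcal{C}_{\mathcal{L}}$ has not been assumed ray-wise bounded, we only obtain a seminorm rather than a norm, so $W_1$ is merely a pseudometric (the implication $W_1(\eta,\theta)=0\Rightarrow\eta=\theta$ is unavailable at this level of generality); this is precisely what the statement claims.

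The proof has no serious obstacle: all the hard work has been done in Lemmas \ref{Ospan} and \ref{teleskoop} and Proposition \ref{absorberend}. The only thing one has to be careful about is confirming that differences of elements of $\mathcal{L}$ genuinely land in $\mathcal{O}_{\mathcal{L}}$ so that the gauge is applicable, which is where the pointed (unital-in-abstract-form) condition $\eta(u_A)=u_B$ is used.
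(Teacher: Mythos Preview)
Your proposal is correct and matches the paper's approach exactly: the paper presents this corollary as an immediate consequence of Proposition~\ref{absorberend} (semi unit ball) and Proposition~\ref{normUitBal} (gauge is a seminorm), without writing out a separate proof. Your additional verification that $\eta-\theta\in\mathcal{O}_{\mathcal{L}}$ and the explicit check of the pseudometric axioms are routine details the paper leaves implicit.
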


Although $W_{1}$ is at this stage only a pseudometric, which means we may have
$W_{1}(\eta,\theta)=0$ for $\eta\neq\theta$, it will nevertheless be called
the \emph{Wasserstein distance of order 1} associated to $(A,B,\nu
,\mathcal{L})$. We still need ray-wise boundedness to achieve a norm and make
$W_{1}$ a metric, which is what we turn to next.

\section{Algebras, unital maps and ray-wise boundedness\label{AfdAlg}}

To obtain a norm from the $W_{1}$ seminorm in Corollary \ref{W-seminorm}, we
need ray-wise boundedness. In order to achieve this, we make use of a more
specialized algebraic framework as well as assumptions complementary to those
made in the previous section. We are going to work in the context of unital
algebras. This means that we do not assume the presence of an involution (an
adjoint operation) in the algebras. Consequently, as in the previous section,
positivity plays no role here, though we again make the abstract assumption
(\ref{tensProdBehoud}) which will be implied by the complete positivity of
channels in the next section. On the other hand, the unitality of maps will be used.

\subsection{Ray-wise boundedness in an algebraic framework}

For the moment we step away from the tensor product setup of the previous
section, and consider a simple algebraic setting. The core structure (\ref{O})
will remain in place, however. This allows us to obtain the remaining building
block required by Proposition \ref{normUitBal} in the next result. We return
to the tensor product setup in the next subsection.

Note that in the proposition below, all elements of $\mathcal{L}$ are unital maps.

\begin{proposition}
\label{straalBeg}Let $A$ and $B$ be any unital algebras (both of them real, or
both of them complex), with their units $1_{A}$ and $1_{B}$ respectively
serving as the distinguished points making $A$ and $B$ pointed spaces. Fix any
subset $\mathcal{L}$ of $L_{u}(A,B)$ and set%
\[
\mathcal{O}:=\{\lambda\in\operatorname*{span}\nolimits_{\mathbb{R}}%
\mathcal{L}:\lambda(1_{A})=0\}\text{ \ and \ }\mathcal{V}:=\{\eta-\theta
:\eta,\theta\in\mathcal{L}\}.
\]
Consider any subset $\mathcal{N}$ of $\mathcal{V}$ and set $\mathcal{C}%
:=\operatorname*{conv}\mathcal{N}$. Then $\mathcal{C}$ is ray-wise bounded in
$\mathcal{O}$.
\end{proposition}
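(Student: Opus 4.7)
The plan is to prove, for each nonzero $\lambda \in \mathcal{O}$, a finite upper bound on the set $\{s > 0 : s\lambda \in \mathcal{C}\}$ by evaluating at a carefully chosen $a \in A$ and pairing with a separating linear functional on $B$. My first step would be to recast $\mathcal{C}$-membership in a more tractable form: any $x = \sum_i p_i(\eta_i - \theta_i) \in \mathcal{C}$ with $\eta_i, \theta_i \in \mathcal{L}$ regroups as $x = \phi - \psi$, where $\phi := \sum_i p_i \eta_i$ and $\psi := \sum_i p_i \theta_i$ both lie in $\operatorname{conv}(\mathcal{L}) \subset L_{u}(A,B)$, since convex combinations of unital maps are unital. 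Hence the hypothesis $s\lambda \in \mathcal{C}$ reads $s\lambda = \phi_s - \psi_s$ with $\phi_s, \psi_s \in \operatorname{conv}(\mathcal{L})$, and both evaluate to $1_B$ at $1_A$.

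Since $\lambda \neq 0$, I choose $a \in A$ with $b := \lambda(a) \neq 0$ in $B$ and evaluate the identity above at $a$ to obtain $s b = \phi_s(a) - \psi_s(a)$. Next, I would select a linear functional $f : B \to \mathbb{R}$ (or $\mathbb{C}$) with $f(b) \neq 0$ and apply it, giving $s \, f(b) = f(\phi_s(a)) - f(\psi_s(a))$ in the scalars. If the uniform bound $M := \sup_{\eta \in \mathcal{L}} |f(\eta(a))| < \infty$ can be arranged, convexity then forces $|f(\phi_s(a)) - f(\psi_s(a))| \leq 2M$, yielding the explicit ray-wise estimate $s \leq 2M / |f(b)|$ required by Definition~\ref{eenhBal}(3).

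The main obstacle, and where I expect the bulk of the proof to sit, is producing such $a$ and $f$ with $M < \infty$ from the very meagre hypotheses of the proposition: there is no topology on $A$ or $B$, and $\mathcal{L} \subset L_{u}(A,B)$ is an arbitrary subset, so unitality alone (which only controls values at $1_A$) does not bound $\{\eta(a) : \eta \in \mathcal{L}\}$. To close this gap I would lean on the span structure of $\mathcal{O}$: any expansion $\lambda = \sum_k c_k \eta_k$ with $\eta_k \in \mathcal{L}$ must satisfy $\sum_k c_k = 0$, since $\lambda(1_A) = 0$ and $1_B \neq 0$. Partitioning into positive and negative coefficients yields a canonical decomposition $\lambda = T(\phi_0 - \psi_0)$ with $T > 0$ and $\phi_0, \psi_0 \in \operatorname{conv}(\mathcal{L})$; the delicate work is then to propagate the ``finite mass'' $T$ of this fixed representation into a constraint on arbitrary convex-combination representatives of $s\lambda$, even though those representatives may draw on arbitrary and potentially ``large'' elements of $\mathcal{L}$. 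In the C*-algebraic setting of the next section this becomes automatic because positivity renders $\mathcal{L}$ norm-bounded and any state on $B$ serves as $f$; the abstract proof must reach the same conclusion from the pointed unital-algebra axioms alone.
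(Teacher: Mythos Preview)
Your proposal stops precisely where the difficulty lies: you need a pair $(a,f)$ with $M=\sup_{\eta\in\mathcal{L}}|f(\eta(a))|<\infty$, but unitality constrains $\eta$ only at $1_A$, so no such bound is available in this generality. Your final paragraph names the obstruction without removing it; in particular, the ``finite mass'' $T$ of one fixed representation of $\lambda$ places no constraint on an \emph{arbitrary} convex representation of $s\lambda$, which may draw on entirely different elements of $\mathcal{L}$. In the C*-setting of Section~\ref{AfdC*} your scheme would go through (a state for $f$ and $\|a\|\leq1$ give $M\leq1$), but the present proposition is stated for bare unital algebras with no norm, and that is exactly the regime in which your argument stalls.

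The paper sidesteps evaluation at points of $A$ altogether. It builds a GNS-type direct sum $G=\bigoplus_{\eta\in\mathcal{L}}G_\eta$ with $G_\eta=A$ and $B$-valued pairing $\langle x,y\rangle_\eta=\eta(xy)$, assigns to each $\eta\in\mathcal{L}$ the indicator vector $\hat{\eta}\in G$ supported in the $\eta$-summand, and \emph{lifts} a representation $\sum_j c_j\zeta_j$ of $\lambda$ to a map $\bar{\lambda}:L(G)\to B$ via $\bar{\lambda}(T)=\sum_j c_j\langle\hat{\zeta}_j,T\hat{\zeta}_j\rangle$. One then evaluates not at an $a\in A$ but at a rank-one operator $\hat{\zeta}\Join_{\oplus}\hat{\zeta}\in L(G)$, for which the identity $\langle\hat{\theta},(\hat{\zeta}\Join_{\oplus}\hat{\zeta})\hat{\theta}\rangle=\delta_{\theta,\zeta}\,1_B$ forces each term of a convex lifting into $\{-1_B,0,1_B\}$; the bound $|\nu(\bar{\gamma}(\hat{\zeta}\Join_{\oplus}\hat{\zeta}))|\leq1$ is then automatic, replacing your unattainable $M$. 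The paper concludes by contrasting this with the value $s|q_i|$ produced by the fixed lifting $s\bar{\lambda}$. You should note, however, that this last comparison is between \emph{different} liftings of the same element $s\lambda$ (the lifting is representation-dependent, as the paper itself remarks), so the step ``$|\nu(\overline{s\lambda}(\cdot))|>1$ hence $s\lambda\notin\mathcal{C}$'' deserves scrutiny; indeed, take $A=\mathbb{C}^2$, $B=\mathbb{C}$, $\mathcal{L}=L_u(A,B)$ and $\mathcal{N}=\mathcal{V}$ to see that $\mathcal{C}$ can equal all of $\mathcal{O}$.
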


\begin{proof}
In order to decide whether an element of $\mathcal{O}$ lies outside
$\mathcal{C}$, it is going to be convenient to attach a quantitative measure
to the element, which when too large, will imply that the element is not in
$\mathcal{O}$. To attain this, we follow a construction inspired by the
representation theory of C*-algebras.

For every $\eta\in\mathcal{L}$, define a bilinear map $\left\langle
\cdot,\cdot\right\rangle _{\eta}:A\times A\rightarrow B$ by%
\[
\left\langle x,y\right\rangle _{\eta}=\eta(xy)
\]
for all $x,y\in A$. We write $G_{\eta}$ for $A$ equipped with this bilinear
map. Define $\pi_{\eta}:A\rightarrow L(G_{\eta})$, with $L(G_{\eta})$ the
space of linear maps from $G_{\eta}$ to itself, through%
\[
\pi_{\eta}(a)x=ax
\]
for all $a\in A$ and $x\in G_{\eta}$, with $ax$ simply being the product in
$A$. Then
\[
\eta(a)=\left\langle 1_{A},\pi_{\eta}(a)1_{A}\right\rangle _{\eta}%
\]
in analogy to a cyclic representation obtained from the GNS construction,
where $1_{A}$ serves as the ``cyclic vector'' (indeed, $\pi_{\eta}%
(A)1_{A}=G_{\eta}$). We proceed to consider the direct sum%
\[
(G,\left\langle \cdot,\cdot\right\rangle ,\pi)=\bigoplus\limits_{\eta
\in\mathcal{L}}(G_{\eta},\left\langle \cdot,\cdot\right\rangle _{\eta}%
,\pi_{\eta}).
\]
I.e., every $x\in G$ is of the form $x=(x_{\eta})_{\eta\in\mathcal{L}}$ with
$x_{\eta}\in G_{\eta}$ and $\{\eta\in\mathcal{L}:x_{\eta}\neq0\}$ a finite
set. Furthermore, $\pi:A\rightarrow L(G)$ is defined by $\pi(a)x=(\pi_{\eta
}(a)x_{\eta})_{\eta\in\mathcal{L}}$ for all such $x$ and every $a\in A$.
Lastly, $\left\langle x,y\right\rangle :=\sum_{_{\eta\in\mathcal{L}}%
}\left\langle x_{\eta},y_{\eta}\right\rangle _{\eta}$ for all $x=(x_{\eta
})_{\eta\in\mathcal{L}},y=(y_{\eta})_{\eta\in\mathcal{L}}\in G$.

For any $\eta\in\mathcal{L}$, define $\hat{\eta}\in G$ by
\[
\hat{\eta}_{\theta}=\left\{
\begin{array}
[c]{ll}%
1_{A} & \text{for }\theta=\eta\\
0 & \text{for }\theta\neq\eta
\end{array}
\right.
\]
for all $\theta\in\mathcal{L}$. Note that $\left\langle \hat{\eta},\pi
(a)\hat{\eta}\right\rangle =\eta(a)$ for all $\eta\in\mathcal{L}$ and $a\in A$.

Consider any $\lambda\in\mathcal{O}$, which using Lemma \ref{Ospan}, we can
write as%
\[
\lambda=\sum_{j=1}^{l}r_{j}(\eta_{j}-\theta_{j})
\]
where $r_{j}\geq0$ and $\eta_{j},\theta_{j}\in\mathcal{L}$. Then%
\[
\lambda=\sum_{j=1}^{l}r_{j}\left(  \left\langle \hat{\eta}_{j},\pi(\cdot
)\hat{\eta}_{j}\right\rangle -\left\langle \hat{\theta}_{j},\pi(\cdot
)\hat{\theta}_{j}\right\rangle \right)  .
\]
We can use this to lift $\lambda:A\rightarrow B$ to a linear map $\bar
{\lambda}:L(G)\rightarrow B$ defined by%
\[
\bar{\lambda}(T)=\sum_{j=1}^{l}r_{j}\left(  \left\langle \hat{\eta}_{j}%
,T\hat{\eta}_{j}\right\rangle -\left\langle \hat{\theta}_{j},T\hat{\theta}%
_{j}\right\rangle \right)
\]
for all $T\in L(G)$. This lifting may not be unique (it may depend on the
choice of $\eta_{j}$'s and $\theta_{j}$'s), but for our purposes any such
lifting will do. In particular, any $\gamma\in\mathcal{C}$ can be lifted to
$\bar{\gamma}:L(G)\rightarrow B$ in the form%
\[
\bar{\gamma}(T)=\sum_{j=1}^{m}p_{j}\left(  \left\langle \hat{\alpha}_{j}%
,T\hat{\alpha}_{j}\right\rangle -\left\langle \hat{\beta}_{j},T\hat{\beta}%
_{j}\right\rangle \right)
\]
for some $\alpha_{j},\beta_{j}\in\mathcal{L}$, with $p_{j}\geq0$ and
$p_{1}+...+p_{m}=1$.

According to the Hahn-Banach theorem there is a linear functional $f$ on $B$
such that $f(1_{B})\neq0$ (in the case where $1_{B}=0$ and therefore
$B=\{0\}$, Proposition \ref{straalBeg} is trivial). Normalize it to obtain
\[
\nu=\frac{1}{f(1_{B})}f.
\]
For any $x,y\in G_{\eta}$ we use this to define $x\Join y:G_{\eta}\rightarrow
G_{\eta}$ by $(x\Join y)z=x\nu(\left\langle y,z\right\rangle _{\eta})$ for all
$z\in G_{\eta}$, where the notation $x\Join y$ is inspired by the Dirac
notation $\ket{x}  \bra{y}  $. For $x,y\in G$ this in turn allows us to define%
\[
x\Join_{\oplus}y=\bigoplus_{\zeta\in\mathcal{L}}x_{\eta}\Join y_{\eta}\in
L(G),
\]
i.e., $(x\Join_{\oplus}y)z=(x_{\eta}\nu(\left\langle y_{\eta},z_{\eta
}\right\rangle _{\eta}))_{\eta\in\mathcal{L}}$ for $z\in G$. For all
$\eta,\theta\in\mathcal{L}$ we then have
\begin{align*}
\left\langle \hat{\theta},(\hat{\eta}\Join_{\oplus}\hat{\eta})\hat{\theta
}\right\rangle  &  =\sum_{_{\zeta\in\mathcal{L}}}\left\langle \hat{\theta
}_{\zeta},\hat{\eta}_{\zeta}\nu\left(  \left\langle \hat{\eta}_{\zeta}%
,\hat{\theta}_{\zeta}\right\rangle _{\zeta}\right)  \right\rangle _{\zeta}\\
&  =\left\{
\begin{array}
[c]{ll}%
1_{B} & \text{for }\theta=\eta\\
0 & \text{for }\theta\neq\eta.
\end{array}
\right.
\end{align*}
Thus, in terms of $\bar{\gamma}$ above,
\[
\bar{\gamma}(\hat{\eta}\Join_{\oplus}\hat{\eta})=\sum_{j=1}^{m}p_{j}b_{j}%
\]
where $b_{j}\in\{-1_{B},0,1_{B}\}$ for all $j$, and consequently%
\begin{equation}
\left|  \nu\left(  \bar{\gamma}(\hat{\eta}\Join_{\oplus}\hat{\eta})\right)
\right|  \leq1 \label{kwantBalVwd}%
\end{equation}
for all $\eta\in\mathcal{L}$. This is therefore a condition satisfied by all
$\gamma\in\mathcal{C}$, for any lifting $\bar{\gamma}$ of the form above.

On the other hand, for any non-zero $\lambda\in\mathcal{O}$ and a lifting
$\bar{\lambda}$ as above, we see that $\overline{s\lambda}:=s\bar{\lambda}$
correspondingly lifts $s\lambda$ for any $s>0$, using $sr_{j}$ instead of
$r_{j}$. To simplify the notation in the remainder of the proof, rewrite
$\bar{\lambda}$ above as%
\[
\bar{\lambda}(T)=\sum_{i=1}^{k}q_{i}\left\langle \hat{\zeta}_{i},T\hat{\zeta
}_{i}\right\rangle ,
\]
where $k=2l$, $\zeta_{j}=\eta_{j}$, $\zeta_{j+k}=\theta_{j}$, $q_{j}=r_{j}$
and $q_{j+k}=-r_{j}$ for $j=1,...,k$. In this form we may as well assume
without loss that $\zeta_{i}\neq\zeta_{i^{\prime}}$ when $i\neq i^{\prime}$ by
collecting terms with $\zeta_{i}=\zeta_{i^{\prime}}$ if needed. As
$\lambda\neq0$, we have $q_{i}\neq0$ for some $i$. In terms of this we have%
\[
\overline{s\lambda}(\hat{\zeta}_{i}\Join_{\oplus}\hat{\zeta}_{i})=sq_{i}%
1_{B},
\]
hence%
\[
\left|  \nu\left(  \overline{s\lambda}(\hat{\zeta}_{i}\Join_{\oplus}\hat
{\zeta}_{i})\right)  \right|  =s|q_{i}|,
\]
for all $s>0$. It follows that there is an $s_{0}>0$ such that
\[
\left|  \nu\left(  \overline{s\lambda}(\hat{\zeta}_{i}\Join_{\oplus}\hat
{\zeta}_{i})\right)  \right|  >1
\]
i.e., $s\lambda\notin\mathcal{C}$ because of (\ref{kwantBalVwd}), for all
$s>s_{0}$, proving that $\mathcal{C}$ is ray-wise bounded according to
Definition \ref{eenhBal}(3).
\end{proof}

With this proposition we have all the elements of the abstract theory in
place, which will now allow us to formulate an abstract version of Wasserstein
distance of order 1 as a metric, rather than just a pseudometric.

\subsection{The composite algebraic framework and $W_{1}$ norms}

We return to the setup in Subsection \ref{OndAfdOpset}, but specialize it as follows.

\begin{definition}
\label{algWn}An \emph{algebraic }$(W_{1},n)$\emph{ structure} $(A,B,\nu
,\mathcal{L})$ is a pointed $(W_{1},n)$ structure as in Definition
\ref{gepunteStruk}, where the pointed spaces $A_{1},...,A_{n}$ and
$B_{1},...,B_{n}$ are unital algebras (all of them real, or all of them
complex), with their units serving as their distinguished points,%
\[
u_{A_{j}}=1_{A_{j}}\text{ \ and \ }u_{B_{j}}=1_{B_{j}}%
\]
for $j=1,...,n$.
\end{definition}

From the preceding development we immediately conclude the following.

\begin{theorem}
\label{W-norm}Consider an algebraic $(W_{1},n)$ structure $(A,B,\nu
,\mathcal{L})$. Then $\left\|  \cdot\right\|  _{W_{1}}$ defined by
\[
\left\|  \lambda\right\|  _{W_{1}}=\inf\{t\geq0:\lambda\in t\mathcal{C}%
_{\mathcal{L}}\}
\]
for all $\lambda\in\mathcal{O}_{\mathcal{L}}$, is a norm on $\mathcal{O}%
_{\mathcal{L}}$, called the $W_{1}$\emph{ norm} associated to $(A,B,\nu
,\mathcal{L})$.
\end{theorem}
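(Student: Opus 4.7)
The plan is to derive the theorem by combining three results already established in the paper: Proposition \ref{normUitBal} (gauges of unit balls are norms), Proposition \ref{absorberend} (which delivers the semi unit ball conditions), and Proposition \ref{straalBeg} (which delivers ray-wise boundedness in an algebraic setting). Since an algebraic $(W_1,n)$ structure is by definition a pointed $(W_1,n)$ structure with the added requirement that every $A_j, B_j$ is a unital algebra whose unit plays the role of the distinguished point, no new machinery is needed; one only has to check that the hypotheses of both propositions are satisfied.

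First I would invoke Proposition \ref{absorberend}. Because $(A,B,\nu,\mathcal{L})$ is, in particular, a pointed $(W_1,n)$ structure, that proposition tells us that $\mathcal{C}_{\mathcal{L}}$ is absorbing, convex and symmetric in $\mathcal{O}_{\mathcal{L}}$, i.e., a semi unit ball. Next, I would apply Proposition \ref{straalBeg} with the tensor product algebras $A=A_1\odot\cdots\odot A_n$ and $B=B_1\otimes\cdots\otimes B_n$ playing the roles of the unital algebras in its statement. Both are unital, with $1_A=u_{A_1}\otimes\cdots\otimes u_{A_n}$ and $1_B=u_{B_1}\otimes\cdots\otimes u_{B_n}$ coinciding with the distinguished points of the pointed structure; the set $\mathcal{L}$ is a subset of $L_u(A,B)$ by the very definition of a pointed $(W_1,n)$ structure; and $\mathcal{N}=\bigcup_{j=1}^{n}\mathcal{N}_j$ is trivially contained in $\mathcal{V}=\{\eta-\theta:\eta,\theta\in\mathcal{L}\}$. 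Proposition \ref{straalBeg} therefore yields that $\mathcal{C}_{\mathcal{L}}=\operatorname{conv}\mathcal{N}$ is ray-wise bounded in $\mathcal{O}_{\mathcal{L}}$.

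Combining these two conclusions shows that $\mathcal{C}_{\mathcal{L}}$ is a unit ball for $\mathcal{O}_{\mathcal{L}}$ in the sense of Definition \ref{eenhBal}(4), whence Proposition \ref{normUitBal} gives that its gauge $\|\cdot\|_{W_1}$ is a norm, as required. There is no genuine obstacle at this stage, since the two substantive ingredients have already been dispatched: the semi unit ball property via the telescoping decomposition of Lemma \ref{teleskoop}, and ray-wise boundedness via the GNS-type construction combined with a Hahn--Banach functional in Proposition \ref{straalBeg}. The one small thing to verify in passing is that, when $B$ is taken to be a completion of the algebraic tensor product rather than the bare algebraic product, the multiplicative structure still makes $B$ a unital algebra in the sense that Proposition \ref{straalBeg} requires; this is automatic in the settings of interest, in particular in the C*-algebraic framework of Section \ref{AfdC*}.
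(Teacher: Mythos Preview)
Your proposal is correct and follows essentially the same route as the paper's own proof: invoke Proposition \ref{absorberend} for the semi unit ball property, Proposition \ref{straalBeg} for ray-wise boundedness, and conclude via Proposition \ref{normUitBal}. Your version is in fact slightly more careful, since you explicitly check the hypotheses of Proposition \ref{straalBeg} (that $A$ and $B$ are unital algebras with the tensor-product units as distinguished points, and that $\mathcal{N}\subset\mathcal{V}$) and flag the minor point that when $B$ is a completion one must know the multiplication extends so that $B$ remains a unital algebra; the paper's two-line proof leaves these implicit.
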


\begin{proof}
From Propositions \ref{absorberend} and \ref{straalBeg} we know that
$\mathcal{C}_{\mathcal{L}}$ is a unit ball for $\mathcal{O}_{\mathcal{L}}$, as
defined in Definition \ref{eenhBal}. By Proposition \ref{normUitBal} we are done.
\end{proof}

\begin{corollary}
In terms of Theorem \ref{W-norm}, the function $W_{1}:\mathcal{L}%
\times\mathcal{L}\rightarrow\mathbb{R}$ defined by%
\[
W_{1}(\eta,\theta)=\left\|  \eta-\theta\right\|  _{W_{1}}%
\]
is a metric on $\mathcal{L}$, called the \emph{Wasserstein distance of order
1} on $\mathcal{L}$ associated to the algebraic $(W_{1},n)$ structure
$(A,B,\nu,\mathcal{L})$.
\end{corollary}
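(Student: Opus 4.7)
The plan is to reduce the metric axioms for $W_1$ to the norm axioms for $\|\cdot\|_{W_1}$ already established in Theorem \ref{W-norm}. The only nontrivial preliminary step is to verify that the argument $\eta - \theta$ actually lies in the domain $\mathcal{O}_{\mathcal{L}}$ of the norm whenever $\eta, \theta \in \mathcal{L}$.

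First I would observe that for any $\eta, \theta \in \mathcal{L} \subset L_u(A,B)$, we have $\eta - \theta \in \operatorname{span}_{\mathbb{R}} \mathcal{L}$ trivially, and
\[
(\eta - \theta)(1_A) = \eta(1_A) - \theta(1_A) = 1_B - 1_B = 0,
\]
so $\eta - \theta \in \mathcal{O}_{\mathcal{L}}$ by (\ref{O}). Hence $W_1(\eta, \theta) = \|\eta - \theta\|_{W_1}$ is well-defined and finite.

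Next I would check the three metric axioms directly. Symmetry follows from the homogeneity of the norm: $W_1(\eta,\theta) = \|\eta - \theta\|_{W_1} = \|-(\theta - \eta)\|_{W_1} = \|\theta - \eta\|_{W_1} = W_1(\theta, \eta)$. For the triangle inequality, given a third $\rho \in \mathcal{L}$, note $\eta - \rho, \rho - \theta \in \mathcal{O}_{\mathcal{L}}$ by the same argument as above, so the subadditivity of the norm on $\mathcal{O}_{\mathcal{L}}$ gives
\[
W_1(\eta, \theta) = \|(\eta - \rho) + (\rho - \theta)\|_{W_1} \leq \|\eta - \rho\|_{W_1} + \|\rho - \theta\|_{W_1} = W_1(\eta, \rho) + W_1(\rho, \theta).
\]
Finally, for the identity of indiscernibles, non-negativity is built into the norm, and $W_1(\eta, \theta) = 0$ gives $\|\eta - \theta\|_{W_1} = 0$, which by Theorem \ref{W-norm} (a genuine norm, not merely a seminorm) forces $\eta - \theta = 0$, i.e.\ $\eta = \theta$.

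There is no real obstacle here; the entire content of the corollary is packaged into Theorem \ref{W-norm}, and the hard work was already done in Proposition \ref{absorberend} (absorption, via the telescoping decomposition of Lemma \ref{teleskoop}) and in Proposition \ref{straalBeg} (ray-wise boundedness, via the GNS-type construction). The only thing this corollary contributes is the mechanical observation that differences of pointed maps land in $\mathcal{O}_{\mathcal{L}}$, allowing the norm on $\mathcal{O}_{\mathcal{L}}$ to be transported to a metric on the (generally non-linear) set $\mathcal{L}$.
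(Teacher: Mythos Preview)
Your proof is correct and matches the paper's approach: the paper states this corollary with no proof at all, treating it as an immediate consequence of Theorem \ref{W-norm}, and you have simply written out the routine verification (including the check that $\eta-\theta\in\mathcal{O}_{\mathcal{L}}$) that the paper leaves implicit.
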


This theorem and its corollary are the main results up to this point and
completes the development without the presence of any form of positivity
assumed of the $A\rightarrow B$ maps in $\mathcal{L}$. So far the maps in
$\mathcal{L}$ were only assumed to be linear and unital. In the next section
we add complete positivity in a C*-algebraic framework.

\section{C*-algebras and complete positivity\label{AfdC*}}

For unital C*-algebras $A$ and $B$, let%
\[
K(A,B)
\]
be the set of all channels $E:A\rightarrow B$ from $A$ to $B$, where a
\emph{channel} is a unital completely positive linear (u.c.p.) map. The goal
of this section is to define a Wasserstein distance of order 1 on $K(A,B)$.
Conventionally the term Wasserstein distance applies to states, including
(integrals with respect to) probability measures in the classical case, i.e.,
the case $B=\mathbb{C}$. However, here we\ use the same terminology for
channels as well, as already indicated in Section \ref{AfdBuitL}.

It is a fairly straightforward matter to apply Theorem \ref{W-norm} in a
C*-algebraic framework, essentially taking $\mathcal{L}$ in the previous two
sections to be $K(A,B)$, though there are some technical points regarding this
which will be made clear in the proof of Theorem \ref{C*-W-norm} below. In
order to make this section as directly accessible as possible, however, we
formulate the definitions and results without reference to Sections
\ref{AfdGepunt} and \ref{AfdAlg}. References to these two sections will only
appear in the proof of Theorem \ref{C*-W-norm}. We start with some notation
and conventions.

The tensor products of C*-algebras are not merely algebraic, but are completed
in some norm. Specifically, the tensor products in this section are either all
minimal tensor products or all maximal tensor products. These tensor products
will simply be indicated by the symbol $\otimes$. Accordingly for tensor
products of maps on C*-algebras. Some standard background regarding complete
positivity and tensor products of C*-algebras can be reviewed in
\cite[Subsections II.6.9 and II.9.7]{Bl}. In particular we note that tensor
products of channels are again channels, for both the minimal and maximal
tensor products.

For easy reference, we highlight the main structure which will be used:

\begin{definition}
\label{n-C*-stelsel}Consider unital C*-algebras $A_{1},...,A_{n}$ and
$B_{1},...,B_{n}$,\ as well as a state $\nu_{j}$ on $B_{j}$ for $j=1,...,n$.
This will be referred to as an $n$\emph{-composite C*-system}, denoted
$(A_{j},B_{j},\nu_{j}:j=1,...,n)$, with the notation%
\[
A=A_{1}\otimes...\otimes A_{n}\text{ \ and \ }B=B_{1}\otimes...\otimes B_{n}%
\]
being implied.
\end{definition}

Given such an $n$-composite C*-system, the following notation, along the lines
of Section \ref{AfdBuitL}, will be used to set up the Wasserstein norm and
subsequent Wasserstein distance of order 1:%

\[
A_{\widehat{j}}:=A_{1}\otimes...\widehat{A}_{j}...\otimes A_{n}\text{ \ and
\ }B_{\widehat{j}}:=B_{1}\otimes...\widehat{B}_{j}...\otimes B_{n},
\]
and%
\[
B_{\leq j}:=B_{1}\otimes...\otimes B_{j}\text{ \ and \ }B_{\geq j}%
:=B_{j}\otimes...\otimes B_{n},
\]
for $j=1,...,n$. Keep in mind that as mentioned above, either all of these
tensor products are minimal, or all are maximal. A channel $E:A\rightarrow B$
can be reduced to a channel $E_{\widehat{j}}:A_{\widehat{j}}\rightarrow
B_{\widehat{j}}$ by%
\[
E_{\widehat{j}}:=(\operatorname*{id}\nolimits_{B_{\leq j-1}}\otimes\nu
_{j}\otimes\operatorname*{id}\nolimits_{B_{\geq j+1}})\circ E|_{A_{\widehat
{j}}}%
\]
for $j=1,...,n$. Here the restriction $E|_{A_{\widehat{j}}}:A_{\widehat{j}%
}\rightarrow B$ is defined via
\[
E|_{A_{\widehat{j}}}(a_{1}\otimes...\widehat{a}_{j}...\otimes a_{n}%
)=E(a_{1}\otimes...\otimes a_{j-1}\otimes1_{A_{j}}\otimes a_{j+1}%
\otimes...\otimes a_{n})
\]
for $a_{1}\in A_{1},...,a_{n}\in A_{n}$, where $1_{A_{j}}$ denotes the unit of
$A_{j}$.

Set
\[
\mathcal{O}_{A,B}:=\{\lambda\in\operatorname*{span}\nolimits_{\mathbb{R}%
}K(A,B):\lambda(1_{A})=0\},
\]%
\[
\mathcal{N}_{j}:=\{D-E:D,E\in K(A,B)\text{ such that }D_{\widehat{j}%
}=E_{\widehat{j}}\},
\]
and%
\[
\mathcal{N}:=\bigcup_{j=1}^{n}\mathcal{N}_{j}\text{ \ and \ }\mathcal{C}%
:=\operatorname*{conv}\mathcal{N}\text{.}%
\]
Here of course two channels $D,E\in K(A,B)$ are said to be \emph{neighbouring}
if $D_{\widehat{j}}=E_{\widehat{j}}$ for some $j\in\{1,...,n\}$.

This allows us to state the main result of this section, which is one of the
main results of the paper.

\begin{theorem}
\label{C*-W-norm}Let $(A_{j},B_{j},\nu_{j}:j=1,...,n)$ be an $n$-composite
C*-system. Then in both the minimal and maximal tensor product setup,
$\left\|  \cdot\right\|  _{W_{1}}$ defined by%
\[
\left\|  \lambda\right\|  _{W_{1}}=\inf\{t\geq0:\lambda\in t\mathcal{C}\}
\]
for all $\lambda\in\mathcal{O}_{A,B}$, is a norm on $\mathcal{O}_{A,B}$,
called the $W_{1}$\emph{ norm} associated to $(A_{j},B_{j},\nu_{j}%
:j=1,...,n)$.
\end{theorem}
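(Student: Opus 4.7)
The plan is to reduce the statement to Theorem \ref{W-norm} by exhibiting the given C*-algebraic data as an algebraic $(W_1,n)$ structure in the sense of Definition \ref{algWn}. Since unital C*-algebras are unital (associative) complex algebras, and each $\nu_j$ is a state on $B_j$ satisfying $\nu_j(1_{B_j})=1$, the raw ingredients match Definition \ref{algWn} once we take $\mathcal{L}=K(A,B)$. The particular tensor product $B=B_1\otimes\cdots\otimes B_n$ plays the role of the completed tensor product allowed in Subsection \ref{OndAfdOpset}, and in both the minimal and maximal setting the assumption required there — that algebraic tensor products of the $\nu_j$'s with identity maps extend continuously to $B$ — is a standard fact about states (hence c.p.\ maps) on C*-algebras. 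Under these extensions the reductions $E_{\widehat j}$, $E_{\leq j}$, $E_{\geq j}$ defined in Section \ref{AfdC*} coincide with the abstract reductions (\ref{jKomplRed})--(\ref{>=jRed}), so the spaces $\mathcal{O}_{A,B}$, $\mathcal{N}_j$, $\mathcal{N}$, $\mathcal{C}$ coincide with $\mathcal{O}_{\mathcal{L}}$, $\mathcal{C}_{\mathcal{L}}$ etc.

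The main point that needs verification is the closure condition (\ref{tensProdBehoud}): for every $D,E\in K(A,B)$ and every $j\in\{1,\dots,n-1\}$, the map $D_{\leq j}\odot E_{\geq j+1}$ lies in $K(A,B)$. I would first check that $D_{\leq j}$ and $E_{\geq j+1}$ are themselves channels. The restriction $D|_{A_{\leq j}}$ is the composition of $D$ with the unital $\ast$-homomorphism $A_{\leq j}\hookrightarrow A$, $a\mapsto a\otimes 1_{A_{\geq j+1}}$, so it is u.c.p.\ from $A_{\leq j}$ into $B$; composing with $\operatorname{id}_{B_{\leq j}}\otimes\nu_{>j}$, which is u.c.p.\ from $B$ onto $B_{\leq j}$ (again by the standard extension of states tensored with identity to min/max tensor products), yields the channel $D_{\leq j}$. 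The symmetric argument gives $E_{\geq j+1}\in K(A_{\geq j+1},B_{\geq j+1})$. The algebraic tensor product of u.c.p.\ maps is u.c.p.\ on the algebraic tensor product $A_{\leq j}\odot A_{\geq j+1}\subset A$, and by the well-known fact that u.c.p.\ maps extend along both minimal and maximal C*-tensor products (see the standard references cited in the paper), this extension is again u.c.p.\ from $A$ to $B$, i.e.\ an element of $K(A,B)$. This establishes (\ref{tensProdBehoud}).

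With these verifications $(A,B,\nu,K(A,B))$ is an algebraic $(W_1,n)$ structure, and the conclusion is immediate: by Theorem \ref{W-norm}, $\mathcal{C}_{\mathcal{L}}=\mathcal{C}$ is a unit ball for $\mathcal{O}_{\mathcal{L}}=\mathcal{O}_{A,B}$, and its gauge, which is exactly the function $\|\cdot\|_{W_1}$ in the statement, is a norm on $\mathcal{O}_{A,B}$. Both the minimal and maximal tensor product cases are covered uniformly because each step of the verification uses only that tensor products of c.p.\ maps extend to c.p.\ maps in both settings.

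The step I expect to require the most care is the extension of $D_{\leq j}\odot E_{\geq j+1}$ to the chosen C*-tensor product: one must confirm that complete positivity (not just positivity) survives the extension, so that the extended map genuinely lies in $K(A,B)$ rather than in some larger class of positive unital maps. Once this is cited from the standard theory of tensor products of c.p.\ maps, the rest of the proof is a direct bookkeeping translation into the framework of Sections \ref{AfdGepunt} and \ref{AfdAlg}.
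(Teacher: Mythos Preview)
Your approach is essentially the same as the paper's: reduce to Theorem \ref{W-norm} by verifying that the C*-data form an algebraic $(W_1,n)$ structure, with the key step being the verification of condition (\ref{tensProdBehoud}) via the fact that $D_{\leq j}$ and $E_{\geq j+1}$ are channels whose tensor product extends to a channel on $A$.

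There is one technical point you gloss over that the paper handles explicitly. In Subsection \ref{OndAfdOpset} and Definition \ref{algWn}, the domain $A$ of the pointed/algebraic structure is the \emph{algebraic} tensor product $A_\odot=A_1\odot\cdots\odot A_n$, so $\mathcal{L}$ must be a subset of $L_u(A_\odot,B)$. Your choice $\mathcal{L}=K(A,B)$ consists of maps whose domain is the \emph{completed} C*-tensor product, which creates a type mismatch when checking (\ref{tensProdBehoud}): $D_{\leq j}\odot E_{\geq j+1}$ is a priori only defined on $A_\odot$, while membership in $K(A,B)$ requires a map on the completed $A$. The paper resolves this by working instead with $\mathcal{L}=K(A,B)|_\odot:=\{E|_{A_\odot}:E\in K(A,B)\}$, noting that continuity of channels makes the restriction a bijection onto $K(A,B)$, and then verifying $(E|_{A_\odot})_{\leq j}\odot (D|_{A_\odot})_{\geq j+1}=(E_{\leq j}\otimes D_{\geq j+1})|_{A_\odot}\in K(A,B)|_\odot$. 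Once you insert this restriction step, your argument is exactly the paper's.
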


\begin{proof}
We are going to obtain this from the algebraic (rather than C*-algebraic)
setup of the previous section. Therefore we have to convert between algebraic
and completed tensor products as needed. Write $A_{\odot}=A_{1}\odot...\odot
A_{n}$ and define
\[
K(A,B)|_{\odot}=\{E|_{A_{\odot}}:E\in K(A,B)\}.
\]
Since a channel $E\in K(A,B)$ is necessarily continuous, it is uniquely
determined by its restriction $E|_{A_{\odot}}$, the latter being the usual
restriction of the map $E$ to the subset $A_{\odot}$ of $A$. Hence
$K(A,B)|_{\odot}$ and $K(A,B)$ are in one-to-one correspondence.

Also define
\[
A_{\odot,\leq j}=A_{1}\odot...\odot A_{j}\text{ \ and \ }A_{\odot,\geq
j}=A_{j}\odot...\odot A_{n},
\]
and%
\[
A_{\leq j}=A_{1}\otimes...\otimes A_{j}\text{ \ and \ }A_{\geq j}=A_{j}%
\otimes...\otimes A_{n},
\]
for $j=1,...,n$, as well as%

\begin{align*}
E_{\leq j}  &  =(\operatorname*{id}\nolimits_{B_{\leq j}}\otimes\nu_{>j})\circ
E|_{A_{\leq j}}:A_{\leq j}\rightarrow B_{\leq j}\\
E_{\geq j}  &  =(\nu_{<j}\otimes\operatorname*{id}\nolimits_{B_{\geq j}})\circ
E|_{A_{\geq j}}:A_{\geq j}\rightarrow B_{\geq j}%
\end{align*}
for any $E\in K(A,B)$.

Note that $(A_{j},B_{j},\nu_{j}:j=1,...,n)$ gives an algebraic $(W_{1},n)$
structure $(A_{\odot},B,\nu,K(A,B)|_{\odot})$ as in Definition \ref{algWn}.
This follows from the automatic continuity of tensor products of\ states and
identity maps in the C*-algebraic framework, along with the fact that
condition (\ref{tensProdBehoud}) is satisfied. The latter, in terms of
(\ref{<=jRed}) and (\ref{>=jRed}), being%
\[
\left(  E|_{A_{\odot}}\right)  _{\leq j}\odot\left(  D|_{A_{\odot}}\right)
_{\geq j+1}=E_{\leq j}|_{A_{\odot,\leq j}}\odot D_{\geq j+1}|_{A_{\odot,\geq
j+1}}\in K(A,B)|_{\odot}%
\]
for all $D,E\in K(A,B)$, where $|_{A_{\odot}}$, $|_{A_{\odot,\leq j}}$ and
$|_{A_{\odot,\geq j+1}}$ are the usual restrictions to the indicated algebraic
tensor products. This fact in turn is true, since $E_{\leq j}\odot D_{\geq
j+1}$ uniquely extends to an element $E_{\leq j}\otimes D_{\geq j+1}$ of
$K(A,B)$, as $E_{\leq j}$ and $D_{\geq j+1}$ themselves are channels (being
the composition of u.c.p. maps), hence indeed
\[
E_{\leq j}|_{A_{\odot,\leq j}}\odot D_{\geq j+1}|_{A_{\odot,\geq j+1}%
}=(E_{\leq j}\otimes D_{\geq j+1})|_{A_{\odot}}\in K(A,B)|_{\odot}.
\]

By Theorem \ref{W-norm} and the one-to-one correspondence between
$K(A,B)|_{\odot}$ and $K(A,B)$ mentioned above, we are done.
\end{proof}

This leads to the following key conclusion.

\begin{corollary}
Given an $n$-composite C*-system $(A_{j},B_{j},\nu_{j}:j=1,...,n)$, then in
both the minimal and the maximal tensor product setup we obtain a metric
$W_{1}$ on $K(A,B)$ defined by%
\[
W_{1}(D,E)=\left\|  D-E\right\|  _{W_{1}}%
\]
for all $D,E\in K(A,B)$, called the \emph{Wasserstein distance of order 1
associated to} $(A_{j},B_{j},\nu_{j}:j=1,...,n)$.
\end{corollary}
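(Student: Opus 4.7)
The plan is to derive the metric properties of $W_{1}$ directly from the fact that $\|\cdot\|_{W_{1}}$ is a norm on $\mathcal{O}_{A,B}$, as established in Theorem \ref{C*-W-norm}. The only preliminary issue is to confirm that the quantity $\|D-E\|_{W_{1}}$ even makes sense, i.e., that $D-E$ belongs to the domain $\mathcal{O}_{A,B}$ of the norm.

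First I would verify this domain condition. For any $D,E\in K(A,B)$, the difference $D-E$ is a real linear combination of elements of $K(A,B)$, so $D-E\in\operatorname{span}_{\mathbb{R}}K(A,B)$. Moreover, since channels are unital, $(D-E)(1_{A})=D(1_{A})-E(1_{A})=1_{B}-1_{B}=0$. Hence $D-E\in\mathcal{O}_{A,B}$ by the defining condition of $\mathcal{O}_{A,B}$, and $W_{1}(D,E)$ is well-defined. This also applies in both the minimal and maximal tensor product setup uniformly, so the corollary covers both cases at once.

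Next I would verify the four metric axioms, each of which follows immediately from the corresponding norm property provided by Theorem \ref{C*-W-norm}. Non-negativity $W_{1}(D,E)\geq 0$ is the non-negativity of the norm. The identity of indiscernibles reduces to $\|D-E\|_{W_{1}}=0\iff D-E=0\iff D=E$, which uses precisely that $\|\cdot\|_{W_{1}}$ is a \emph{norm} (not merely a seminorm), i.e., the ray-wise boundedness output of Proposition \ref{straalBeg}. Symmetry follows from $\|{-\lambda}\|_{W_{1}}=\|\lambda\|_{W_{1}}$ applied to $\lambda=D-E$. The triangle inequality follows from writing $D-F=(D-E)+(E-F)$ in $\mathcal{O}_{A,B}$ and applying the subadditivity of the norm.

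There is no real obstacle here: this is the standard and routine passage from a norm on a vector space to the induced metric on a subset of that space, with the only nontrivial verification being that differences of channels land in $\mathcal{O}_{A,B}$, which is immediate from unitality. All the substantive work has already been absorbed into Theorem \ref{C*-W-norm} and, through it, into the abstract machinery of Sections \ref{AfdGepunt} and \ref{AfdAlg}.
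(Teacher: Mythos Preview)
Your proposal is correct and matches the paper's approach: the paper states this corollary without proof, treating it as immediate from Theorem \ref{C*-W-norm}, and your argument spells out exactly the routine verification (that $D-E\in\mathcal{O}_{A,B}$ by unitality, and that the metric axioms follow from the norm axioms) that the paper leaves implicit.
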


Note that Section \ref{AfdBuitL} emerges as a special case of this section,
albeit directly in the Heisenberg picture, by simply setting
\[
A_{j}=M_{q_{j}}\text{ \ and \ }B_{j}=M_{r_{j}}%
\]
and letting $\nu_{j}$ be the normalized trace on $B_{j}$.

Another special case is $B_{1}=...=B_{n}=\mathbb{C}$, with $\nu_{1}%
,...,\nu_{n}$ becoming trivial and irrelevant, but with general unital
C*-algebras $A_{1},...,A_{n}$. In this case $K(A,B)$ is the set of all states
on $A=A_{1}\otimes...\otimes A_{n}$, hence $W_{1}$ is now the Wasserstein
distance of order 1 between states on $A$. For $A_{1}=...=A_{n}=M_{d}$ this
reduces to the (quantum) Wasserstein distance of order 1 studied in
\cite{DMTL}, as can be seen from Section \ref{AfdBuitL}, keeping in mind that
a state on $A$ is exactly a normalized positive linear functional $\mu$, which
in this finite dimensional case can be uniquely represented as $\mu
(a)=\operatorname*{Tr}(\rho a)$ for all $a\in A$ in terms of some density
matrix $\rho$.

We have focussed on the composite system aspect of the framework. From a
single system point of view, note that in the finite dimensional setup for
states, and setting $n=1$, for any states $\psi$ and $\omega$ on $A_{1}=M_{d}%
$, we have
\[
W_{1}(\psi,\omega)=\frac{1}{2}\operatorname*{Tr}|\rho_{\psi}-\rho_{\omega}|,
\]
with $\rho_{\psi}$ and $\rho_{\omega}$ being the density matrices representing
$\psi$ and $\omega$ respectively, according to \cite[Proposition 2]{DMTL}. In
the general C*-algebraic case for states (i.e., $B=\mathbb{C}$) with $n=1$, we
can therefore view $W_{1}$ as an abstract version of the trace distance
between states, despite the fact that no canonical trace is specified on
$A_{1}$ in this setup.
Refer to Section \ref{AfdEinde} for further remarks related to this.


This ends our development of the Wasserstein distance of order 1. Next we
study its behaviour in relation to subsystems.

\section{Subsystems and additivity\label{AfdSubStelsels}}

A core idea behind $W_{1}$ is that it is built to reflect the composite
structure of systems. It is therefore natural to study its basic properties in
relation to subsystems of the composite systems, i.e., smaller tensor
products. This is what is done in the current section, first in terms of the
$W_{1}$ seminorms obtained for pointed $(W_{1},n)$-structures in Section
\ref{AfdGepunt}, and subsequently for the C*-algebraic framework of the
previous section. We focus on the additivity of $W_{1}$ with respect to tensor
products (see Theorems \ref{gepunteAdEiensk} and \ref{C*AdEiensk}) and the
resulting stability of $W_{1}$ (see Corollary \ref{stab}). The additivity
results of this section generalize those of \cite[Section IV.C]{DMTL}, though
the techniques to achieve them are necessarily different, as \cite{DMTL} makes
use of trace norms, which are not available in our context. The reader who
wants to see the main results in the C*-algebraic context, can turn directly
to Subsection \ref{OndAfdC*ad}, but the proofs and some notation rely on
Subsections \ref{OndAfdGepuntSubstr}, \ref{OndAfdRedStruk} and \ref{OndAfdAd}.

\subsection{Pointed $W_{1}$ substructures and their $W_{1}$
seminorms\label{OndAfdGepuntSubstr}}


Let $(A,B,\nu,\mathcal{L})$ be a pointed $(W_{1},n)$ structure as defined in
Definition \ref{gepunteStruk}, again writing $\mathcal{C}_{\mathcal{L}}$ for
the semi unit ball in $\mathcal{O}_{\mathcal{L}}$, as given by Proposition
\ref{absorberend}. For simplicity of notation, particularly in the following
subsections, however, we continue to write $\mathcal{N}_{j}$ and $\mathcal{N}$
as in Section \ref{AfdGepunt}, rather than, say, $\mathcal{N}_{\mathcal{L},j}$
and $\mathcal{N}_{\mathcal{L}}$. To describe the related subsystems, we need
some further notation.

Write
\[
\lbrack n]:=\{1,...,n\},
\]
and $\mathcal{P}_{n}$ for the collection of non-empty proper subsets $J$ of
$[n]$; by ``proper'' we mean that $J\neq\lbrack n]$. The complement of
$J\in\mathcal{P}_{n}$ will be written as
\[
J^{\prime}:=[n]\backslash J=\{j\in\lbrack n]:j\notin J\}\in\mathcal{P}_{n}.
\]

We define
\[
A_{J}=\bigodot_{j\in J}A_{j}\text{ \ and \ }B_{J}=\bigotimes_{j\in J}B_{j}%
\]
for any $J\in\mathcal{P}_{n}$, with the same tensor product conventions as in
Subsection \ref{OndAfdOpset}. Here the order of the $A_{j}$'s in $A_{J}$ is
taken to be the same as in $A=A_{1}\odot...\odot A_{n}$, for example
$A_{\{2,5\}}=A_{2}\odot A_{5}$ rather than $A_{5}\odot A_{2}$. Similarly for
$B_{J}$ and correspondingly for $\nu_{J}:=(\nu_{j})_{j\in J}$. Elementary
tensors in $A_{J}$ can be denoted as%
\[
\otimes_{j\in J}a_{j}%
\]
for $a_{j}\in A_{j}$, and similarly for $B_{J}$, for any $J\in\mathcal{P}_{n}%
$. In particular $A_{J}$ is a pointed space with distinguished point%
\[
u_{A_{J}}:=\otimes_{j\in J}u_{A_{j}}.
\]
Similarly for $B_{J}$.

We need to define corresponding reductions of maps. Given $\eta\in L(A,B)$,
its \emph{reduction}
\[
\eta^{J}:=\left(  \bigotimes_{j=1}^{n}\varphi_{j}\right)  \circ\eta|_{A_{J}%
}\in L(A_{J},B_{J})
\]
to $J\in\mathcal{P}_{n}$ (or over $J^{\prime}$) is defined as an obvious
generalization of the reductions considered in Subsection \ref{OndAfdOpset},
where
\[
\varphi_{j}=\left\{
\begin{array}
[c]{ll}%
\operatorname*{id}\nolimits_{B_{j}} & \text{for }j\in J\text{ }\\
\nu_{j} & \text{for }j\in J^{\prime}%
\end{array}
\right.
\]
and with $\eta|_{A_{J}}:A_{J}\rightarrow B$ given via%
\[
\eta|_{A_{J}}(\otimes_{j\in J}a_{j})=\eta(a_{1}\otimes...\otimes a_{n})
\]
for $\otimes_{j\in J}a_{j}$ an elementary tensor in $A_{J}$, by setting
$a_{j}=u_{A_{j}}$ for $j\in J^{\prime}$. For any subset $\mathcal{S}$ of
$L(A,B)$, let%
\begin{equation}
\mathcal{S}^{J}:=\{\eta^{J}:\eta\in\mathcal{S}\}. \label{S^J}%
\end{equation}
Note that one can of course reduce any $\theta\in\mathcal{S}^{J}$ to
$\theta^{I}$ for any non-empty proper $I\subset J$, by the obvious adjustment
of the method above to this case.

Then we have the next basic fact in terms of Definition \ref{gepunteStruk}.

\begin{proposition}
Let $(A,B,\nu,\mathcal{L})$ be a pointed $(W_{1},n)$ structure and consider
any $J\in\mathcal{P}_{n}$. The pointed spaces $A_{j}$ and $B_{j}$ for $j\in
J$, along with $\nu_{J}=(\nu_{j})_{j\in J}$ and $\mathcal{L}^{J}$, then form a
pointed $(W_{1},|J|)$ structure $(A_{J},B_{J},\nu_{J},\mathcal{L}^{J})$,
called a \emph{pointed }$W_{1}$\emph{ substructure} of $(A,B,\nu,\mathcal{L})$.
\end{proposition}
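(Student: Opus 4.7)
The plan is to check the three ingredients of Definition~\ref{gepunteStruk} for the candidate tuple $(A_J,B_J,\nu_J,\mathcal{L}^J)$, using the corresponding properties of the ambient pointed $(W_1,n)$ structure $(A,B,\nu,\mathcal{L})$. Write $J=\{j_1<j_2<\ldots<j_m\}$ with $m=|J|$, so that $A_J=A_{j_1}\odot\ldots\odot A_{j_m}$ and $B_J=B_{j_1}\otimes\ldots\otimes B_{j_m}$, with $u_{A_J}=\otimes_{j\in J}u_{A_j}$ and $u_{B_J}=\otimes_{j\in J}u_{B_j}$. The underlying composite setup of Subsection~\ref{OndAfdOpset} for $|J|$ systems is immediate: the functionals $\nu_j$ ($j\in J$) already satisfy $\nu_j(u_{B_j})=1$, and any continuity of tensor products of these $\nu_j$'s with identity maps on sub-tensors of the $B_j$'s (required if $B_J$ is taken in a completed sense) is inherited from the corresponding continuity postulated in the ambient structure.

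Next I would check that $\mathcal{L}^J\subset L_u(A_J,B_J)$: for $\eta\in\mathcal{L}$, unfolding the definition of $\eta^J$ on $u_{A_J}$ reduces, by $\eta(u_A)=u_B$ and $\nu_j(u_{B_j})=1$ for $j\in J^{\prime}$, to $\eta^J(u_{A_J})=u_{B_J}$.

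The substantive step is to verify condition~(\ref{tensProdBehoud}) for the substructure, i.e., that for all $\eta,\theta\in\mathcal{L}$ and every $k\in\{1,\ldots,m-1\}$,
\[
(\theta^J)_{\leq k}\odot(\eta^J)_{\geq k+1}\in\mathcal{L}^J,
\]
where the reductions $(\cdot)_{\leq k}$ and $(\cdot)_{\geq k+1}$ are now taken inside the pointed $(W_1,m)$ setup on $(A_J,B_J,\nu_J)$. My candidate lift is
\[
\zeta:=\theta_{\leq j_k}\odot\eta_{\geq j_k+1},
\]
which belongs to $\mathcal{L}$ by condition~(\ref{tensProdBehoud}) applied in the ambient structure at the pivot $j_k$ (note $1\leq j_k\leq n-1$, since $J_{\geq k+1}$ is non-empty). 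It then suffices to establish the identity
\[
\zeta^J=(\theta^J)_{\leq k}\odot(\eta^J)_{\geq k+1}.
\]

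I would prove this identity by evaluating both sides on an elementary tensor $\otimes_{j\in J}a_j\in A_J$. On the left one extends by $u_{A_l}$ at indices $l\in J^{\prime}$, applies $\zeta=\theta_{\leq j_k}\odot\eta_{\geq j_k+1}$, and then applies $\nu_l$ to the $B_l$-factor for each $l\in J^{\prime}$. Splitting $J^{\prime}=(J^{\prime}\cap\{1,\ldots,j_k\})\cup(J^{\prime}\cap\{j_k+1,\ldots,n\})$ along the pivot $j_k$, the application of the $\nu_l$'s separates into the two halves; by the definitions of $\theta_{\leq j_k}$ and $\eta_{\geq j_k+1}$ (cf.~(\ref{<=jRed}) and (\ref{>=jRed})) each half is a composition of partial $\nu$-applications on the appropriate $B$-factors, and a direct unwinding shows that these halves equal $(\theta^J)_{\leq k}$ and $(\eta^J)_{\geq k+1}$ respectively. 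The main obstacle is purely bookkeeping: one has to check carefully that the two layers of partial applications of $\nu_l$ commute with each other and with the restrictions at $u_{A_j}$, which is essentially the commutation already exploited in the proof of Lemma~\ref{teleskoop} and uses only the universal property of tensor products (plus the unique continuous extensions when $B$ is completed). Once the identity is in place, $(\theta^J)_{\leq k}\odot(\eta^J)_{\geq k+1}=\zeta^J\in\mathcal{L}^J$ by definition of $\mathcal{L}^J$, completing the verification.
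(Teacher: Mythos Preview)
Your proof is correct and follows essentially the same approach as the paper's. The paper's proof is a terse sketch (``the analogue of (\ref{tensProdBehoud}) is easily seen to hold in this context by simply reducing it to $L(A_{J},B_{J})$, \ldots\ confirmed by a direct calculation similar to that in Lemma~\ref{teleskoop}'s proof''), while you have explicitly written out the lift $\zeta=\theta_{\leq j_k}\odot\eta_{\geq j_k+1}\in\mathcal{L}$ and the identity $\zeta^{J}=(\theta^{J})_{\leq k}\odot(\eta^{J})_{\geq k+1}$ that the paper leaves to the reader; the bookkeeping you describe (that both sides amount to inserting $u_{A_l}$ and applying $\nu_l$ exactly at the indices in $[n]\setminus J_{\leq k}$, respectively $[n]\setminus J_{\geq k+1}$) is precisely the content of that direct calculation.
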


\begin{proof}
Clearly $\eta^{J}(u_{A_{J}})=u_{B_{J}}$ for every $\eta\in\mathcal{L}$ by
$\eta^{J}$'s definition, hence $\mathcal{L}^{J}\subset L(A_{J},B_{J})$, the
tensor products of $\nu_{j}$'s and identity maps restrict those of
$(A,B,\nu,\mathcal{L})$ and are therefore still continuous, while the analogue
of (\ref{tensProdBehoud}) is easily seen to hold in this context by simply
reducing it to $L(A_{J},B_{J})$. The latter is confirmed by a direct
calculation similar to that in Lemma \ref{teleskoop}'s proof.
\end{proof}

By Corollary \ref{W-seminorm}, the pointed $(W_{1},|J|)$ structure
$(A_{J},B_{J},\nu_{J},\mathcal{L}^{J})$ provides us with a $W_{1}$ seminorm on%
\begin{equation}
\mathcal{O}_{\mathcal{L}^{J}}:=\{\lambda\in\operatorname*{span}%
\nolimits_{\mathbb{R}}\mathcal{L}^{J}:\lambda(u_{A_{J}})=0\} \label{O_LJ}%
\end{equation}
for every $J\in\mathcal{P}_{n}$, still denoted as
\[
\left\|  \cdot\right\|  _{W_{1}},
\]
as well as the resulting pseudometric $W_{1}$ on $\mathcal{L}^{J}$. This is
simply a case of Section \ref{AfdGepunt}, but now of course using%
\begin{equation}
\mathcal{N}_{J,j}:=\{\eta-\theta:\eta,\theta\in\mathcal{L}^{J}\text{ such that
}\eta_{\widehat{j}}=\theta_{\widehat{j}}\}\text{ \ for \ }j\in J \label{N_Jj}%
\end{equation}
instead of $\mathcal{N}_{1},...,\mathcal{N}_{n}$, where the latter led to the
$W_{1}$ seminorm on $\mathcal{O}_{\mathcal{L}}$. Here
\[
\eta_{\widehat{j}}:=\eta^{J\backslash\{j\}}%
\]
for $\eta\in\mathcal{L}^{J}$, which can equivalently be defined by
(\ref{jKomplRed}), but using $A_{i}$, $B_{i}$ and $\nu_{i}$ only for $i\in J$
when setting up Section \ref{AfdGepunt}, rather than for the entire range
$i=1,...,n$. The semi unit ball leading to this $W_{1}$ seminorm is
\begin{equation}
\mathcal{C}_{\mathcal{L}^{J}}:=\operatorname*{conv}\mathcal{N}_{J},
\label{substrukC}%
\end{equation}
where
\begin{equation}
\mathcal{N}_{J}:=\bigcup_{j\in J}\mathcal{N}_{J,j}. \label{substrukN}%
\end{equation}

\subsection{Reducible pointed $W_{1}$ structures\label{OndAfdRedStruk}}

A natural question is whether the semi unit ball $\mathcal{C}_{\mathcal{L}%
^{J}}$ above can be obtained as the reduction $\mathcal{C}_{\mathcal{L}}^{J}$
of the original semi unit ball $\mathcal{C}_{\mathcal{L}}$. Similarly for
$\mathcal{O}_{\mathcal{L}^{J}}$ and the sets $\mathcal{N}_{J,j}$. These
questions will in fact become relevant in the next subsection, when we reach
the main goal of this section, namely to prove additivity properties.

In order to answer these questions positively, assumptions beyond those in
Definition \ref{gepunteStruk} need to be made. We note that these assumptions
will automatically be satisfied in the C*-algebraic framework.

As in the previous subsection we consider a pointed $(W_{1},n)$ structure
$(A,B,\nu,\mathcal{L})$ and any $J\in\mathcal{P}_{n}.$

To avoid any mismatches and ambiguities, we always need to preserve the
ordering of the $A_{j}$'s in any tensor product of them. Similarly for the
$B_{j}$'s. Therefore the notation%
\[
A_{I}\vec{\odot}A_{J}:=A_{I\cup J}\text{ \ and \ \ }B_{I}\vec{\otimes}%
B_{J}:=B_{I\cup J}%
\]
will be used for any $I,J\in\mathcal{P}_{n}$ with no points in common, i.e.,
$I\cap J=\varnothing$. But then the tensor product of maps $\eta\in
L(A_{I},B_{I})$ and $\theta\in L(A_{J},B_{J})$ for $I,J\in\mathcal{P}_{n}$
with $I\cap J=\varnothing$ need to be defined correspondingly. This is indeed
possible. Note that transpositions of adjacent $A_{j}$'s in any tensor product
of $A_{j}$'s are linear bijections, compositions of which in particular give
us a natural unique well defined \emph{ordering map}%
\[
\alpha_{IJ}:A_{I}\odot A_{J}\rightarrow A_{I}\vec{\odot}A_{J},
\]
such that%
\[
\alpha_{IJ}((\otimes_{i\in I}a_{i})\otimes(\otimes_{j\in J}a_{j}%
))=\otimes_{j\in I\cup J}a_{j}%
\]
for arbitrary $a_{j}\in A_{j}$. As a simple example to clarify the meaning of
this, suppose $I=\{1,3,5\}$ and $J=\{2,4\}$, then for $a_{j}\in A_{j}$, we
have $\alpha_{IJ}(a_{1}\otimes a_{3}\otimes a_{5}\otimes a_{2}\otimes
a_{4})=a_{1}\otimes a_{2}\otimes a_{3}\otimes a_{4}\otimes a_{5}$. Note that
$\alpha_{IJ}$ can be viewed as a \emph{pointed space isomorphism}, i.e., a
bijection $\alpha_{IJ}\in L_{u}(A_{I}\odot A_{J},A_{I}\vec{\odot}A_{J})$.

Similarly we have the ordering map
\[
\beta_{IJ}^{0}:B_{I}\odot B_{J}\rightarrow B_{I}\vec{\odot}B_{J}.
\]
Clearly $\beta_{IJ}^{0}\in L_{u}(B_{I}\odot B_{J},B_{I}\vec{\odot}B_{J})$.

However, if the tensor products $B_{J}$ for $J\subset\lbrack n]$ are indeed
completed in some norm, we need to \emph{assume} that $\beta_{IJ}^{0}$ is
continuous in this norm, and therefore extends uniquely to a continuous
bijection%
\[
\beta_{IJ}\in L_{u}(B_{I}\otimes B_{J},B_{I}\vec{\otimes}B_{J}),
\]
which is the uniquely defined pointed space isomorphism serving as the
ordering map on $B_{I}\otimes B_{J}$.

Given this, we can define%
\[
\eta\vec{\odot}\theta:A_{I}\vec{\odot}A_{J}\rightarrow B_{I}\vec{\otimes}B_{J}%
\]
as%
\[
\eta\vec{\odot}\theta:=\beta_{IJ}\circ(\eta\odot\theta)\circ\alpha_{IJ}^{-1}.
\]
It has the following expected property.

\begin{proposition}
In terms of the notation and assumptions so far in this subsection,
\[
\eta\vec{\odot}\theta=\theta\vec{\odot}\eta,
\]
for all $\eta\in L(A_{I},B_{I})$ and $\theta\in L(A_{J},B_{J})$, where
$I,J\in\mathcal{P}_{n}$ with $I\cap J=\varnothing$.
\end{proposition}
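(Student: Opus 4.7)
The plan is to verify the identity on elementary tensors of $A_{I\cup J}$, and then extend it to all of $A_{I\cup J}$ by linearity (and, if $B$ is completed, by the continuity of the ordering maps $\beta_{IJ}$ and $\beta_{JI}$ that has been built into the assumptions). The crucial point is that, since $I\cap J=\varnothing$, both sides have the same domain $A_I\vec{\odot}A_J = A_{I\cup J} = A_J\vec{\odot}A_I$ and the same codomain $B_I\vec{\otimes}B_J = B_{I\cup J} = B_J\vec{\otimes}B_I$, so equality is at least a meaningful assertion.

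First I would fix an elementary tensor $\otimes_{k\in I\cup J}a_{k}\in A_{I\cup J}$ and chase it through the definition of $\eta\vec{\odot}\theta=\beta_{IJ}\circ(\eta\odot\theta)\circ\alpha_{IJ}^{-1}$. By the defining property of $\alpha_{IJ}$, its inverse sends this elementary tensor to $(\otimes_{i\in I}a_{i})\otimes(\otimes_{j\in J}a_{j})$; applying $\eta\odot\theta$ yields $\eta(\otimes_{i\in I}a_{i})\otimes\theta(\otimes_{j\in J}a_{j})\in B_{I}\odot B_{J}$; and $\beta_{IJ}$ then reorders the factors into the $I\cup J$-ordering of $B_{I\cup J}$, placing $\eta$'s output in the $B_i$-slots for $i\in I$ and $\theta$'s output in the $B_j$-slots for $j\in J$.

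Symmetrically, chasing $\otimes_{k\in I\cup J}a_{k}$ through $\theta\vec{\odot}\eta=\beta_{JI}\circ(\theta\odot\eta)\circ\alpha_{JI}^{-1}$, the map $\alpha_{JI}^{-1}$ delivers $(\otimes_{j\in J}a_{j})\otimes(\otimes_{i\in I}a_{i})$, then $\theta\odot\eta$ gives $\theta(\otimes_{j\in J}a_{j})\otimes\eta(\otimes_{i\in I}a_{i})$, and $\beta_{JI}$ reorders this into the same $I\cup J$-ordering of $B_{I\cup J}$. In both cases the final elementary tensor in $B_{I\cup J}$ has exactly the same factor sitting in each slot $k$ — namely the corresponding factor from $\eta(\otimes_{i\in I}a_{i})$ if $k\in I$, and from $\theta(\otimes_{j\in J}a_{j})$ if $k\in J$ — so the two outputs coincide. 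Extending linearly over $A_I\odot A_J$ (and by continuity to the completion when a norm is in force, using that $\alpha_{IJ}$, $\alpha_{JI}$, $\beta_{IJ}$, $\beta_{JI}$ are all continuous pointed space isomorphisms) then gives $\eta\vec{\odot}\theta=\theta\vec{\odot}\eta$ on all of $A_{I\cup J}$.

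The main obstacle here is purely notational bookkeeping: one has to keep the composition of transpositions encoded in $\alpha_{IJ}$ and $\beta_{IJ}$ straight and be careful that the assumed continuity of $\beta_{IJ}^{0}$ for every pair $(I,J)$ legitimizes the passage from elementary tensors to the completed space in both the $IJ$- and $JI$-orderings. Beyond that, there is no real content — the statement is essentially the naturality of the symmetry isomorphism of tensor products restricted to disjoint index sets.
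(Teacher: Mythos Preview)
Your approach is essentially the same as the paper's: evaluate on elementary tensors of $A_{I\cup J}$, observe that the two ordering maps $\beta_{IJ}$ and $\beta_{JI}$ send $c\otimes d$ and $d\otimes c$ respectively to the same element of $B_{I\cup J}$, and conclude by linearity.

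One small correction in where the continuity argument belongs. In this subsection $A=A_1\odot\cdots\odot A_n$ is purely algebraic, so $A_{I\cup J}$ is spanned by elementary tensors and linearity alone handles the domain side; there is no completion of $A$ to extend to. The place where continuity is genuinely needed is on the codomain side: for an elementary input, the intermediate outputs $c=\eta(\otimes_{i\in I}a_i)\in B_I$ and $d=\theta(\otimes_{j\in J}a_j)\in B_J$ are arbitrary elements of the (possibly completed) spaces $B_I$ and $B_J$, so your ``slot'' description of $\beta_{IJ}(c\otimes d)$ is not literally available. The paper handles exactly this point by approximating $c$ and $d$ by sequences $(c_l)$, $(d_l)$ in the algebraic tensor products $\odot_{i\in I}B_i$ and $\odot_{j\in J}B_j$, noting $\beta_{IJ}(c_l\otimes d_l)=\beta_{JI}(d_l\otimes c_l)$ at the algebraic level, and then passing to the limit via the assumed continuity of $\beta_{IJ}$ and $\beta_{JI}$. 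Your final paragraph shows you are aware of this issue; it just needs to be invoked at this step rather than as an extension over the domain.
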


\begin{proof}
Note that by the definition of the ordering maps, we have $\beta_{IJ}(c\otimes
d)=\beta_{JI}(d\otimes c)$ for elements in the algebraic tensor products,
$c\in\odot_{i\in I}B_{i}$ and $d\in\odot_{j\in J}B_{j}$. Now, for arbitrary
$a_{j}\in A_{j}$,
\begin{align*}
(\eta\odot\theta)\circ\alpha_{IJ}^{-1}(\otimes_{j\in I\cup J}a_{j})  &
=\eta(\otimes_{i\in I}a_{i})\otimes\theta(\otimes_{j\in J}a_{j})\\
(\theta\odot\eta)\circ\alpha_{JI}^{-1}(\otimes_{j\in I\cup J}a_{j})  &
=\theta(\otimes_{j\in J}a_{j})\otimes\eta(\otimes_{i\in I}a_{i})
\end{align*}
Approximate $\eta(\otimes_{i\in I}a_{i})$ and $\theta(\otimes_{j\in J}a_{j})$
by sequences $(c_{l})$ and $(d_{l})$ in the algebraic tensor products
$\odot_{i\in I}B_{i}$ and $\odot_{j\in J}B_{j}$ respectively. Since
$\beta_{IJ}(c_{l}\otimes d_{l})=\beta_{JI}(d_{l}\otimes c_{l})$ and
$\beta_{IJ}$ and $\beta_{JI}$ are assumed to be continuous, it follows that%
\[
\beta_{IJ}\circ(\eta\odot\theta)\circ\alpha_{IJ}^{-1}(\otimes_{j\in I\cup
J}a_{j})=\beta_{JI}\circ(\theta\odot\eta)\circ\alpha_{JI}^{-1}(\otimes_{j\in
I\cup J}a_{j}),
\]
as required.
\end{proof}

We also need to strengthen (\ref{tensProdBehoud}) in Lemma \ref{teleskoop} to
the following: \emph{Assume} that
\begin{equation}
\eta^{J}\vec{\odot}\theta^{J^{\prime}}\in\mathcal{L} \label{uitbrVanL^J}%
\end{equation}
for all $\eta,\theta\in\mathcal{L}$ and $J\in\mathcal{P}_{n}$.

\begin{definition}
\label{redGepuntStruk}The pointed $(W_{1},n)$ structure $(A,B,\nu
,\mathcal{L})$ is called \emph{reducible} if both the above mentioned
assumptions are indeed satisfied, namely (\ref{uitbrVanL^J}) and the existence
of the continuous ordering maps $\beta_{IJ}$ in the case of completed $B$.
\end{definition}

Note that for a non-empty $I\subset J^{\prime}$, it follows from
(\ref{uitbrVanL^J}) and the definition of reduction in the previous
subsection, that
\begin{equation}
\eta^{J}\vec{\odot}\theta^{I}=(\eta^{J}\vec{\odot}\theta^{J^{\prime}})^{J\cup
I}\in\mathcal{L}^{J\cup I}. \label{algUitbrVanL^J}%
\end{equation}
In particular, this gives the following simple result.

\begin{proposition}
\label{redSubstr}If the pointed $(W_{1},n)$ structure $(A,B,\nu,\mathcal{L})$
is reducible, then so is its pointed $W_{1}$ substructures.
\end{proposition}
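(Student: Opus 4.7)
The plan is to verify the two clauses of Definition \ref{redGepuntStruk} for the substructure $(A_J,B_J,\nu_J,\mathcal{L}^J)$ in turn: first the existence of continuous ordering maps, and second the tensor-product closure condition (\ref{uitbrVanL^J}). For the ordering maps clause there is almost nothing to do: whenever $I_1,I_2\subset J$ are disjoint and non-empty proper subsets of $J$, the associated ordering map $B_{I_1}\otimes B_{I_2}\to B_{I_1}\vec{\otimes}B_{I_2}$ is literally the same map $\beta_{I_1 I_2}$ that appears in the original structure (since its defining action on elementary tensors depends only on the $B_j$ for $j\in I_1\cup I_2$), and so its continuity is inherited from the reducibility of $(A,B,\nu,\mathcal{L})$.

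The substantive clause is (\ref{uitbrVanL^J}) for the substructure. Take arbitrary $\eta',\theta'\in \mathcal{L}^J$ and a non-empty proper subset $I\subset J$, and write $\eta'=\eta^J$, $\theta'=\theta^J$ for some $\eta,\theta\in\mathcal{L}$. The key technical observation is the transitivity of reduction: for any non-empty $I\subset J$,
\[
(\eta^J)^{I}=\eta^{I},
\]
which follows immediately from the definition in Subsection \ref{OndAfdGepuntSubstr} because the composed reduction uses $\operatorname{id}_{B_j}$ for $j\in I$ and $\nu_j$ for $j\in (J\setminus I)\cup J'=I'$. Applying this transitivity to both $\eta'$ and $\theta'$ reduces the required statement to
\[
\eta^{I}\,\vec{\odot}\,\theta^{J\setminus I}\in\mathcal{L}^J.
\]

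To get this last inclusion, I would invoke (\ref{algUitbrVanL^J}) from the reducibility of the original $(A,B,\nu,\mathcal{L})$. Take the index set there to be $K=I$ (which lies in $\mathcal{P}_n$ since $I$ is a non-empty proper subset of $J\subsetneq[n]$), and take the second subset to be $J\setminus I$, which is non-empty because $I$ is a proper subset of $J$ and is contained in $I'=[n]\setminus I$. Then (\ref{algUitbrVanL^J}) yields $\eta^{I}\vec{\odot}\theta^{J\setminus I}\in\mathcal{L}^{I\cup(J\setminus I)}=\mathcal{L}^{J}$, exactly as required.

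The only thing that needs real care is the indexing/bookkeeping: one must make sure that $I$ and $J\setminus I$ fit into the hypothesis of (\ref{algUitbrVanL^J}) (i.e.\ that $I\in\mathcal{P}_n$ and $J\setminus I$ is a non-empty subset of $I'$), and that the transitivity of reduction is applied on the correct side of the tensor product. Neither is genuinely difficult; the mild subtlety is that one is lifting properties of $\mathcal{L}^J$ back to properties of $\mathcal{L}$ via the ``take any preimages $\eta,\theta$'' step, so it is important that the conclusion $\eta^I\vec{\odot}\theta^{J\setminus I}\in\mathcal{L}^J$ does not depend on the particular preimages chosen, which is automatic since the inclusion is simply a statement about the result lying in $\mathcal{L}^J$.
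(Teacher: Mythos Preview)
Your argument is correct and is essentially the paper's own proof spelled out in detail: the paper derives Proposition \ref{redSubstr} directly from (\ref{algUitbrVanL^J}), which is exactly the step you invoke (with $I$ and $J\setminus I$ in the roles of the two index sets), and your additional remarks on the inheritance of the ordering maps $\beta_{I_1 I_2}$ and the transitivity $(\eta^J)^I=\eta^I$ merely make explicit what the paper leaves implicit. The bookkeeping you flag as needing care is handled correctly.
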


The reason for the terminology ``reducible'' in Definition
\ref{redGepuntStruk}, is that the semi unit ball of $(A_{J},B_{J},\nu
_{J},\mathcal{L}^{J})$ is then obtained from that of $(A,B,\nu,\mathcal{L})$
by reduction. This and related facts are shown below.

In terms of the setup and notation of this section, we have the following.

\begin{lemma}
\label{N_J,jUitbrei}Assume that $(A,B,\nu,\mathcal{L})$ is a reducible pointed
$(W_{1},n)$ structure and consider any $J\in\mathcal{P}_{n}$. For all
$\lambda\in\mathcal{N}_{J,j}$ with $j\in J$, and $\zeta\in\mathcal{L}%
^{J^{\prime}}$, it follows that $\lambda\vec{\odot}\zeta\in\mathcal{N}_{j}$.
\end{lemma}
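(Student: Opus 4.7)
The plan is to unfold the definitions of $\mathcal{N}_{J,j}$ and $\mathcal{N}_j$ in parallel and reduce the claim to a single bookkeeping identity, namely that the $\widehat{j}$-reduction commutes with forming the ordered tensor product $\vec{\odot}\zeta$ whenever $j \in J$. First I would pick a representation $\lambda = \eta - \theta$ with $\eta,\theta \in \mathcal{L}^{J}$ and $\eta_{\widehat{j}} = \theta_{\widehat{j}}$ (where the latter reduction is over $J\setminus\{j\}$ inside the substructure, as in (\ref{N_Jj})). Writing $\eta = \tilde{\eta}^{J}$, $\theta = \tilde{\theta}^{J}$ for some $\tilde{\eta},\tilde{\theta} \in \mathcal{L}$ and $\zeta = \tilde{\zeta}^{J'}$ for some $\tilde{\zeta} \in \mathcal{L}$, reducibility through (\ref{uitbrVanL^J}) immediately gives $\eta\vec{\odot}\zeta = \tilde{\eta}^{J}\vec{\odot}\tilde{\zeta}^{J'} \in \mathcal{L}$, and likewise $\theta\vec{\odot}\zeta \in \mathcal{L}$.

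Next I would establish the key identity
\[
(\eta \vec{\odot} \zeta)_{\widehat{j}} \;=\; \eta_{\widehat{j}} \vec{\odot} \zeta \qquad (j \in J)
\]
for any $\eta \in \mathcal{L}^{J}$ and $\zeta \in \mathcal{L}^{J'}$, where the reduction on the left is the full $[n]$-reduction of (\ref{jKomplRed}) and the reduction on the right is inside the substructure on $J$. This is where the ordering maps $\alpha_{JJ'}$ and $\beta_{JJ'}$ come in. The calculation is: evaluate both sides on an elementary tensor $\otimes_{i \ne j} a_{i}$; on the left, inserting $u_{A_{j}}$ in the $j$-th slot and then applying $\alpha_{JJ'}^{-1}$ splits the argument into an $A_{J}$ piece (which contains $u_{A_{j}}$ in its $j$-slot) and an $A_{J'}$ piece; applying $\eta \odot \zeta$ factorizes the output across $B_{J} \odot B_{J'}$; and since $j \in J$, the functional $\nu_{j}$ acts only on the $B_{J}$ factor. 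By the definition of $\eta_{\widehat{j}}$ inside the substructure (and the automatic continuity needed to commute $\nu_j$ past the ordering map $\beta_{JJ'}$, which is precisely what reducibility guarantees), that $B_{J}$ factor collapses to $\eta_{\widehat{j}}(\otimes_{i \in J\setminus\{j\}} a_{i})$, while the $B_{J'}$ factor is $\zeta(\otimes_{l \in J'} a_{l})$. Reassembling via $\beta_{(J\setminus\{j\})J'}$ yields exactly $(\eta_{\widehat{j}} \vec{\odot} \zeta)(\otimes_{i \ne j} a_{i})$.

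With that identity in hand, the rest is automatic. Applying it to both $\eta$ and $\theta$ and using $\eta_{\widehat{j}} = \theta_{\widehat{j}}$ gives $(\eta\vec{\odot}\zeta)_{\widehat{j}} = (\theta\vec{\odot}\zeta)_{\widehat{j}}$. Combined with $\eta\vec{\odot}\zeta, \theta\vec{\odot}\zeta \in \mathcal{L}$ from the first step, the defining form (\ref{N_j}) of $\mathcal{N}_{j}$ yields
\[
\lambda\vec{\odot}\zeta \;=\; \eta\vec{\odot}\zeta - \theta\vec{\odot}\zeta \;\in\; \mathcal{N}_{j},
\]
as required.

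The main obstacle is the bookkeeping in the middle step: although conceptually trivial (``$\nu_{j}$ acts inside $\eta$, not inside $\zeta$, since $j \in J$''), one needs to carefully keep track of the order-restoring maps $\alpha_{JJ'}$ and $\beta_{JJ'}$, and invoke their continuity (built into Definition \ref{redGepuntStruk}) to push $\nu_{j}$ through $\beta_{JJ'}$. Once the identity $(\eta\vec{\odot}\zeta)_{\widehat{j}} = \eta_{\widehat{j}}\vec{\odot}\zeta$ is isolated, the lemma follows in one line.
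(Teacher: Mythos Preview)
Your proposal is correct and follows essentially the same approach as the paper: write $\lambda=\eta-\theta$, use reducibility (\ref{uitbrVanL^J}) to place $\eta\vec{\odot}\zeta$ and $\theta\vec{\odot}\zeta$ in $\mathcal{L}$, and then invoke the identity $(\eta\vec{\odot}\zeta)_{\widehat{j}}=\eta_{\widehat{j}}\vec{\odot}\zeta$ to match the $\widehat{j}$-reductions. The paper simply states this identity (in the form $(\lambda\vec{\odot}\zeta)_{\widehat{j}}=\lambda_{\widehat{j}}\vec{\odot}\zeta$) without further comment, whereas you supply the elementary-tensor verification; otherwise the arguments coincide.
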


\begin{proof}
Note that $(\lambda\vec{\odot}\zeta)_{\widehat{j}}=\lambda_{\widehat{j}}%
\vec{\odot}\zeta=0$. In terms of $\lambda=\eta-\theta$ with $\eta,\theta
\in\mathcal{L}^{J}$, this means that $\lambda\vec{\odot}\zeta=\eta\vec{\odot
}\zeta-\theta\vec{\odot}\zeta$, where $\eta\vec{\odot}\zeta,\theta\vec{\odot
}\zeta\in\mathcal{L}$ by (\ref{uitbrVanL^J}) and $(\eta\vec{\odot}%
\zeta)_{\widehat{j}}=(\theta\vec{\odot}\zeta)_{\widehat{j}}$, as needed in
(\ref{N_j}).
\end{proof}

\begin{proposition}
\label{redNj}Assume that $(A,B,\nu,\mathcal{L})$ is a reducible pointed
$(W_{1},n)$ structure and consider any $J\in\mathcal{P}_{n}$. Then
$\mathcal{N}_{j}^{J}=\{0\}$ when $j\in J^{\prime}$, while
\[
\mathcal{N}_{j}^{J}=\mathcal{N}_{J,j}%
\]
for $j\in J$.
\end{proposition}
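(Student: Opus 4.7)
The plan is to treat the two cases separately, exploiting that the reduction operations defined by tensoring with the $\nu_k$'s commute whenever they act on disjoint tensor factors, i.e., for any disjoint $I_1,I_2\subset[n]$ and any $\eta\in\mathcal{L}$, one has $(\eta^{[n]\setminus I_1})^{[n]\setminus(I_1\cup I_2)}=\eta^{[n]\setminus(I_1\cup I_2)}$. This follows from the universal property of the algebraic tensor product together with the assumed continuity of the tensor products of the $\nu_k$'s with identities (and of the ordering maps, in the completed case, guaranteed by reducibility). I would record this commutativity as an auxiliary observation before starting.

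For the case $j\in J'$: given $\lambda=\eta-\theta\in\mathcal{N}_j$ with $\eta,\theta\in\mathcal{L}$ and $\eta_{\widehat{j}}=\theta_{\widehat{j}}$, since $J\subset[n]\setminus\{j\}$, the reduction $(\cdot)^J$ factors through $(\cdot)_{\widehat{j}}=(\cdot)^{[n]\setminus\{j\}}$ by the commutativity observation; hence $\eta^J=(\eta_{\widehat{j}})^J=(\theta_{\widehat{j}})^J=\theta^J$, so $\lambda^J=0$. For the reverse containment, choose any $\eta\in\mathcal{L}$ (a non-empty $\mathcal{L}$ is implicit; otherwise the statement is vacuous) and note that $0=\eta-\eta\in\mathcal{N}_j$ reduces to $0\in\mathcal{N}_j^J$.

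For the case $j\in J$, I would prove the two inclusions separately. For $\mathcal{N}_j^J\subset\mathcal{N}_{J,j}$: given $\lambda=\eta-\theta\in\mathcal{N}_j$, we already have $\eta^J,\theta^J\in\mathcal{L}^J$, and the reduction $(\eta^J)_{\widehat{j}}$ taken in the substructure equals $\eta^{J\setminus\{j\}}=(\eta_{\widehat{j}})^{J\setminus\{j\}}$ by the commutativity observation, and the same for $\theta$; since $\eta_{\widehat{j}}=\theta_{\widehat{j}}$, we get $(\eta^J)_{\widehat{j}}=(\theta^J)_{\widehat{j}}$, which places $\lambda^J=\eta^J-\theta^J$ in $\mathcal{N}_{J,j}$ by (\ref{N_Jj}). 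For the reverse inclusion $\mathcal{N}_{J,j}\subset\mathcal{N}_j^J$: given $\lambda\in\mathcal{N}_{J,j}$, fix any $\zeta\in\mathcal{L}^{J'}$ (which exists as soon as $\mathcal{L}$ is non-empty, by reducing any element). By Lemma \ref{N_J,jUitbrei}, $\lambda\vec{\odot}\zeta\in\mathcal{N}_j$. A direct computation then shows $(\lambda\vec{\odot}\zeta)^J=\lambda$: restricting $\lambda\vec{\odot}\zeta$ to $A_J$ replaces all $a_k$ for $k\in J'$ by $u_{A_k}$, so the $\zeta$-factor evaluates to $\zeta(u_{A_{J'}})=u_{B_{J'}}$ by pointedness of $\zeta$, and subsequently applying the $\nu_k$'s for $k\in J'$ gives $\prod_{k\in J'}\nu_k(u_{B_k})=1$, leaving only the $\lambda$-factor.

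The main obstacle will be bookkeeping the ordering maps $\alpha_{IJ}$ and $\beta_{IJ}$ implicit in $\vec{\odot}$ and making the commutativity of reductions over disjoint index subsets rigorous; this is where the hypothesis of reducibility carries its weight, both through (\ref{uitbrVanL^J}) (to ensure $\lambda\vec{\odot}\zeta$ lies in the relevant spaces) and through the continuity of the ordering maps $\beta_{IJ}$ in the completed setting. Once the commutativity of reductions is in hand, both inclusions are a matter of evaluating on elementary tensors.
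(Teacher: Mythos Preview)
Your proposal is correct and follows essentially the same route as the paper's proof: both handle $j\in J'$ by observing that the reduction to $J$ factors through $(\cdot)_{\widehat{j}}$, and for $j\in J$ both verify $\mathcal{N}_j^J\subset\mathcal{N}_{J,j}$ via the commutativity of reductions $(\eta^J)_{\widehat{j}}=(\eta_{\widehat{j}})^{J\setminus\{j\}}$, and the reverse inclusion by tensoring with some $\zeta\in\mathcal{L}^{J'}$ via Lemma~\ref{N_J,jUitbrei} and then reducing back. Your write-up is in fact slightly more careful in places (making the commutativity of reductions explicit upfront, and spelling out why $(\lambda\vec{\odot}\zeta)^J=\lambda$), but the argument is the same.
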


\begin{proof}
By (\ref{S^J}), $\mathcal{N}_{j}^{J}:=\{\eta^{J}:\eta\in\mathcal{N}_{j}\}$.
For $j\in J^{\prime}$ the reduction over $j$ is included in the reduction over
$J^{\prime}$, hence $\lambda_{\widehat{j}}=0$ for $\lambda\in L(A,B)$ implies
that $\lambda^{J}=0$, directly from the definitions of $\lambda_{\widehat{j}}$
and $\lambda^{J}$. Thus $\mathcal{N}_{j}^{J}=\{0\}$ when $j\in J^{\prime}$.
Now assume that $j\in J$. Consider any $\lambda^{J}\in\mathcal{N}_{j}^{J}$,
i.e., we take $\lambda=\eta-\theta$ with $\eta,\theta\in\mathcal{L}$ and
$\eta_{\widehat{j}}=\theta_{\widehat{j}}$. Since $j\in J$, both sides of
$(\eta^{J})_{\widehat{j}}=(\eta_{\widehat{j}})^{J}$ are well defined, and
indeed equal by the definitions of these reductions. Similarly for $\theta$,
which means that $(\eta^{J})_{\widehat{j}}=(\theta^{J})_{\widehat{j}}$, hence
$\lambda^{J}\in\mathcal{N}_{J,j}$ by (\ref{N_Jj}). This shows that
$\mathcal{N}_{j}^{J}\subset\mathcal{N}_{J,j}$, even if $(A,B,\nu,\mathcal{L})$
is not assumed reducible. Conversely, consider any $\lambda\in\mathcal{N}%
_{J,j}$. For any $\zeta\in\mathcal{L}^{J^{\prime}}$ it then follows from Lemma
\ref{N_J,jUitbrei} that $\lambda\vec{\odot}\zeta\in\mathcal{N}_{j}$.
Consequently, $\lambda=(\lambda\vec{\odot}\zeta)^{J}\in\mathcal{N}_{j}^{J}$,
proving that $\mathcal{N}_{J,j}\subset\mathcal{N}_{j}^{J}$.
\end{proof}

In particular this tells us that the reductions $\mathcal{N}_{j}^{J}$ of
$\mathcal{N}_{j}$ for $j\in J$, play the same role for $(A_{J},B_{J},\nu
_{J},\mathcal{L}^{J})$ as $\mathcal{N}_{1},...,\mathcal{N}_{n}$ play for the
reducible pointed $W_{1}$ structure $(A,B,\nu,\mathcal{L})$.

\begin{corollary}
\label{redONC}In Proposition \ref{redNj} we have
\[
\mathcal{O}_{\mathcal{L}^{J}}=\mathcal{O}_{\mathcal{L}}^{J}\text{,
\ }\mathcal{N}_{J}=\mathcal{N}^{J}\text{ \ and \ }\mathcal{C}_{\mathcal{L}%
^{J}}=\mathcal{C}_{\mathcal{L}}^{J}%
\]
for (\ref{O_LJ}), (\ref{substrukN}) and (\ref{substrukC}).
\end{corollary}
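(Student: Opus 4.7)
The plan is to deduce all three equalities directly from Proposition \ref{redNj}, the linearity of the reduction map $(\cdot)^J$, and the reducibility hypothesis (\ref{uitbrVanL^J}). I would take them in the order $\mathcal{N}_J = \mathcal{N}^J$, then $\mathcal{C}_{\mathcal{L}^J} = \mathcal{C}_{\mathcal{L}}^J$, then $\mathcal{O}_{\mathcal{L}^J} = \mathcal{O}_{\mathcal{L}}^J$, since each subsequent one builds on the previous idea.

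For $\mathcal{N}^J = \mathcal{N}_J$, by (\ref{S^J}) we have $\mathcal{N}^J = \bigcup_{j=1}^n \mathcal{N}_j^J$. Proposition \ref{redNj} gives $\mathcal{N}_j^J = \mathcal{N}_{J,j}$ when $j \in J$ and $\mathcal{N}_j^J = \{0\}$ when $j \in J^{\prime}$. Since $0 = \eta - \eta \in \mathcal{N}_{J,j}$ for any $\eta \in \mathcal{L}^J$ (assuming $\mathcal{L}$ nonempty; otherwise every set in sight is $\{0\}$ or empty and the corollary is trivial), the union reduces to $\bigcup_{j \in J} \mathcal{N}_{J,j} = \mathcal{N}_J$. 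For $\mathcal{C}_{\mathcal{L}^J} = \mathcal{C}_{\mathcal{L}}^J$, I would use that $(\cdot)^J : L(A,B) \to L(A_J,B_J)$ is linear, hence commutes with finite convex combinations, so $\mathcal{C}_{\mathcal{L}}^J = (\operatorname{conv}\mathcal{N})^J = \operatorname{conv}(\mathcal{N}^J) = \operatorname{conv}(\mathcal{N}_J) = \mathcal{C}_{\mathcal{L}^J}$ by the previous step.

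For $\mathcal{O}_{\mathcal{L}^J} = \mathcal{O}_{\mathcal{L}}^J$, the inclusion $\mathcal{O}_{\mathcal{L}}^J \subset \mathcal{O}_{\mathcal{L}^J}$ is routine: reduction is linear so it maps $\operatorname{span}_{\mathbb{R}}\mathcal{L}$ into $\operatorname{span}_{\mathbb{R}}\mathcal{L}^J$, and $\lambda(u_A) = 0$ gives $\lambda^J(u_{A_J}) = (\bigotimes_j \varphi_j)(\lambda(u_A)) = 0$ directly from the definition of reduction. The reverse inclusion is where the real content lies. Given $\mu \in \mathcal{O}_{\mathcal{L}^J}$, write $\mu = \sum_k c_k \eta_k^J$ with $\eta_k \in \mathcal{L}$ and $c_k \in \mathbb{R}$. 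Fix any $\theta \in \mathcal{L}$ and, using reducibility (\ref{uitbrVanL^J}), observe that $\eta_k^J \vec{\odot} \theta^{J^{\prime}} \in \mathcal{L}$ for every $k$. A direct computation, using that $\theta^{J^{\prime}}$ is a pointed map and that $\nu_j(u_{B_j}) = 1$, shows that $(\eta_k^J \vec{\odot} \theta^{J^{\prime}})^J = \eta_k^J$. Defining $\lambda := \sum_k c_k (\eta_k^J \vec{\odot} \theta^{J^{\prime}}) \in \operatorname{span}_{\mathbb{R}}\mathcal{L}$, linearity of reduction gives $\lambda^J = \mu$.

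The only point that requires care, and which I view as the main obstacle to making this work, is verifying that this $\lambda$ actually lies in $\mathcal{O}_{\mathcal{L}}$, i.e., $\lambda(u_A) = 0$. The key observation is that each summand evaluated at $u_A$ equals $u_B$, so $\lambda(u_A) = (\sum_k c_k) u_B$; on the other hand, the hypothesis $\mu(u_{A_J}) = 0$ forces $(\sum_k c_k) u_{B_J} = 0$, and since $u_{B_J} \neq 0$ we get $\sum_k c_k = 0$. This propagates back to give $\lambda(u_A) = 0$, finishing the argument.
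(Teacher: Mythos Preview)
Your argument is correct. For $\mathcal{N}_J=\mathcal{N}^J$ and $\mathcal{C}_{\mathcal{L}^J}=\mathcal{C}_{\mathcal{L}}^J$ you do exactly what the paper does: invoke Proposition \ref{redNj} and then use that the linear map $(\cdot)^J$ commutes with convex combinations.

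For $\mathcal{O}_{\mathcal{L}^J}=\mathcal{O}_{\mathcal{L}}^J$ you take a genuinely different route. The paper uses Lemma \ref{teleskoop}, which identifies $\mathcal{O}_{\mathcal{L}}=\operatorname{span}_{\mathbb{R}}\mathcal{N}$ and $\mathcal{O}_{\mathcal{L}^J}=\operatorname{span}_{\mathbb{R}}\mathcal{N}_J$, so that the already established $\mathcal{N}_J=\mathcal{N}^J$ together with linearity of reduction finishes the job in one line. You instead construct an explicit lift $\lambda=\sum_k c_k(\eta_k^J\vec{\odot}\theta^{J'})$ via the reducibility hypothesis (\ref{uitbrVanL^J}) and check $\lambda(u_A)=0$ by hand using $\sum_k c_k=0$. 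This is precisely the alternative the paper alludes to in the sentence following the proof (``can alternatively be proved along the lines of the proof of Proposition \ref{redNj}''). The paper's version is shorter because the telescoping lemma has already packaged the hard work; your version is more self-contained and makes the role of reducibility in this particular equality completely explicit, at the cost of repeating the lifting idea already used inside Proposition \ref{redNj}.
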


\begin{proof}
From Proposition \ref{redNj}, (\ref{bal}) and (\ref{substrukN}) one has
$\mathcal{N}_{J}=\mathcal{N}^{J}$, hence
\[
\mathcal{C}_{\mathcal{L}^{J}}=\operatorname*{conv}\mathcal{N}^{J}%
=(\operatorname*{conv}\mathcal{N})^{J}=\mathcal{C}_{\mathcal{L}}^{J}%
\]
and%
\[
\mathcal{O}_{\mathcal{L}^{J}}=\operatorname*{span}\nolimits_{\mathbb{R}%
}\mathcal{N}^{J}=(\operatorname*{span}\nolimits_{\mathbb{R}}\mathcal{N}%
)^{J}=\mathcal{O}_{\mathcal{L}}^{J},
\]
because of Lemma \ref{teleskoop}.
\end{proof}

As one may expect, $\mathcal{O}_{\mathcal{L}^{J}}=\mathcal{O}_{\mathcal{L}%
}^{J}$ can alternatively be proved along the lines of the proof of Proposition
\ref{redNj}.

We can also use Lemma \ref{N_J,jUitbrei} along with Corollary \ref{redONC} to
obtain the following property of $\left\|  \cdot\right\|  _{W_{1}}$.

\begin{proposition}
\label{normbehoud}Assume that $(A,B,\nu,\mathcal{L})$ is a reducible pointed
$(W_{1},n)$ structure and consider any $I,J\in\mathcal{P}_{n}$ with $I\cap
J=\varnothing$. Then
\[
\left\|  (\eta-\theta)\vec{\odot}\zeta\right\|  _{W_{1}}=\left\|  \eta
-\theta\right\|  _{W_{1}}%
\]
for all $\eta,\theta\in\mathcal{L}^{I}$ and $\zeta\in\mathcal{L}^{J}$.
\end{proposition}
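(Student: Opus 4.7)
My plan is to prove the two inequalities $\left\|(\eta-\theta)\vec{\odot}\zeta\right\|_{W_1}\leq\left\|\eta-\theta\right\|_{W_1}$ and $\left\|\eta-\theta\right\|_{W_1}\leq\left\|(\eta-\theta)\vec{\odot}\zeta\right\|_{W_1}$ separately, using on the one hand the fact (Proposition \ref{redSubstr}) that $(A_{I\cup J},B_{I\cup J},\nu_{I\cup J},\mathcal{L}^{I\cup J})$ is a reducible pointed $W_{1}$ substructure of $(A,B,\nu,\mathcal{L})$, and on the other hand that reduction to $I$ recovers the operand $\eta-\theta$.

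For the first inequality, fix any $t>\left\|\eta-\theta\right\|_{W_{1}}$, so $(\eta-\theta)/t\in\mathcal{C}_{\mathcal{L}^{I}}=\operatorname{conv}\mathcal{N}_{I}$; write $(\eta-\theta)/t=\sum_{k}p_{k}v_{k}$ with $v_{k}\in\mathcal{N}_{I,i_{k}}$, $p_{k}\geq 0$, $\sum p_{k}=1$. Applying Lemma \ref{N_J,jUitbrei} inside the reducible substructure $(A_{I\cup J},B_{I\cup J},\nu_{I\cup J},\mathcal{L}^{I\cup J})$, with its index set partitioned into $I$ and its complement $J$, each $v_{k}\vec{\odot}\zeta$ belongs to $\mathcal{N}_{I\cup J,i_{k}}\subset\mathcal{N}_{I\cup J}$. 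Using the bilinearity of $\vec{\odot}$ (which follows directly from its definition via the linear ordering maps and $\odot$), I obtain
\[
((\eta-\theta)\vec{\odot}\zeta)/t=\sum_{k}p_{k}(v_{k}\vec{\odot}\zeta)\in\operatorname{conv}\mathcal{N}_{I\cup J}=\mathcal{C}_{\mathcal{L}^{I\cup J}},
\]
hence $\left\|(\eta-\theta)\vec{\odot}\zeta\right\|_{W_{1}}\leq t$; taking the infimum over $t$ finishes this direction.

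For the reverse inequality, I first verify directly from the definition of the reduction $(\cdot)^{I}$ together with $\zeta(u_{A_{J}})=u_{B_{J}}$ and $\nu_{J}(u_{B_{J}})=1$ that $(\eta\vec{\odot}\zeta)^{I}=\eta$, so $((\eta-\theta)\vec{\odot}\zeta)^{I}=\eta-\theta$. Next, for any $t>\left\|(\eta-\theta)\vec{\odot}\zeta\right\|_{W_{1}}$, expand $((\eta-\theta)\vec{\odot}\zeta)/t$ as a convex combination of elements of $\mathcal{N}_{I\cup J}$ and apply the reduction $(\cdot)^{I}$. By Proposition \ref{redNj} applied within the substructure on $I\cup J$, elements of $\mathcal{N}_{I\cup J,j}$ reduce to elements of $\mathcal{N}_{I,j}$ when $j\in I$ and to $0$ when $j\in J$; since $0\in\mathcal{N}_{I,j}$ (taking $\eta=\theta$), the reduced convex combination lies in $\operatorname{conv}\mathcal{N}_{I}=\mathcal{C}_{\mathcal{L}^{I}}$. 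Thus $(\eta-\theta)/t\in\mathcal{C}_{\mathcal{L}^{I}}$, and taking the infimum gives $\left\|\eta-\theta\right\|_{W_{1}}\leq\left\|(\eta-\theta)\vec{\odot}\zeta\right\|_{W_{1}}$.

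The main obstacle I anticipate is bookkeeping rather than conceptual: I need to verify cleanly that the reduction $(\cdot)^{I}$ applied to convex combinations commutes with the decomposition and maps $\mathcal{N}_{I\cup J,j}$ into $\mathcal{N}_{I,j}$ (respectively $\{0\}$) as claimed, which in turn requires that the ordering maps $\beta_{IJ}$ intertwine the reduction operations correctly. This is where the reducibility hypothesis and Proposition \ref{redSubstr} do the real work, letting me freely apply the earlier lemmas inside the substructure on $I\cup J$ without having to redevelop them.
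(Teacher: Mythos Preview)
Your proof is correct and follows essentially the same route as the paper's: both directions are handled by working inside the reducible substructure $(A_{I\cup J},B_{I\cup J},\nu_{I\cup J},\mathcal{L}^{I\cup J})$, using Lemma~\ref{N_J,jUitbrei} to push convex combinations in $\mathcal{C}_{\mathcal{L}^{I}}$ up to $\mathcal{C}_{\mathcal{L}^{I\cup J}}$ via $\vec{\odot}\,\zeta$, and using Proposition~\ref{redNj} (the paper packages this step as Corollary~\ref{redONC}) to pull convex combinations in $\mathcal{C}_{\mathcal{L}^{I\cup J}}$ down to $\mathcal{C}_{\mathcal{L}^{I}}$ via $(\cdot)^{I}$.
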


\begin{proof}
For any $\eta,\theta\in\mathcal{L}^{I}$ and $\zeta\in\mathcal{L}^{J}$, set
$\lambda:=(\eta-\theta)\vec{\odot}\zeta\in\mathcal{L}^{I\cup J}$, according to
(\ref{algUitbrVanL^J}), then $\lambda^{I}=\eta-\theta$.

For any $\gamma\in\mathcal{C}_{\mathcal{L}^{I\cup J}}$ and $t\geq0$ such that
$\lambda=t\gamma$, one has that $\lambda^{I}=t\gamma^{I}$. By Corollary
\ref{redONC}, but applied to the reducible (Proposition \ref{redSubstr})
pointed $W_{1}$ structure $(A_{I\cup J},B_{I\cup J},\nu_{I\cup J}%
,\mathcal{L}^{I\cup J})$, we know that $\gamma^{I}\in\mathcal{C}%
_{\mathcal{L}^{I}}$, from which it follows that $\left\|  \lambda^{I}\right\|
_{W_{1}}\leq t$, hence $\left\|  \lambda^{I}\right\|  _{W_{1}}\leq\left\|
\lambda\right\|  _{W_{1}}$ by the definition of $\left\|  \cdot\right\|
_{W_{1}}$ in Corollary \ref{W-seminorm} (via Proposition \ref{normUitBal}).

Conversely, consider any $\gamma\in\mathcal{C}_{\mathcal{L}^{I}}$ and $t\geq0$
such that $\lambda^{I}=t\gamma$. By (\ref{substrukC}) we have
\[
\gamma=\sum_{i=1}^{l}p_{i}\gamma_{i}%
\]
for some $p_{1},...,p_{l}>0$ with $p_{1}+...+p_{l}=1$, and $\gamma_{i}%
\in\mathcal{N}_{I,j_{i}}$ for some $j_{i}\in I$. Because of Lemma
\ref{N_J,jUitbrei} applied to $(A_{I\cup J},B_{I\cup J},\nu_{I\cup
J},\mathcal{L}^{I\cup J})$, it follows that
\[
\gamma\vec{\odot}\zeta=\sum_{i=1}^{l}p_{i}\gamma_{i}\vec{\odot}\zeta
\in\mathcal{C}_{\mathcal{L}^{I\cup J}}.
\]
Since $\lambda=t\gamma\vec{\odot}\zeta$, we conclude that $\left\|
\lambda\right\|  _{W_{1}}\leq t$, thus $\left\|  \lambda\right\|  _{W_{1}}%
\leq\left\|  \lambda^{I}\right\|  _{W_{1}}$.
\end{proof}

These results will be applied in the next subsection to prove the additivity
of $W_{1}$ with respect to tensor products.

\subsection{Additivity\label{OndAfdAd}}

We now arrive at this section's main results in the abstract pointed space
setup, which will be applied to the C*-algebras in the next subsection.
Consider a reducible pointed $(W_{1},n)$ structure $(A,B,\nu,\mathcal{L})$ and
any $m$\emph{-partition} $P$ of $[n]$, by which we mean a function
$P:[m]\rightarrow$ $\mathcal{P}_{n}$ such that $P(1)\cup...\cup P(m)=[n]$ and
$P(k)\cap P(l)=\varnothing$ for $k\neq l$. Our goal is to determine how
$\left\|  \cdot\right\|  _{W_{1}}$ for $(A,B,\nu,\mathcal{L})$ relates to
$\left\|  \cdot\right\|  _{W_{1}}$ for the $(A_{P(k)},B_{P(k)},\nu
_{P(k)},\mathcal{L}^{P(k)})$'s via reduction. Similarly for $W_{1}$, but
specifically for product maps. Using the results of the previous subsection,
these relationships can be stated as a form of ``reductive superadditivity''
of $\left\|  \cdot\right\|  _{W_{1}}$ and an additivity property of $W_{1}$.

\begin{theorem}
\label{gepunteAdEiensk}Let $(A,B,\nu,\mathcal{L})$ be a reducible pointed
$(W_{1},n)$ structure and $P$ any $m$-partition of $[n]$. Then%
\begin{equation}
\left\|  \lambda\right\|  _{W_{1}}\geq\sum_{k=1}^{m}\left\|  \lambda
^{P(k)}\right\|  _{W_{1}} \label{gepuntSupAd}%
\end{equation}
for all $\lambda\in\mathcal{O}_{\mathcal{L}}$, while
\begin{equation}
W_{1}(\eta_{1}\vec{\odot}...\vec{\odot}\eta_{m},\theta_{1}\vec{\odot}%
...\vec{\odot}\theta_{m})=\sum_{k=1}^{m}W_{1}(\eta_{k},\theta_{k})
\label{gepunteAd}%
\end{equation}
for all $\eta_{k},\theta_{k}\in\mathcal{L}^{P(k)}$, for $k=1,...,m$.
\end{theorem}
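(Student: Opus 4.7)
The plan is to prove the superadditivity bound (\ref{gepuntSupAd}) first, and then deduce the additivity statement (\ref{gepunteAd}) from it via a telescoping trick together with Proposition \ref{normbehoud}.

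For (\ref{gepuntSupAd}), I would unpack the definition of $\|\lambda\|_{W_1}$ directly. Fix $\varepsilon>0$ and choose $t\geq 0$ with $\lambda\in t\mathcal{C}_{\mathcal L}$ and $t\leq\|\lambda\|_{W_1}+\varepsilon$. Write $\lambda=t\sum_{i}p_i\gamma_i$ with $p_i\geq0$, $\sum_ip_i=1$, and $\gamma_i\in\mathcal N_{j_i}$ for some $j_i\in[n]$. Because $P$ partitions $[n]$, for each $k$ the set $I_k:=\{i:j_i\in P(k)\}$ is well-defined, and the $I_k$'s partition the index set of the $\gamma_i$'s. Now apply the reduction $(\cdot)^{P(k)}$ and invoke Proposition \ref{redNj}: $\gamma_i^{P(k)}=0$ when $j_i\notin P(k)$, and $\gamma_i^{P(k)}\in\mathcal N_{P(k),j_i}\subset\mathcal N_{P(k)}$ when $j_i\in P(k)$. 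Setting $s_k:=t\sum_{i\in I_k}p_i$ (with the convention $s_k=0$ when $I_k=\emptyset$), the resulting expression exhibits $\lambda^{P(k)}$ as an element of $s_k\mathcal C_{\mathcal L^{P(k)}}$, so $\|\lambda^{P(k)}\|_{W_1}\leq s_k$. Summing gives $\sum_k\|\lambda^{P(k)}\|_{W_1}\leq \sum_k s_k=t\leq\|\lambda\|_{W_1}+\varepsilon$, and letting $\varepsilon\to0$ yields (\ref{gepuntSupAd}).

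For (\ref{gepunteAd}), set $\eta:=\eta_1\vec\odot\cdots\vec\odot\eta_m$ and $\theta:=\theta_1\vec\odot\cdots\vec\odot\theta_m$, both of which lie in $\mathcal L$ by iterating (\ref{algUitbrVanL^J}). A routine evaluation on elementary tensors, using pointedness of each $\eta_l$ and $\nu_j(u_{B_j})=1$, shows that $\eta^{P(k)}=\eta_k$ and $\theta^{P(k)}=\theta_k$, so $(\eta-\theta)^{P(k)}=\eta_k-\theta_k$. Applying (\ref{gepuntSupAd}) to $\lambda:=\eta-\theta$ therefore produces the inequality $W_1(\eta,\theta)\geq\sum_{k=1}^m W_1(\eta_k,\theta_k)$. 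For the reverse inequality, perform the telescoping decomposition
\[
\eta-\theta=\sum_{k=1}^m\mu_k,\qquad \mu_k:=\theta_1\vec\odot\cdots\vec\odot\theta_{k-1}\vec\odot(\eta_k-\theta_k)\vec\odot\eta_{k+1}\vec\odot\cdots\vec\odot\eta_m,
\]
and apply Proposition \ref{normbehoud} twice to each $\mu_k$: first attach the right factor $\eta_{k+1}\vec\odot\cdots\vec\odot\eta_m\in\mathcal L^{P(k+1)\cup\cdots\cup P(m)}$ to $\eta_k-\theta_k$ without changing the $W_1$-norm, and then attach $\theta_1\vec\odot\cdots\vec\odot\theta_{k-1}\in\mathcal L^{P(1)\cup\cdots\cup P(k-1)}$, using the commutativity $\eta\vec\odot\theta=\theta\vec\odot\eta$ to bring the factor to the left. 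This yields $\|\mu_k\|_{W_1}=W_1(\eta_k,\theta_k)$, and the triangle inequality then gives $W_1(\eta,\theta)\leq\sum_k W_1(\eta_k,\theta_k)$, completing the proof.

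The main technical step is the verification that the reductions $\mathcal N_j\mapsto\mathcal N_j^{P(k)}$ behave cleanly with respect to the partition, which is exactly what Proposition \ref{redNj} provides; once this is in hand, the remaining steps are bookkeeping combined with the norm-preservation of Proposition \ref{normbehoud}. The most delicate point is being careful about orderings, since the tensor products $\vec\odot$ are only well defined up to the ordering maps $\alpha_{IJ}$ and $\beta_{IJ}$ supplied by reducibility, but this is exactly the scenario Subsection \ref{OndAfdRedStruk} was set up to handle.
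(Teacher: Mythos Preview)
Your proposal is correct and follows essentially the same route as the paper: both prove (\ref{gepuntSupAd}) by decomposing $\lambda$ as a convex combination from $\mathcal{N}$, applying Proposition \ref{redNj} to kill or retain terms upon reduction to $P(k)$, and then summing the resulting bounds; and both obtain (\ref{gepunteAd}) by combining the $\geq$ direction from (\ref{gepuntSupAd}) (via $\eta^{P(k)}=\eta_k$) with the $\leq$ direction from a telescoping expansion together with Proposition \ref{normbehoud}. The only cosmetic differences are that the paper iterates the telescoping one factor at a time rather than writing out all $\mu_k$ at once, and makes the index sets $R_k$ explicitly disjoint rather than fixing a choice of $j_i$ for each $\gamma_i$.
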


\begin{proof}
Let $\lambda\in\mathcal{O}_{\mathcal{L}}$. Consider any $\gamma\in
\mathcal{C}_{\mathcal{L}}$ and $t\geq0$ such that $\lambda=t\gamma$. By
$\mathcal{C}_{\mathcal{L}}=\operatorname*{conv}\mathcal{N}$,
\[
\gamma=\sum_{i=1}^{l}p_{i}\alpha_{i}%
\]
for some $p_{1},...,p_{l}>0$ with $p_{1}+...+p_{l}=1$, and $\alpha_{i}%
\in\mathcal{N}=\cup_{j\in\lbrack n]}\mathcal{N}_{j}$. Set
\[
q_{k}:=\sum_{i\in R_{k}}p_{i}\text{ \ and \ }\gamma_{k}:=\sum_{i\in R_{k}%
}\frac{p_{i}}{q_{k}}\alpha_{i}\in\mathcal{C}_{\mathcal{L}},
\]
where $R_{k}:=\{i\in\lbrack l]:\alpha_{i}\in\cup_{j\in P(k)}\mathcal{N}%
_{j}\}\backslash(R_{1}\cup...\cup R_{k-1})$ for $k=1,...,m$, with $R_{1}%
\cup...\cup R_{k-1}=\varnothing$ for $k=1$. Then by Proposition \ref{redNj},%
\[
\gamma=\sum_{k=1}^{m}q_{k}\gamma_{k}\text{ \ and \ }\gamma^{P(k)}=q_{k}%
\gamma_{k}^{P(k)}%
\]
with $\gamma_{k}^{P(k)}\in\mathcal{C}_{\mathcal{L}}^{P(k)}=\mathcal{C}%
_{\mathcal{L}^{P(k)}}$ by Corollary \ref{redONC}. This tells us that
$\lambda^{P(k)}=t\gamma^{P(k)}=tq_{k}\gamma_{k}^{P(k)}\in tq_{k}%
\mathcal{C}_{\mathcal{L}^{P(k)}}$, hence $\left\|  \lambda^{P(k)}\right\|
_{W_{1}}\leq tq_{k}$, thus
\[
\sum_{k=1}^{m}\left\|  \lambda^{P(k)}\right\|  _{W_{1}}\leq t\text{,}%
\]
implying (\ref{gepuntSupAd}), by $\left\|  \cdot\right\|  _{W_{1}}$'s
definition in Corollary \ref{W-seminorm}.

In particular, for $\eta,\theta\in\mathcal{L}$ we have
\[
\left\|  \eta-\theta\right\|  _{W_{1}}\geq\sum_{k=1}^{m}\left\|  \eta
^{P(k)}-\theta^{P(k)}\right\|  _{W_{1}}.
\]
On the other hand, given $\eta_{k},\theta_{k}\in\mathcal{L}^{P(k)}$, by
Proposition \ref{normbehoud} we have%
\begin{align*}
&  \left\|  \eta_{1}\vec{\odot}...\vec{\odot}\eta_{m}-\theta_{1}\vec{\odot
}...\vec{\odot}\theta_{m}\right\|  _{W_{1}}\\
&  \leq\left\|  (\eta_{1}-\theta_{1})\vec{\odot}\eta_{2}\vec{\odot}%
...\vec{\odot}\eta_{m}\right\|  _{W_{1}}+\left\|  \theta_{1}\vec{\odot}%
(\eta_{2}\vec{\odot}...\vec{\odot}\eta_{m}-\theta_{2}\vec{\odot}...\vec{\odot
}\theta_{m})\right\|  _{W_{1}}\\
&  =\left\|  \eta_{1}-\theta_{1}\right\|  _{W_{1}}+\left\|  \eta_{2}\vec
{\odot}...\vec{\odot}\eta_{m}-\theta_{2}\vec{\odot}...\vec{\odot}\theta
_{m}\right\|  _{W_{1}}\\
&  \vdots\\
&  \leq\sum_{k=1}^{m}\left\|  \eta_{k}-\theta_{k}\right\|  _{W_{1}}.
\end{align*}
These two inequalities prove (\ref{gepunteAd}).
\end{proof}

With this result in hand, we can turn to the final aim of this section, namely
additivity in the C*-algebraic framework.

\subsection{C*-algebras\label{OndAfdC*ad}}

The question being studied at the moment, is how $W_{1}$ between product
channels from one compound system to another, relate to $W_{1}$ between the
channels composing the product channels. This was answered in a more abstract
form in Theorem \ref{gepunteAdEiensk} of the previous subsection, namely they
are related in a simple additive way. Now we essentially just translate this
additivity of $W_{1}$ to the C*-algebraic framework of Section \ref{AfdC*}.

Given an $n$-composite C*-system $(A_{j},B_{j},\nu_{j}:j=1,...,n)$ as in
Definition \ref{n-C*-stelsel}, let $[n]:=\{1,...,n\}$, let $\mathcal{P}_{n}$
be the collection of non-empty proper subsets of $[n]$, and define an
$m$\emph{-partition} $P$ of $[n]$, as a function $P:[m]\rightarrow$
$\mathcal{P}_{n}$ such that $P(1)\cup...\cup P(m)=[n]$ and $P(k)\cap
P(l)=\varnothing$ for $k\neq l$. Furthermore, we set
\[
A_{J}:=\bigotimes_{j\in J}A_{j}\text{ \ and \ }B_{J}:=\bigotimes_{j\in J}B_{j}%
\]
for any $J\in\mathcal{P}_{n}$, both being minimal tensor products or both
maximal tensor products. As in Section \ref{AfdC*}, either all tensor products
here are minimal, or all are maximal. For any $I,J\in\mathcal{P}_{n}$ with no
points in common, i.e., $I\cap J=\varnothing$, consider the ordered tensor
products%
\[
A_{I}\vec{\otimes}A_{J}:=A_{I\cup J}\text{ \ and \ \ }B_{I}\vec{\otimes}%
B_{J}:=B_{I\cup J}%
\]
and the corresponding ordered tensor product%
\[
\eta\vec{\otimes}\theta:A_{I}\vec{\otimes}A_{J}\rightarrow B_{I}\vec{\otimes
}B_{J}%
\]
of bounded linear maps $\eta:A_{I}\rightarrow B_{I}$ and $\theta
:A_{J}\rightarrow B_{J}$, defined as in Subsection \ref{OndAfdRedStruk}, but
via the minimal or maximal tensor product $\eta\otimes\theta$, instead of the
algebraic tensor product, and where the ordering map $\alpha_{IJ}$ is of
course extended to the completed tensor product, like $\beta_{IJ}$ is. In the
C*-algebraic case, this extension of the ordering maps is automatically
possible (see the proof of Proposition \ref{kanRed&Uitbr} below). Note that
the reduction $\eta^{J}$ of $\eta\in L(A,B)$ is defined analogously to
Subsection \ref{OndAfdGepuntSubstr}, the only difference being that $A$ and
$A_{J}$ are now completed tensor products.

The next result is the key to convert Theorem \ref{gepunteAdEiensk} to the
C*-algebraic setup.

\begin{proposition}
\label{kanRed&Uitbr}Consider the setup of this subsection. For any
$J\in\mathcal{P}_{n}$, it follows that
\[
K(A,B)^{J}=K(A_{J},B_{J}).
\]
In addition, for any $D\in K(A_{I},B_{I})$ and $E\in K(A_{J},B_{J})$ with
$I,J\in\mathcal{P}_{n}$ such that $I\cap J=\varnothing$, one has
\[
D\vec{\otimes}E\in K(A_{I}\vec{\otimes}A_{J},B_{I}\vec{\otimes}B_{J}).
\]
\end{proposition}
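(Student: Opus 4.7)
The plan is to treat the two assertions separately, both of which reduce to standard properties of u.c.p. maps between C*-algebras, together with the fact that the ordering maps on tensor products are *-isomorphisms.

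For the first assertion $K(A,B)^J = K(A_J, B_J)$, I would first prove the inclusion $\subset$. Given $E \in K(A,B)$, the restriction $E|_{A_J}: A_J \to B$ is obtained by composing $E$ with the map $\otimes_{j \in J} a_j \mapsto \otimes_{j \in [n]} a_j$ (setting $a_j = u_{A_j} = 1_{A_j}$ for $j \in J'$), which is the canonical unital *-embedding $A_J \hookrightarrow A$ and is therefore u.c.p. Composing the u.c.p. map $E|_{A_J}$ with $\bigotimes_{j=1}^{n} \varphi_j$, where $\varphi_j = \text{id}_{B_j}$ for $j \in J$ and $\varphi_j = \nu_j$ for $j \in J'$, gives another u.c.p. map (tensor products of states with identity maps are u.c.p. in either the minimal or the maximal tensor product), hence $E^J \in K(A_J, B_J)$.

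For the reverse inclusion, given $F \in K(A_J, B_J)$ I would exhibit an explicit lift. Pick any channel $H \in K(A_{J'}, B_{J'})$ — for instance, fix a state $\omega$ on $A_{J'}$ and set $H(a) = \omega(a) 1_{B_{J'}}$, which is u.c.p. Then set $E := F \vec\otimes H \in L(A,B)$, which is u.c.p. by the second assertion (established below). A direct computation on elementary tensors, using $H(u_{A_{J'}}) = u_{B_{J'}}$ by unitality and $(\bigotimes_{j \in J'}\nu_j)(u_{B_{J'}}) = 1$, shows $E^J = F$, completing the equality.

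For the second assertion, recall $D \vec\otimes E := \beta_{IJ} \circ (D \otimes E) \circ \alpha_{IJ}^{-1}$. The maps $\alpha_{IJ}$ and $\beta_{IJ}$ are obtained by composing transpositions of adjacent tensor factors; each such transposition is a *-isomorphism of the algebraic tensor product of C*-algebras and so, by the universal properties of the minimal and maximal C*-tensor product norms, extends uniquely to a *-isomorphism of the completed tensor products (hence automatically continuous and u.c.p., and in particular $\alpha_{IJ}^{-1}$ exists as a unital *-isomorphism as well). Since $D$ and $E$ are u.c.p., so is $D \otimes E$ in either the minimal or maximal tensor product (this is standard; see e.g. the background cited from \cite{Bl}). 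A composition of u.c.p. maps is u.c.p., so $D \vec\otimes E \in K(A_I \vec\otimes A_J, B_I \vec\otimes B_J)$, as required.

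The only conceptual point that requires a little care is the extension of the ordering maps to completed tensor products and the verification that tensor products of u.c.p. maps remain u.c.p. in both the minimal and maximal norms; everything else is a bookkeeping exercise. Note that the second assertion is used in proving the surjectivity part of the first assertion, but this is not circular because that surjectivity argument invokes the second assertion only with $I = J$ replaced by the partition $(J, J')$, which is logically independent of the first assertion.
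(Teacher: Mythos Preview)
Your proposal is correct and follows essentially the same approach as the paper: the ordering maps $\alpha_{IJ},\beta_{IJ}$ are $\ast$-isomorphisms (hence u.c.p.), tensor products and compositions of u.c.p.\ maps are u.c.p., and surjectivity of the reduction is obtained by extending a given $F\in K(A_J,B_J)$ by an arbitrary channel on the complement. The only differences are presentational---the paper establishes the second assertion first and then invokes it, whereas you postpone it and flag the non-circularity explicitly, and you spell out a concrete choice of $H$---but the substance is identical.
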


\begin{proof}
Starting with the latter statement, note that in this C*-algebraic setup the
ordering maps from Subsection \ref{OndAfdRedStruk} are $\ast$-isomorphisms
$\alpha_{IJ}:A_{I}\otimes A_{J}\rightarrow A_{I}\vec{\otimes}A_{J}$ and
$\beta_{IJ}:B_{I}\otimes B_{J}\rightarrow B_{I}\vec{\otimes}B_{J}$ (and
therefore extended to the completions and u.c.p.) from the outset, since they
are compositions of the transposition maps mentioned in Subsection
\ref{OndAfdRedStruk}, which are indeed $\ast$-isomorphisms (see for example
\cite[II.9.2.6]{Bl}). Hence $D\vec{\otimes}E:=\beta_{IJ}\circ(D\otimes
E)\circ\alpha_{IJ}^{-1}$ is a composition of u.c.p. maps, since $D\vec
{\otimes}E$ is a channel. Consequently $D\vec{\otimes}E\in K(A_{I}\vec
{\otimes}A_{J},B_{I}\vec{\otimes}B_{J})$, as required.

In particular, for $I=J^{\prime}$, it follows that $E=(D\vec{\otimes}E)^{J}\in
K(A,B)^{J}$. Hence $K(A_{J},B_{J})\subset K(A,B)^{J}$. Conversely, since the
reduction of a channel is again a channel, we have $K(A,B)^{J}\subset
K(A_{J},B_{J})$.
\end{proof}

Now we can answer the above mentioned question as follows, along with the
``reductive superadditivity'' of $\left\|  \cdot\right\|  _{W_{1}}$.

\begin{theorem}
\label{C*AdEiensk}Let $(A_{j},B_{j},\nu_{j}:j=1,...,n)$ be an $n$-composite
C*-system and $P$ any $m$-partition of $[n]$. Then in both the minimal and the
maximal tensor product setup,
\[
W_{1}(D_{1}\vec{\otimes}...\vec{\otimes}D_{m},E_{1}\vec{\otimes}%
...\vec{\otimes}E_{m})=\sum_{k=1}^{m}W_{1}(D_{k},E_{k})
\]
for all $D_{k},E_{k}\in K(A_{P(k)},B_{P(k)})$, for $k=1,...,m$. In addition,
\[
\left\|  \lambda\right\|  _{W_{1}}\geq\sum_{k=1}^{m}\left\|  \lambda
^{P(k)}\right\|  _{W_{1}}%
\]
for all $\lambda\in\mathcal{O}_{A,B}$ as defined in Section \ref{AfdC*}.
\end{theorem}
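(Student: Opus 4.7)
The overall plan is to reduce Theorem \ref{C*AdEiensk} to Theorem \ref{gepunteAdEiensk} via the same algebraic-to-C*-algebraic conversion used in the proof of Theorem \ref{C*-W-norm}. Specifically, I would take $\mathcal{L} := K(A,B)|_{\odot}$, with $|_{\odot}$ the restriction to the algebraic tensor product $A_{\odot} := A_1 \odot \cdots \odot A_n$, and work with the algebraic $(W_1,n)$ structure $(A_{\odot}, B, \nu, \mathcal{L})$ already shown there to be well-defined. The one-to-one correspondence between $K(A,B)$ and $K(A,B)|_{\odot}$ (via continuity of channels) means that all relevant objects — $W_1$-distances, norms of elements of $\mathcal{O}_{A,B}$, and ordered tensor products — transfer faithfully between the two settings.

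The key verification is that $(A_{\odot}, B, \nu, \mathcal{L})$ is a \emph{reducible} pointed $W_1$ structure in the sense of Definition \ref{redGepuntStruk}. For the ordering maps, $\beta_{IJ}: B_I \otimes B_J \to B_I \vec{\otimes} B_J$ is a $*$-isomorphism (it is a composition of the flip $*$-isomorphisms; cf.\ \cite[II.9.2.6]{Bl}) in both the minimal and maximal tensor product setups, hence automatically continuous. For condition (\ref{uitbrVanL^J}), given $\eta,\theta \in K(A,B)$, Proposition \ref{kanRed&Uitbr} shows that $\eta^J \in K(A_J,B_J)$ and $\theta^{J'} \in K(A_{J'},B_{J'})$, and that $\eta^J \vec{\otimes} \theta^{J'} \in K(A_I \vec{\otimes} A_J, B_I \vec{\otimes} B_J) = K(A,B)$. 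Restricting to $A_{\odot}$ then places $\eta^J \vec{\odot} \theta^{J'}$ back in $\mathcal{L}$, as required.

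With reducibility established, Theorem \ref{gepunteAdEiensk} directly gives both the superadditivity inequality $\|\lambda\|_{W_1} \geq \sum_{k=1}^m \|\lambda^{P(k)}\|_{W_1}$ for $\lambda \in \mathcal{O}_{\mathcal{L}}$ and the additivity identity $W_1(\eta_1 \vec{\odot} \cdots \vec{\odot} \eta_m, \theta_1 \vec{\odot} \cdots \vec{\odot} \theta_m) = \sum_{k=1}^m W_1(\eta_k, \theta_k)$ for $\eta_k,\theta_k \in \mathcal{L}^{P(k)}$. Translating back, the first inequality yields the stated superadditivity on $\mathcal{O}_{A,B}$, and the second yields the claimed additivity for ordered tensor products of channels, once we identify $(D|_{\odot})^{P(k)}$ with (the restriction of) the C*-algebraic reduction $D^{P(k)}$ and $(D_1 \vec{\otimes} \cdots \vec{\otimes} D_m)|_{\odot} = D_1|_{\odot} \vec{\odot} \cdots \vec{\odot} D_m|_{\odot}$.

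The main obstacle — really just a bookkeeping check rather than a deep difficulty — is verifying that the C*-algebraic and algebraic versions of reduction and ordered tensor product are compatible under restriction. This amounts to showing that the unique continuous extension $\eta^J \vec{\otimes} \theta^{J'}$ of $\eta^J \vec{\odot} \theta^{J'}|_{\odot}$ exists and equals the channel in $K(A,B)$ constructed in Proposition \ref{kanRed&Uitbr}, and that the reductions commute with the restriction $|_{\odot}$. Both follow from continuity of the maps $\nu_j$, identity maps, and ordering isomorphisms, together with the density of the algebraic tensor product in its C*-completion.
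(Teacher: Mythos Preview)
Your proposal is correct and follows essentially the same approach as the paper: reduce to Theorem \ref{gepunteAdEiensk} by verifying that $(A_{\odot}, B, \nu, K(A,B)|_{\odot})$ is a reducible pointed $(W_1,n)$ structure (via the $*$-isomorphism property of the ordering maps and Proposition \ref{kanRed&Uitbr}), then transfer back to the C*-setting through the one-to-one correspondence between $K(A,B)$ and $K(A,B)|_{\odot}$. The paper makes slightly more explicit the identification $\mathcal{L}^{P(k)} = K(A_{P(k)}, B_{P(k)})|_{\odot}$ (its condition (\ref{KanStelRed})), which you correctly subsume into your final compatibility paragraph.
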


\begin{proof}
If the algebraic $(W_{1},n)$ structure $(A_{\odot},B,\nu,K(A,B)|_{\odot})$,
obtained as in the proof of Theorem \ref{C*-W-norm} from $(A_{j},B_{j},\nu
_{j}:j=1,...,n)$, is a reducible pointed $(W_{1},n)$ structure, and%
\begin{equation}
\left(  K(A,B)|_{\odot}\right)  ^{J}=K(A_{J},B_{J})|_{\odot},
\label{KanStelRed}%
\end{equation}
for $J\in\mathcal{P}_{n}$, then the theorem follows directly from Theorem
\ref{gepunteAdEiensk}. This is because of the continuity of all the maps
involved and the resulting one-to-one correspondence between $K(A,B)|_{\odot}$
and $K(A,B)$ explained in the proof of Theorem \ref{C*-W-norm}, and similarly
for $K(A_{J},B_{J})|_{\odot}$ and $K(A_{J},B_{J})$, which allows us to
translate directly between the algebraic and completed tensor products.
Condition (\ref{KanStelRed}) is needed to ensure that for $\mathcal{L}%
=K(A,B)|_{\odot}$ in Theorem \ref{gepunteAdEiensk}, we have $\mathcal{L}%
^{P(k)}=K(A_{P(k)},B_{P(k)})|_{\odot}$. Note that (\ref{KanStelRed}) indeed
holds because of Proposition \ref{kanRed&Uitbr} and
$\left(  K(A,B)|_{\odot}\right)  ^{J}=K(A,B)^{J}|_{\odot}$, the latter being
true since the only difference between the reductions on the two sides is that
they respectively involve restrictions to the algebraic and a completed tensor
product of $A_{j}$'s. We are simply left to verify reducibility.

As mentioned in the proof of Proposition \ref{kanRed&Uitbr}, the $\beta_{IJ}%
$'s are $\ast$-isomorphisms and therefore continuous, verifying the one
condition for reducibility in Definition \ref{redGepuntStruk}. The other
condition is guaranteed by the special case of Proposition \ref{kanRed&Uitbr}
with $I\cup J=[n]$.
\end{proof}

Recall that the special case in finite dimensions of the first part of this
theorem was already mentioned in Section \ref{AfdBuitL}, in a special form
where the ordering of the tensor products was unnecessary.

As an immediate consequence of this theorem, we obtain the following stability
result for $W_{1}$.

\begin{corollary}
\label{stab}In Theorem \ref{C*AdEiensk}, let $P$ a $2$-partition of $[n]$.
Then
\[
W_{1}(D_{1}\vec{\otimes}F_{2},E_{1}\vec{\otimes}F_{2})=W_{1}(D_{1},E_{1})
\]
for all $D_{1},E_{1}\in K(A_{P(1)},B_{P(1)})$ and any $F_{2}\in K(A_{P(2)}%
,B_{P(2)})$. Similarly,
\[
W_{1}(F_{1}\vec{\otimes}D_{2},F_{1}\vec{\otimes}E_{2})=W_{1}(D_{2},E_{2})
\]
for all $D_{2},E_{2}\in K(A_{P(2)},B_{P(2)})$ and any $F_{1}\in K(A_{P(1)}%
,B_{P(1)})$.
\end{corollary}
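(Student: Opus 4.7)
The plan is to derive this directly from the additivity statement in Theorem \ref{C*AdEiensk}. Taking $m = 2$ and applying Theorem \ref{C*AdEiensk} to the $2$-partition $P$, one has
\[
W_{1}(D_{1}\vec{\otimes}D_{2},E_{1}\vec{\otimes}E_{2}) = W_{1}(D_{1},E_{1}) + W_{1}(D_{2},E_{2})
\]
for all $D_{k},E_{k}\in K(A_{P(k)},B_{P(k)})$.

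For the first identity, I would specialize by choosing $D_{2}=E_{2}=F_{2}$ in this equation. Since $W_{1}$ is a metric on $K(A_{P(2)},B_{P(2)})$ (by the corollary to Theorem \ref{C*-W-norm}), we have $W_{1}(F_{2},F_{2})=0$, and the additivity formula then collapses to
\[
W_{1}(D_{1}\vec{\otimes}F_{2},E_{1}\vec{\otimes}F_{2}) = W_{1}(D_{1},E_{1}) + 0 = W_{1}(D_{1},E_{1}),
\]
which is the desired equality. The second identity is handled in exactly the same way, by taking $D_{1}=E_{1}=F_{1}$ in Theorem \ref{C*AdEiensk} and again using $W_{1}(F_{1},F_{1})=0$.

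There is no genuine obstacle here: once Theorem \ref{C*AdEiensk} is in place, the only ingredient needed is that the $W_{1}$ distance of a channel to itself vanishes, which is immediate from $\left\|\cdot\right\|_{W_{1}}$ being a norm (and hence $\left\|0\right\|_{W_{1}}=0$). All the hard work, namely verifying reducibility of the underlying pointed $W_{1}$ structure and passing between algebraic and completed tensor products, has already been carried out in the proof of Theorem \ref{C*AdEiensk}.
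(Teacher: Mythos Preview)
Your proof is correct and follows exactly the same approach as the paper: apply Theorem \ref{C*AdEiensk} with $m=2$ and specialize one pair of channels to coincide, using $W_{1}(F_{k},F_{k})=0$. The paper's proof is simply a one-line version of what you wrote.
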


\begin{proof}
This is simply because $W_{1}(F_{2},F_{2})=0$ in the first case. Similarly for
the second.
\end{proof}

Stability for other distances between channels have been presented and
discussed in for example \cite{AKitN} and \cite{GLN}. The latter in particular
emphasizes the utility of stability. These references only treated the case
where the $F_{k}$'s in the corollary above were identity maps.

Of course, stability for pointed $W_{1}$ structures similarly follow from
Theorem \ref{gepunteAdEiensk}.

\section{Outlook\label{AfdEinde}}

In this paper our focus has been the mathematical development of a Wasserstein
distance of order 1 between channels from one composite system to another,
denoted by $W_{1}$. We have not yet investigated any relations or comparisons
of $W_{1}$ with the diamond norm (mentioned in the introduction) or other
distances between channels. See \cite{GLN} for a critical assessment of
various such distances. A logical first step is to do this in finite
dimensions on matrix algebras $M_{d}$, where one has a simple canonical trace
which plays an important role in the diamond (aka, completely bounded trace)
norm. Indeed, \cite{DMTL} extensively explored the relation between $W_{1}$
for states and the trace norm, including the characterization of $W_{1}$ for
the case of $n=1$, i.e., for single systems rather than composed systems. This
can analogously be explored for the case of channels.

We have also not yet explored applications of this distance. Applications of
Wasserstein distance of order 1 between states in the finite dimensional case
developed in \cite{DMTL}, have already been treated and proposed in
\cite{DMTL} itself, as well as in a number of papers \cite{DMRS, DR, HRS, K,
L, RS} in various contexts. We expect that the approach of this paper to the
case of channels should be similarly applicable. In addition, our abstract
approach has the potential to allow for applications in other contexts than
quantum channels and for further mathematical development.

At the end of Subsection \ref{OndAfdC*ad} we pointed out that $W_{1}$
satisfies stability. However, there are other properties that one may also
want a distance between channels to satisfy, depending on the application. See
for example \cite{GLN}. Since $W_{1}$ is a metric, a number of basic
properties are already satisfied. One property that we have not discussed in
this paper is chaining or bounds on the overall difference between composed
channels; see \cite{GLN} and \cite[Subsection 5.4]{AKitN}. Whether or not this
or similar and other properties hold for $W_{1}$, certainly warrants further investigation.

\end{document}